\numberwithin{equation}{section}
\newcommand{\R}{\mathbb{R}}
\newcommand{\T}{\mathbb{T}}
\newcommand{\N}{\mathbb{N}}
\newtheorem{theorem}{Theorem}[section]
\newtheorem{lemma}[theorem]{Lemma}
\newtheorem{proposition}[theorem]{Proposition}
\newtheorem{corollary}[theorem]{Corollary}
\theoremstyle{definition}
\newtheorem{definition}[theorem]{Definition}
\newcommand{\abs}[1]{\left| #1\right|}
\newcommand{\norm}[1]{\left\|#1\right\|}
\newcommand{\paren}[1]{\left(#1\right)}
\newcommand{\bracket}[1]{\left[#1\right]}
\newcommand{\set}[1]{\left\{#1\right\}}
\newcommand{\inner}[2]{\left\langle #1,#2\right\rangle}
\newcommand{\tr}{\mathrm{tr}}
\title[Generalized DNLS Soliton Stability]{Stability of Solitary Waves for a
  Generalized Derivative Nonlinear Schr\"odinger Equation}
\author[Liu]{Xiao Liu}
\address{Department of Mathematics, University of Toronto}
\author[Simpson]{Gideon Simpson}
\address{School of Mathematics, University of Minnesota}
 \thanks{G.S. was supported by NSERC.  His contribution to this work
   was completed under the NSF PIRE grant OISE-0967140 and the DOE grant DE-SC0002085.}
\author[Sulem]{Catherine Sulem}
\address{Department of Mathematics, University of Toronto}
 \thanks{C.S. is  partially  supported by NSERC through grant number 46179-11.}
\subjclass{35A15, 35B35, 35Q55}
\keywords{Derivative Nonlinear Schr\"odinger Equation, Solitary waves,  Orbital Stability/Instability}
\date{\today}
\begin{document}

\maketitle

\begin{abstract}
  We consider a derivative nonlinear Schr\"odinger equation with a
  general nonlinearity. This equation has a two parameter family of
  solitary wave solutions. We prove orbital stability/instability
  results that depend on the strength of the nonlinearity and, in some
  instances, their velocity. We illustrate these results with
  numerical simulations.
\end{abstract}

\section{Introduction}
\label{s:intro}

The derivative nonlinear Schr\"{o}dinger (DNLS) equation
\begin{equation}
  \label{eqn:DNLS}
  i \partial_t u+\partial_x^2 u+i(|u|^{2} u)_x=0,
\end{equation}
is a nonlinear dispersive wave equation that appears in the long
wavelength approximation of Alfv{\'e}n waves propagating in a plasma
\cite{MJOLHUS1976,MIO1976,Passot1993}. Applying the gauge
transformation
\begin{equation}
  \label{e:gauge}
  \psi=u(x)\exp i \left\{\frac{1}{2}\int_{-\infty}^x|u(\eta)|^2d\eta\right\},
\end{equation}
equation \eqref{eqn:DNLS} has the form
\begin{equation}
  \label{eqn:DNLS2}
  i \partial_t \psi+\partial_x^2 \psi+i|\psi|^{2}\psi_x=0,\quad x\in\mathbb{R}.
\end{equation}
This equation has a Hamiltonian structure and can be written as
\[
\frac{\partial \psi}{\partial t} = -i E'(\psi)
\]
where the Hamiltonian is
\[
E\equiv\frac{1}{2}\int_{-\infty}^\infty|\psi_x|^2
dx+\frac{1}{4}\Im\int_{-\infty}^\infty |\psi|^{2}\bar{\psi}\psi_x dx.
\]
Equation \eqref{eqn:DNLS2} and some of its generalizations, also
appear in the modeling of ultrashort optical pulses \cite{Agrawal2006,
  Moses2007}.  Furthermore, the DNLS equation has the remarkable
property of being integrable by inverse scattering \cite{Kaup1978}.
It admits a two-parameter family of solitary wave solutions of the
form:
\begin{equation}
  \label{eqn:soln}
  u_{\omega, c}(x,t) = \varphi_{\omega,c}(x-ct)\exp i \left\{ \omega t
    +\frac{c}{2}(x-ct)- \frac{3}{4}\int^{x-ct}_{-\infty}\varphi_{\omega,c}^{2}(\eta)d\eta\right\},
\end{equation}
where $\omega > c^2/4$ and
\begin{equation}
  \label{eqn:phi}
  \varphi_{\omega,c}(y) = \sqrt{ \frac{(4\omega-c^2)}{\sqrt{\omega}(\cosh(\sigma\sqrt{4\omega-c^2}y)-\frac{c}{2\sqrt{\omega}})}}
\end{equation}
is the positive solution to
\begin{equation}
  \label{eqn:stationary_soln}
  - \partial_{y}^2\varphi_{\omega,c}  + (\omega -
  \frac{c^2}{4})\varphi_{\omega,c} +\frac{c}{2}|\varphi_{\omega,c}|^{2}\varphi_{\omega,c} - \frac{3}{16}|\varphi_{\omega,c}|^{4}\varphi_{\omega,c}  = 0.
\end{equation}
Guo and Wu \cite{Guo1995} showed that these solitary waves are {\it
  orbitally stable} if $c<0$ and $c^2<4\omega$. Colin and Ohta
\cite{Colin2006} subsequently extended the result, proving orbital
stability for all $c, c^2<4\omega$.

\begin{definition} Let $u_{\omega,c}$ be the solitary wave solution of
  \eqref{eqn:DNLS}. The solitary wave $u_{\omega,c}$ is
  {\emph{orbitally stable}} if, for all $\epsilon>0$, there exists
  $\delta>0$ such that if $\|u_0-u_{\omega,c}\|_{H^1}<\delta$, then
  the solution $u(t)$ of \eqref{eqn:DNLS} with initial data
  $u(0)=u_0$, exists globally in time and satisfies
  \[
  \sup_{t\geq0}\inf_{(\theta,y)\in\T \times
    \R}\|u(t)-e^{i\theta}u_{\omega,c}(t,\cdot-y)\|_{H^1}<\epsilon.
  \]
  Otherwise, $u_{\omega,c}$ is said to be \emph{orbitally unstable}.
\end{definition}

In an effort to understand the structural properties of DNLS, we study
an extension of \eqref{eqn:DNLS2} with general power nonlinearity
($\sigma>0$).
\begin{equation}
  \label{eqn:gDNLS}
  i \partial_t\psi+\partial_x^2\psi+i|\psi|^{2\sigma}\psi_x=0.
\end{equation}
Equation \eqref{eqn:gDNLS} also admits a two-parameter family of
solitary wave solutions,
\begin{equation}
  \label{eqn:soln}
  \psi_{\omega,c}(x,t) = \varphi_{\omega,c}(x-ct)\exp i\left\{ \omega t
    +\frac{c}{2}(x-ct)-\frac{1}{2\sigma +
      2}\int^{x-ct}_{-\infty}\varphi_{\omega,c}^{2\sigma}(\eta)d\eta\right\},
\end{equation}
where $\omega> c^2/4$ and
\begin{equation}
  \label{eqn:def_phi}
  \varphi_{\omega,c}(y)^{2\sigma} = \frac{(\sigma+1)(4\omega-c^2)}{2\sqrt{\omega}(\cosh(\sigma\sqrt{4\omega-c^2}y)-\frac{c}{2\sqrt{\omega}})}
\end{equation}
is the positive solution of
\begin{equation}
  \label{eqn:stationary_soln}
  - \partial_{y}^2\varphi_{\omega,c}  + (\omega -
  \frac{c^2}{4})\varphi_{\omega,c} +\frac{c}{2}|\varphi_{\omega,c}|^{2\sigma}\varphi_{\omega,c} - \frac{2\sigma + 1}{(2\sigma +
    2)^2}|\varphi_{\omega,c}|^{4\sigma}\varphi_{\omega,c}  = 0.
\end{equation}
It is convenient to define
\begin{equation}
  \label{eqn:phi&varphi}
  \phi_{\omega,c}(y)=\varphi_{\omega,c}(y) e^{i\theta_{\omega,c}(y)},
\end{equation}
with the traveling phase
\begin{equation}
  \label{e:traveling_phase}
  \theta_{\omega,c} (y)\equiv \frac{c}{2}y-\frac{1}{2\sigma+2}\int^{y}_{-\infty}\varphi_{\omega,c}^{2\sigma}(\eta)d\eta.
\end{equation}
Clearly,
\begin{equation}
  \label{eqn:psi&phi}
  \psi_{\omega, c}(x,t)=e^{i\omega t}\phi_{\omega,c}(x-ct),
\end{equation}
and the complex function $\phi_{\omega,c}(y)$ satisfies
\begin{equation}
  \label{eqn:phi}
  -\partial_y^2\phi_{\omega,c} +\omega\phi_{\omega,c} +ic\partial_x\phi_{\omega,c} - i|\phi_{\omega,c}|^{2\sigma}\partial_y\phi_{\omega,c}  =0, y\in \mathbb{R}.
\end{equation}

Provided there is no ambiguity, we write $\phi,\varphi$ for
$\phi_{\omega,c}, \varphi_{\omega,c}$ respectively.  Furthermore, we
only consider {\it admissible} values of $(\omega, c)$ satisfying the
conditions $\omega > \frac{c^2}{4}, \quad c \in \R.$
\subsection{Main Results}

We investigate the stability of solitary wave solutions
$\psi_{\omega,c}$ to the gDNLS equation.  This is determined by both
the value of $\sigma$ and the choice of the soliton parameters, $c$
and $\omega$. These results are {\it conditional} in the sense that
for $\sigma\neq 1$, we lack a suitable local well-posedness theory.
Throughout our study, we assume that given $\sigma >0$, and $\psi_0
\in H^1(\R)$, there exists a weak solution $ \psi \in C\paren{[0, T);
  H^1(\R)}$ of \eqref{eqn:gDNLS}, for $ T >0$, which satisfies
\begin{equation}
  \label{e:weak}
  \frac{d}{dt}\inner{\psi(\cdot,t)}{f} = \inner{E'(\psi(\cdot,t))}{-Jf}
\end{equation}
for appropriate test functions $f$. $E$ is the energy functional, and
$J$ is the symplectic operator. These are defined in Section
\ref{s:setup}.

Subject to this assumption, we have the following results:
\begin{theorem}
  \label{thm:main}
  For any admissible $(\omega,c)$ and $\sigma \geq 2$, the solitary
  wave solution $\psi_{\omega,c}(x,t)$ of \eqref{eqn:gDNLS} is
  orbitally unstable.
\end{theorem}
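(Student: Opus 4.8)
The plan is to carry out the Grillakis–Shatah–Strauss (GSS) analysis adapted to the two-parameter symmetry group (phase rotation and spatial translation) of \eqref{eqn:gDNLS}, following the variational framework used by Colin and Ohta for $\sigma=1$. Let $Q$ and $P$ denote the conserved charge and momentum functionals associated, via Noether's theorem, with these two symmetries, and set
\[
S_{\omega,c}(\psi) = E(\psi) + \omega\, Q(\psi) + c\, P(\psi).
\]
The profile equation \eqref{eqn:phi} is exactly the Euler--Lagrange equation $S'_{\omega,c}(\phi_{\omega,c}) = 0$, so $\phi_{\omega,c}$ is a critical point of the action. Introduce the scalar function $d(\omega,c) = S_{\omega,c}(\phi_{\omega,c})$. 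By stationarity of $\phi_{\omega,c}$, the gradient is $\nabla d(\omega,c) = \paren{Q(\phi_{\omega,c}),\, P(\phi_{\omega,c})}$, so the Hessian $d''(\omega,c)$ is the Jacobian of the map $(\omega,c)\mapsto \paren{Q(\phi_{\omega,c}), P(\phi_{\omega,c})}$. In the GSS theory, orbital stability/instability is decided by comparing the number $n(H)$ of negative eigenvalues of the linearized Hessian operator $H = S''_{\omega,c}(\phi_{\omega,c})$ with the inertia of the $2\times 2$ matrix $d''$. When $n(H) = 1$, the wave is orbitally stable precisely when $d''$ has signature $(1,1)$ (i.e.\ $\det d'' < 0$), and it is orbitally unstable whenever $d''$ is definite (i.e.\ $\det d'' > 0$), since then $n(H) - p(d'')$ is odd, where $p(d'')$ denotes the number of positive eigenvalues of $d''$. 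Thus the theorem reduces to two assertions: $n(H) = 1$, and $\det d''(\omega,c) > 0$ for all admissible $(\omega,c)$ when $\sigma \geq 2$.

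The first assertion is the spectral step. Working in the frame moving with speed $c$ and in the gauge \eqref{eqn:phi&varphi}, the operator $H$ becomes a matrix Schr\"odinger operator acting on the real and imaginary parts of the perturbation, with a first-order coupling produced by the transport term $ic\partial_y$ and the derivative nonlinearity. I would first verify that $\ker H = \mathrm{span}\set{i\phi_{\omega,c},\ \partial_y\phi_{\omega,c}}$, the two generators of the symmetry group, using nondegeneracy of the ground state. The count $n(H) = 1$ then follows from the fact that $\varphi_{\omega,c}$ is a positive, nodeless ground state: after diagonalizing the quadratic form, the single negative direction comes from the scalar part associated with $\varphi_{\omega,c}$ itself, while Sturm--Liouville theory rules out additional negative eigenvalues. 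I expect this step to require care because of the transport term, which must be removed by the gauge transformation before the standard ground-state spectral theory applies.

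The second assertion is the main obstacle and the heart of the theorem. Using the explicit profile \eqref{eqn:def_phi}, I would compute $Q(\phi_{\omega,c})$ and $P(\phi_{\omega,c})$ in closed form. The scaling symmetry of \eqref{eqn:gDNLS} lets me factor out the $\omega$-dependence: writing $s = c/(2\sqrt{\omega}) \in (-1,1)$, each of $Q$ and $P$ takes the form of a power of $\omega$ times a one-variable function of $s$, so that $\det d''$ equals a positive power of $\omega$ multiplied by an explicit scalar function $F_\sigma(s)$. The theorem then reduces to the inequality $F_\sigma(s) > 0$ for all $s \in (-1,1)$ and all $\sigma \geq 2$. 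This is where the restriction $\sigma \geq 2$ enters: for $\sigma = 1$ the analogous quantity is negative, recovering the Colin--Ohta stability, and the sign flips as $\sigma$ increases. I anticipate the difficulty to lie in establishing this inequality uniformly in both $s$ and $\sigma$; the natural route is to reduce the $\cosh$-profile integrals to Beta-type integrals via a change of variables, differentiate in $\omega$ and $c$, and rewrite $F_\sigma$ as a manifestly positive combination, aided by monotonicity in $\sigma$ and a separate treatment of the endpoints $s\to\pm1$. Once $n(H)=1$ and $\det d''>0$ are in hand, the GSS instability criterion produces an escaping solution (consistent with the standing well-posedness assumption), yielding orbital instability for every admissible $(\omega,c)$.
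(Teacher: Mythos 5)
Your proposal is correct and follows essentially the same route as the paper: the Grillakis--Shatah--Strauss framework, with $n(H_\phi)=1$ obtained by extracting the soliton's phase and applying Sturm--Liouville theory (the paper's completion-of-the-square rewriting of the quadratic form, after Guo--Wu, is the precise form of your ``diagonalization''), and $\det[d''(\omega,c)]>0$ for $\sigma\geq 2$ established by reducing the explicit $Q,P$ integrals to a one-variable function of $z=c/(2\sqrt{\omega})$, exactly as in the paper's $F(z;\sigma)$. The only minor variation is that the paper additionally computes $\tr[d''(\omega,c)]<0$ to conclude $p(d'')=0$ directly, whereas you rule out $p(d'')=2$ via the inequality $p(d'')\leq n(H_\phi)$ --- an argument the paper itself uses for the unstable branch when $\sigma\in(1,2)$, so both are valid.
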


For $\sigma$ between 1 and 2, slow solitons, those with sufficiently
low $c$, will be stable while fast right-moving solitons will be unstable:
\begin{theorem}[Numerical]
  \label{thm:main2}
  For $\sigma \in (1,2)$, there exists $z_0 = z_0(\sigma)\in (-1,1)$
  such that:
  \begin{enumerate}[(i)]
  \item the solitary wave solution $\psi_{\omega,c}(x,t)$ of
    \eqref{eqn:gDNLS} is orbitally stable for admissible $(\omega,c)$
    satisfying $ c < 2 z_0 \sqrt{\omega}$.
  \item the solitary wave solution $\psi_{\omega,c}(x,t)$ of
    \eqref{eqn:gDNLS} is orbitally unstable for admissible
    $(\omega,c)$ satisfying $ c > 2 z_0 \sqrt{\omega}$.
  \end{enumerate}
\end{theorem}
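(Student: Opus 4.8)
The plan is to place the problem inside the Grillakis--Shatah--Strauss (GSS) framework for Hamiltonian systems with a two-dimensional symmetry group, following the treatment of DNLS by Colin and Ohta \cite{Colin2006}, and then to reduce the resulting stability criterion to a sign condition on a single scalar function of $\sigma$ and a normalized velocity, which is finally verified numerically. First I would record the Hamiltonian structure: equation \eqref{eqn:gDNLS} conserves the energy $E$, the charge $Q(\psi)=\tfrac12\int|\psi|^2\,dx$, and a momentum $P$ containing a term $\int|\psi|^{2\sigma+2}\,dx$, and it is invariant under phase rotation and spatial translation. The profile $\phi_{\omega,c}$ is a critical point of the action $S_{\omega,c}=E+\omega Q+cP$, with $\omega,c$ as Lagrange multipliers for $Q,P$; this is exactly the profile equation for $\phi_{\omega,c}$. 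Setting $d(\omega,c)=S_{\omega,c}(\phi_{\omega,c})$ one has $\partial_\omega d=Q(\phi_{\omega,c})$ and $\partial_c d=P(\phi_{\omega,c})$, so the Hessian is the symmetric matrix $d''=\begin{pmatrix}\partial_\omega Q & \partial_c Q\\ \partial_\omega P & \partial_c P\end{pmatrix}$ with $\partial_c Q=\partial_\omega P$.

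Next I would carry out the spectral analysis of the linearized operator $\mathcal L_{\omega,c}=S_{\omega,c}''(\phi_{\omega,c})$, verifying the GSS hypotheses: that $\mathcal L_{\omega,c}$ has exactly one negative eigenvalue, $n(\mathcal L_{\omega,c})=1$, that its kernel is spanned precisely by the two symmetry generators $i\phi_{\omega,c}$ and $\partial_y\phi_{\omega,c}$, and that the remaining spectrum is positive and bounded away from $0$. Because of the derivative nonlinearity $\mathcal L_{\omega,c}$ is not of standard Schr\"odinger form, but after the gauge change that produced \eqref{eqn:def_phi} it reduces to a matrix Schr\"odinger operator whose negative-eigenvalue count can be read off from the fact that $\varphi_{\omega,c}$ is the positive (ground-state) solution of the stationary equation. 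With $n(\mathcal L)=1$ fixed, the GSS theory identifies stability with the signature of $d''$: the wave is orbitally stable when $d''$ is nondegenerate of signature $(1,1)$, equivalently $\det d''<0$, so that the number of positive directions matches $n(\mathcal L)=1$; and it is orbitally unstable when $d''$ is definite, $\det d''>0$, since then $n(\mathcal L)-p(d'')$ is odd, where $p(d'')$ is the number of positive eigenvalues of $d''$. Thus the entire dichotomy is governed by the sign of $\det d''=\partial_\omega Q\,\partial_c P-(\partial_c Q)^2$, with the threshold being the curve $\det d''=0$.

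To exploit the explicit profile \eqref{eqn:def_phi} I would introduce the normalized velocity $z=c/(2\sqrt\omega)\in(-1,1)$ and use the scaling $\varphi_{\omega,c}^{2\sigma}(y)=\sqrt\omega\,\Phi\paren{2\sigma\sqrt{\omega(1-z^2)}\,y;z}$ inherited from \eqref{eqn:def_phi}. This exhibits $Q$ and $P$ each as a fixed power of $\omega$ times an explicit function of $z$ alone; converting the $(\omega,c)$-derivatives into $(\omega,z)$-derivatives by the chain rule then factors $\det d''=\omega^{\alpha}F(\sigma,z)$ with $\omega^{\alpha}>0$ (indeed $\alpha=(1-2\sigma)/\sigma$), so that stability is equivalent to $F(\sigma,z)<0$, instability to $F(\sigma,z)>0$, and the threshold $z_0(\sigma)$ is a root of $F(\sigma,\cdot)$. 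Here $F$ is a fixed algebraic combination of the moment integrals $\int(\cosh s-z)^{-1/\sigma}\,ds$ and $\int(\cosh s-z)^{-1-1/\sigma}\,ds$ together with their $z$-derivatives.

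Finally, it remains to analyze $F(\sigma,\cdot)$. For $\sigma=1$ these integrals are elementary and one recovers $F<0$ on all of $(-1,1)$, i.e.\ the Colin--Ohta stability result, while for $\sigma\ge2$ one expects $F\ge0$ throughout, consistent with Theorem \ref{thm:main}. For $\sigma\in(1,2)$ the claim is that $F(\sigma,\cdot)$ is negative near $z=-1$ and positive near $z=1$ with a single transversal zero $z_0(\sigma)$, giving stability for $c<2z_0\sqrt\omega$ and instability for $c>2z_0\sqrt\omega$. The main obstacle, and the reason the theorem is stated as numerical, is precisely this last step: for non-integer $1/\sigma$ the moment integrals have no elementary closed form, so monotonicity of $F$ in $z$ and the existence and uniqueness of its sign change cannot be established analytically and must instead be obtained by numerically evaluating $F(\sigma,z)$ and its $z$-derivative over the grid $(\sigma,z)\in(1,2)\times(-1,1)$. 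A secondary difficulty is making the spectral count $n(\mathcal L)=1$ fully rigorous for $\sigma\neq1$, where the nonlinearity is only finitely smooth.
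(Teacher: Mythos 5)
Your architecture coincides with the paper's almost point for point: the GSS framework with the two-parameter group, $d''$ expressed through $\partial_\omega Q,\partial_c Q,\partial_\omega P,\partial_c P$, the reduction of $\det d''$ to a single function of $z=c/(2\sqrt\omega)$ built from the moment integrals $\alpha_n=\int_0^\infty(\cosh s-z)^{-1/\sigma-n}ds$ (your scaling exponent $\omega^{(1-2\sigma)/\sigma}$ is exactly what the paper's factorization \eqref{eqn:d&F} gives, and your $F$ is the paper's \eqref{e:Fsigz}), and the final numerical determination of the unique zero crossing $z_0(\sigma)$. Two small remarks on that part: the paper eliminates all $z$-derivatives of the moment integrals by integration-by-parts identities (Lemmas \ref{le:integration_alpha_n}--\ref{le:integration_relation}), so $F$ involves only $\alpha_0,\alpha_1$; and in the unstable regime $\det d''>0$ your assertion that ``$n-p$ is odd'' silently needs the GSS inequality $p(d'')\le n(H_\phi)=1$ to exclude $p(d'')=2$ --- the paper invokes \eqref{eqn:p(d)} explicitly to conclude $p(d'')=0$.

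The one genuine gap is the spectral count $n(H_\phi)=1$, which you propose to ``read off from the fact that $\varphi_{\omega,c}$ is the positive (ground-state) solution.'' That argument does not go through: after extracting the traveling phase $\theta$, the linearized operator is \emph{not} block-diagonal --- it has first-order off-diagonal parts $L_{12},L_{21}$ (see \eqref{e:dephased_blockops}) --- so scalar Sturm--Liouville/ground-state reasoning applies to neither the full operator nor its diagonal blocks separately, and for matrix operators of this type the negative count can genuinely exceed the count of the diagonal pieces. The paper's key device, adapted from Guo--Wu \cite{Guo1995}, is the completing-the-square identity \eqref{eqn:H_varphi},
\begin{equation*}
  \langle H_\phi u, u\rangle=\langle \widetilde{L}_{11}u_1,u_1\rangle
  +\int_{-\infty}^{\infty}\Bigl[\varphi\,(\varphi^{-1}u_2)_y+\tfrac{\sigma}{\sigma+1}\varphi^{2\sigma}u_1\Bigr]^2 dy,
\end{equation*}
which absorbs the off-diagonal terms into a nonnegative square at the price of replacing $L_{11}$ by $\widetilde{L}_{11}=L_{11}-\tfrac{\sigma^2}{(\sigma+1)^2}\varphi^{4\sigma}$ and uses $L_{22}\varphi=0$ through the substitution $\tilde u_2=\varphi^{-1}u_2$. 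Only then does scalar Sturm--Liouville theory enter: $\widetilde{L}_{11}\varphi_y=0$ with $\varphi_y$ having one zero forces exactly one negative eigenvalue of $\widetilde{L}_{11}$, and coercivity on the codimension-three subspace $P$ (orthogonality to $\chi_{11}$, $\varphi_y$, $\varphi$) plus an ODE argument showing $\ker H_\phi$ is exactly two-dimensional complete the decomposition of Theorem \ref{thm:Spectral_H}. Without this rearrangement (or an equivalent mechanism), your step (ii) is an assertion, not a proof --- and it is needed for every $\sigma>0$, not merely as the ``secondary difficulty'' you flag at the end.
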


Our last result concerns $\sigma < 1$, where all solitons are stable:
\begin{theorem}[Numerical]
  \label{thm:main3}
  For admissible $(\omega,c)$ and $0<\sigma< 1$, the solitary wave
  solution $\psi_{\omega,c}(x,t)$ of \eqref{eqn:gDNLS} is orbitally
  stable.
\end{theorem}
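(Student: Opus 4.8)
The plan is to run the Grillakis--Shatah--Strauss (GSS) machinery for a Hamiltonian system invariant under the two symmetries of \eqref{eqn:gDNLS}, the phase rotation $\psi\mapsto e^{i\theta}\psi$ and the translation $\psi(\cdot)\mapsto\psi(\cdot-y)$, whose generators produce the conserved charge $Q(\psi)=\tfrac12\int_{\R}\abs{\psi}^2\,dx$ and momentum $P(\psi)$. Writing $S_{\omega,c}=E+\omega Q+cP$, the profile $\phi_{\omega,c}$ of \eqref{eqn:phi} is a critical point, $S_{\omega,c}'(\phi_{\omega,c})=0$, so the scalar $d(\omega,c)\equiv S_{\omega,c}(\phi_{\omega,c})$ satisfies $\partial_\omega d=Q(\phi_{\omega,c})$ and $\partial_c d=P(\phi_{\omega,c})$. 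Because no local well-posedness is available for $\sigma\neq1$, each step must use only the assumed weak solutions: the dynamical half of the GSS argument (the Lyapunov functional and the modulation estimate) must be run against the a priori conservation of $E,Q,P$ along \eqref{e:weak} rather than against a strong flow.

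First I would assemble the spectral information for the self-adjoint linearized operator $\mathcal{L}_{\omega,c}=S_{\omega,c}''(\phi_{\omega,c})$ on $H^1(\R;\mathbb{C})$ viewed as a real Hilbert space. Using the explicit profile \eqref{eqn:def_phi} and the gauge factor $e^{i\theta_{\omega,c}}$ from \eqref{e:traveling_phase}, one conjugates $\mathcal{L}_{\omega,c}$ to a matrix Schr\"odinger operator whose structure is controlled by the ground state $\varphi_{\omega,c}$. The target is the standard picture: $\mathcal{L}_{\omega,c}$ has exactly one negative eigenvalue, $n(\mathcal{L})=1$, its kernel is spanned by the two symmetry directions $i\phi_{\omega,c}$ and $\partial_y\phi_{\omega,c}$, and the rest of the spectrum is positive and bounded away from $0$. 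This reduces orbital stability, via the GSS theorem, to a signature condition on the Hessian
\[
d''(\omega,c)=\begin{pmatrix}\partial_\omega Q & \partial_c Q\\[2pt] \partial_\omega P & \partial_c P\end{pmatrix},
\]
namely stability when the number of positive eigenvalues $p(d'')$ equals $n(\mathcal{L})=1$, i.e.\ $\det d''(\omega,c)<0$, and instability when $\det d''(\omega,c)>0$.

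The next step is to collapse this two-variable determinant to a one-variable sign problem by exploiting the scaling of the family. Setting $z=c/(2\sqrt{\omega})\in(-1,1)$ and rescaling $y$ by $\sqrt{4\omega-c^2}$ in \eqref{eqn:def_phi}, both invariants factor as $Q(\phi_{\omega,c})=\omega^{a}Q_0(z)$ and $P(\phi_{\omega,c})=\omega^{b}P_0(z)$, with $a=\tfrac{1-\sigma}{2\sigma}$, a companion exponent $b$, and profiles $Q_0,P_0$ expressed as integrals of $(\cosh s-z)^{-1/\sigma}$ against elementary weights. Substituting into $d''$ and using $\partial_\omega z=-z/(2\omega)$, $\partial_c z=1/(2\sqrt{\omega})$ yields
\[
\det d''(\omega,c)=\tfrac12\,\omega^{\kappa}F_\sigma(z),\qquad F_\sigma(z)=a\,Q_0(z)P_0'(z)-b\,Q_0'(z)P_0(z),
\]
for a real exponent $\kappa$. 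Since $\omega>0$ the prefactor $\omega^{\kappa}$ is positive, so stability versus instability is governed entirely by the sign of $F_\sigma(z)$ on $(-1,1)$; in particular the critical curve $c=2z_0\sqrt{\omega}$ of Theorem \ref{thm:main2} is precisely the locus $F_\sigma(z_0)=0$.

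The heart of the matter, and the main obstacle, is to show that for every $\sigma\in(0,1)$ the function $F_\sigma$ keeps the \emph{stabilizing} sign throughout $z\in(-1,1)$, in contrast to the uniform opposite sign for $\sigma\ge2$ (Theorem \ref{thm:main}) and the single sign change at $z_0$ for $\sigma\in(1,2)$ (Theorem \ref{thm:main2}). The difficulty is twofold: the integrals defining $Q_0,P_0$ are not elementary for general $\sigma$, so one cannot read off the sign directly as in the $\sigma=1$ case of Colin--Ohta; and the endpoints are singular limits of the family ($z\to1$ degenerates the hyperbolic profile toward an algebraic soliton, while $z\to-1$ flattens it), so a uniform-in-$z$ estimate must control $F_\sigma$ all the way to the boundary. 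I would attack this by differentiating the integral representations in $z$ and integrating by parts to express $F_\sigma$ as a single integral with a sign-definite kernel where possible, reducing the remainder to a short list of one-dimensional monotonicity/positivity statements verified numerically, which is why the result, like Theorem \ref{thm:main2}, is stated as a numerical theorem. Having fixed the sign of $\det d''$, I would close by feeding the signature condition back into the GSS Lyapunov construction, using only conservation of $E,Q,P$ along \eqref{e:weak}, to obtain the modulation bound in the definition of orbital stability.
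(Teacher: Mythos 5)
Your proposal is correct and, in substance, follows the paper's own argument: the Grillakis--Shatah--Strauss framework with the two-parameter action, the spectral count $n(H_\phi)=1$, the reduction of stability for $\sigma\in(0,1)$ to $\det d''(\omega,c)<0$ (so $p(d'')=1=n(H_\phi)$), and a final numerical determination of the sign of a one-variable function of $z=c/(2\sqrt{\omega})$. The one genuinely different ingredient is how you obtain the one-variable reduction: you use the scaling homogeneity $Q=\omega^{a}Q_0(z)$, $P=\omega^{b}P_0(z)$ with $a=\tfrac{1-\sigma}{2\sigma}$ and note that the cross terms in $z$ cancel, giving $\det d''=\tfrac12\,\omega^{\kappa}\left(a\,Q_0P_0'-b\,Q_0'P_0\right)$; this cancellation is algebraically correct and is slicker than the paper's route, which computes $\partial_\omega Q,\partial_c Q,\partial_\omega P,\partial_c P$ explicitly through the integral identities of Lemmas \ref{le:integration_alpha_n}--\ref{le:QcQw_PcPw} and then factors \eqref{e:detHessd} into the explicit function $F(z;\sigma)$ of \eqref{e:Fsigz}. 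What the paper's heavier computation buys is precisely that closed form: the numerics (and the rigorous sign proof for $\sigma\ge2$) are run on \eqref{e:Fsigz}, whereas your $F_\sigma=aQ_0P_0'-bQ_0'P_0$ still has to be converted into an explicit integral of $(\cosh y-z)$-type kernels before the numerical step, so the two derivations meet at the same place. Two cautions. First, your spectral step invokes a ``standard picture,'' but for gDNLS the dephased operator \eqref{e:dephased_blockops} has first-order off-diagonal entries $L_{12},L_{21}$ and does not decouple as for NLS; the paper needs the Guo--Wu completion of the square \eqref{eqn:H_varphi} to reduce $n(H_\phi)$ to a Sturm--Liouville count for the scalar operator $\widetilde{L}_{11}$, together with an ODE argument bounding the kernel dimension --- this is real work that your sketch assumes, although the facts you assert (one negative direction, kernel spanned by $i\phi$ and $\partial_y\phi$, coercivity on the complement) are exactly what Theorem \ref{thm:Spectral_H} proves. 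Second, since Theorem \ref{thm:main3} concerns each fixed admissible pair, you only need the sign of $F_\sigma$ pointwise on $(-1,1)$, so the uniform-up-to-endpoint control you worry about is unnecessary; and your instability criterion ``$\det d''>0$ implies instability'' silently uses the bound $p(d'')\le n(H_\phi)$ from \eqref{eqn:p(d)} to exclude $p(d'')=2$, though that case plays no role in the present theorem.
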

The endpoint, $\sigma =1$, corresponds to the cubic case, which has
already been studied in \cite{Colin2006,Guo1995}.

Theorems \ref{thm:main2} and \ref{thm:main3} are fully rigorous up to
the determination of the sign of a function of one variable, which is
parametrized by $\sigma$.  The number $z_0$ in Theorem
\ref{thm:main2} corresponds to a zero crossing.  This function,
defined below by \eqref{e:Fsigz}, includes improper integrals of
transcendental functions.

Theorem \ref{thm:main2} is particularly noteworthy for distinguishing
gNLS from the focusing nonlinear Schr\"odinger equation (NLS),
\begin{equation}
  \label{e:NLS}
  i \psi_t  + \Delta \psi + \abs{\psi}^{2\sigma} \psi = 0,\quad
  \psi:\R^{d+1} \to \mathbb{C}.
\end{equation}
Equation \eqref{eqn:gDNLS} is invariant under the scaling
transformation $ \psi_{\lambda}(x,t)=\lambda^{\frac{1}{2\sigma}}
\psi\left(\lambda x,\lambda^{2} t\right).$ This implies that its
critical Sobolev exponent is $ s_c = \frac{1}{2}- \frac{1}{2\sigma}.$
Hence, it is $L^2$-critical for $\sigma=1$, and it is
$L^2$-supercritical, energy subcritical for $\sigma >1$.  While NLS
only admits stable solitons in the $L^2$-subcritical regime, gDNLS
admits stable solitons not only in the critical regime, but also in
the supercritical one.

\subsection{Outline}

Our results are proven using the abstract functional analysis
framework of Grillakis, Shatah and Strauss,
\cite{Grillakis1987,Grillakis1990}; see
\cite{Weinstein1985,Weinstein1986a} for related results and
\cite{Sulem1999} for a survey.  The test for stability involves two
parts: (i) Counting the number of negative eigenvalues of the
linearized evolution operator $H_\phi$ near the solitary solution of
\eqref{eqn:gDNLS}, denoted $n(H_\phi)$; (ii) Counting the number of
positive eigenvalues of the Hessian of the scalar function
$d(\omega,c)$ built out of the action functional evaluated at the
soliton, denoted $p(d'')$.  We give explicit characterizations of
$H_\phi$ and $d(\omega,c)$ in Section \ref{s:setup}.  We then apply:
\begin{theorem}[Grillakis et al. \cite{Grillakis1987,Grillakis1990}]
  \label{thm:GSS}
  \begin{equation}
    \label{eqn:p(d)}
    p(d'')\leq n(H_\phi)
  \end{equation}
  Furthermore, under the condition that $d$ is non-degenerate at
  $(\omega,c)$:
  \begin{enumerate}[(i)]
  \item If $p(d'') = n(H_\phi)$, the solitary wave is orbitally
    stable;
  \item If $n(H_\phi)-p(d'')$ is odd, the solitary wave is orbitally
    unstable.
  \end{enumerate}
\end{theorem}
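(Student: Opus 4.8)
The plan is to reconstruct the abstract Hamiltonian argument of Grillakis, Shatah and Strauss specialized to the two-parameter symmetry group generated by phase rotation and translation, rather than cite it as a black box. First I would record the variational characterization of the soliton: writing the action $S_{\omega,c} = E - \omega Q - c P$, where $Q$ and $P$ are the charge and momentum conserved by the two symmetries (defined in Section \ref{s:setup}), the profile $\phi_{\omega,c}$ is a critical point, $S'_{\omega,c}(\phi)=0$, and $H_\phi = S''_{\omega,c}(\phi)$ is the linearized operator. Differentiating the critical-point equation in the parameters yields the fundamental identities $H_\phi\,\partial_\omega\phi = Q'(\phi)$ and $H_\phi\,\partial_c\phi = P'(\phi)$, while differentiating $d(\omega,c)=S_{\omega,c}(\phi_{\omega,c})$ gives $\partial_\omega d = -Q(\phi)$, $\partial_c d = -P(\phi)$, and hence the matrix identity $d''_{ij} = -\inner{H_\phi\,\partial_{\mu_i}\phi}{\partial_{\mu_j}\phi}$ with $(\mu_1,\mu_2)=(\omega,c)$. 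This identity is the bridge between the finite-dimensional Hessian $d''$ and the infinite-dimensional form $H_\phi$.

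Next I would carry out the spectral decomposition $X = N \oplus Z \oplus P_+$, where $N$ is the finite-dimensional negative eigenspace of $H_\phi$ of dimension $n(H_\phi)$, $Z=\ker H_\phi$ is spanned by the symmetry generators $i\phi$ and $\partial_y\phi$, and $H_\phi$ is bounded below by a positive constant on $P_+$; producing the spectral gap above $0$ is where the explicit form \eqref{eqn:def_phi} of $\phi$ and the detailed structure of $H_\phi$ enter. The inequality $p(d'')\le n(H_\phi)$ then follows from linear algebra: on the span $V=\mathrm{span}\set{\partial_\omega\phi,\partial_c\phi}$ the quadratic form $\inner{H_\phi\,\cdot}{\cdot}$ has matrix $-d''$, so each positive eigenvalue of $d''$ produces a direction along which $H_\phi$ is negative, and every such direction must be absorbed by $N$.

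The heart of the argument is the counting lemma asserting that the Morse index of $H_\phi$ restricted to the constraint space $\set{v:\inner{Q'(\phi)}{v}=\inner{P'(\phi)}{v}=0}$, modulo $Z$, equals exactly $n(H_\phi)-p(d'')$. This is proved by combining the spectral splitting with the matrix identity for $d''$: one decomposes a candidate negative direction into a piece in $N$ plus a correction solving $H_\phi w=$ (constraint data), and tracks how the signature of $d''$ controls the number of surviving negative directions. Granting this lemma, the two conclusions follow from the standard Lyapunov analysis. When $p(d'')=n(H_\phi)$ the restricted index is zero, so $H_\phi$ is coercive on the constraint manifold transverse to the group orbit; combining this coercivity with conservation of $E$, $Q$, $P$ and a modulation argument that projects the solution onto the orbit gives orbital stability. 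When $n(H_\phi)-p(d'')$ is odd the restricted index is odd, hence at least one, and one builds a monotone functional along the flow whose unbounded growth is forced by the odd, non-cancelling unstable direction, yielding orbital instability.

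The step I expect to be the main obstacle is the instability half of the counting argument together with its dynamical consequence. Establishing $p(d'')\le n(H_\phi)$ and the stability case are comparatively soft once coercivity is in hand, but extracting genuine orbital instability from an odd restricted Morse index requires constructing the growing direction and the associated Lyapunov functional and then controlling it along the flow --- a step that relies on the conditional weak-solution framework \eqref{e:weak}, since for $\sigma\neq 1$ we lack a full local well-posedness theory. The parity hypothesis is exactly what prevents the unstable directions from pairing off, and verifying that this parity is preserved under the regularization needed to make the monotonicity argument rigorous is the delicate point.
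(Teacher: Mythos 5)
You are proving a statement that this paper deliberately does not prove: Theorem \ref{thm:GSS} is quoted as a black box from \cite{Grillakis1987,Grillakis1990}, and all of the paper's effort goes into verifying its \emph{hypotheses} for gDNLS --- the spectral decomposition of Theorem \ref{thm:Spectral_H} giving $n(H_\phi)=1$, the kernel identification of Proposition \ref{prop:Z}, and the Hessian analysis of Section \ref{sec:concavity_d}. Measured against the original GSS proof, your reconstruction is architecturally faithful. The identities you derive are internally consistent under the GSS sign convention $S=E-\omega Q-cP$, though opposite in sign to this paper's: by \eqref{eqn:d_w_c_1}--\eqref{eqn:d_c=P}, here $d=E+\omega Q+cP$ and $\partial_\omega d=+Q(\phi)>0$, and differentiating \eqref{eqn:E'+Q1'+Q2'} in the parameters gives $H_\phi\,\partial_\omega\phi=-Q'(\phi)$, $H_\phi\,\partial_c\phi=-P'(\phi)$. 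Happily the bridge identity $d''_{ij}=-\inner{H_\phi\,\partial_{\mu_i}\phi}{\partial_{\mu_j}\phi}$ is the same under either convention, so nothing downstream breaks, and your linear-algebra derivation of \eqref{eqn:p(d)} --- a $p(d'')$-dimensional subspace of $\mathrm{span}\set{\partial_\omega\phi,\partial_c\phi}$ on which $H_\phi$ is negative definite, which must inject into $N$ under the splitting $X=N\oplus Z\oplus P_+$ --- is exactly the GSS argument for that inequality.

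Two caveats, one of substance and one of structure. Substantively, your proposal is incomplete precisely where \cite{Grillakis1990} is hard: you ``grant'' the counting lemma (constrained Morse index equal to $n(H_\phi)-p(d'')$ on $\set{v:\inner{Q'(\phi)}{v}=\inner{P'(\phi)}{v}=0}$ modulo $Z$), and you describe rather than construct the instability mechanism (the auxiliary functional whose derivative along the flow is bounded below near the orbit, forcing exit). Since these two items are the bulk of the cited paper, your plan relocates the citation rather than eliminating it; that is a legitimate strategy, but it should be stated as such. Structurally, you fold the spectral decomposition into the proof of the abstract theorem (``producing the spectral gap \ldots is where the explicit form \eqref{eqn:def_phi} of $\phi$ \ldots enter[s]''); in the GSS framework that decomposition is an \emph{assumption} of the theorem, verified model-by-model --- here in Section \ref{sec:spec_H} --- so the explicit soliton should not appear in the proof of Theorem \ref{thm:GSS} at all. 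Your closing concern about well-posedness is well placed: the dynamical halves of GSS presuppose a flow conserving $E$, $Q$, $P$, which is exactly what the conditional assumption \eqref{e:weak} supplies, and the GSS structural hypotheses on the symplectic map $J$ are themselves delicate for $J=-i$ viewed as a map $X^*\to X$ --- a further reason the paper invokes the theorem conditionally instead of reproving it.
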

In Section \ref{sec:spec_H}, we show that $n(H_\phi)=1$, and in
Section \ref{sec:concavity_d}, we compute the number of positive
eigenvalues of the Hessian, $d''$.  We complete the proofs of Theorems
\ref{thm:main} to \ref{thm:main3} in Section \ref{sec:Orbital_stabi}.
In Section \ref{s:disc}, we discuss the results and illustrate them
with numerical simulations of both stable and unstable solitons.
Algebraic manipulations useful for the evaluation of
$\det[d''(\omega,c)]$ and $\tr[d''(\omega,c)]$ are presented in the
Appendix.

\subsection{Remarks on Well Posedness Assumption}

The DNLS equation (cubic nonlinearity) has been studied in $H^1$ and
higher regularity spaces, with results for both local and global well
posedness results,
\cite{Hayashi1992,Hayashi1993,Hayashi1994,Hayashi1994a, Ozawa:1996uj,
  Tsutsumi1980,Tan:1994ju}. Much of the analysis relies on a
transformation related to \eqref{e:gauge} that turns the equation into two coupled semilinear
Schr\"odinger equations with no derivative.  In addition, Hayashi
\cite{Hayashi1992,Hayashi1993,Hayashi1994} identified a smallness
condition on the data
\begin{equation}
  \label{e:smallness}
  \|u_0\|_{L^2}<\sqrt{2\pi},
\end{equation}
for which global solutions exist in $H^s$, $s\in \N$.  The constant
$\sqrt{2\pi}$ is the $L^2$-norm of the ground state of the quintic NLS
soliton.  More recently, the global in time result for data satisfying
\eqref{e:smallness} were extended to $H^s$ spaces with $s > 1/2$ in \cite{Colliander2002}.  DNLS with low regularity has also been
studied on the torus, \cite{Grunrock:2005tk,Grunrock:2008go}.

There has also been progress beyond the cubic equation in the
aforementioned results.  Some studies, such as
\cite{Ozawa:1996uj,Tan:1994ju,Difranco2008}, include additive terms to
the cubic nonlinearity with derivative. More generally, Kenig, Ponce and Vega \cite{Kenig:1993wr,Kenig:1998wr}
used viscosity methods to show that, for general quasilinear
Schr\"odinger with polynomial nonlinearities, local well-posedness
holds in Sobolev spaces of high enough index (See Linares- Ponce
\cite{Linares2009} for a review). In \cite{Hao2007}, Hao proved
that \eqref{eqn:DNLS2} is locally well-posed in $H^{1/2}$
intersected with an appropriate Strichartz space
for $\sigma \geq 5/2$.  Working in the Schwartz space, Lee
\cite{Lee1989} used the framework of inverse scattering to show that
DNLS is globally well-posed for a dense subset of initial conditions,
excluding certain non generic ones.

\section{Problem Setup}
\label{s:setup}
In this section, we define the linearized operator, the invariant
quantities and the action functional of the soliton.  Throughout, we
adopt the notation of \cite{Grillakis1990}.

We study the problem in the space ${X = H^1(\mathbb{R})}$, with real
inner product
\begin{equation}
  \label{e:Xip}
  (u,v) \equiv \Re \int_{\mathbb{R}} (u_x \bar{v}_x + u \bar{v}) dx.
\end{equation}
The dual of $X$ is $X^* = H^{-1}(\mathbb{R})$. Let $I:X\rightarrow
X^*$ be the natural isomorphism defined by
\begin{equation}
  \langle I u,v \rangle = (u,v),
\end{equation}
where $\langle \cdot,\cdot \rangle $ is the pairing between $X$ and
$X^*$,
\begin{equation}
  \label{e:Xpairing}
  \langle f, u\rangle = \Re \int_{\mathbb{R}} f \bar{u} dx.
\end{equation}
gDNLS can be formulated as the Hamiltonian system
\begin{equation}
  \label{eqn:Hamiltonion}
  \frac{d\psi}{d t}=J E'(\psi),
\end{equation}
where the map $J:X^*\rightarrow X$ is $J=-i$, and $E$ is the
Hamiltonian:
\begin{equation}
  \label{eqn:Energy}
  E\equiv\frac{1}{2}\int_{-\infty}^\infty|\psi_x|^2
  dx+\frac{1}{2(\sigma+1)}\Im\int_{-\infty}^\infty
  |\psi|^{2\sigma}\bar{\psi}\psi_x dx. \quad \text{(Energy)}
\end{equation}
Two other conserved quantities are:
\begin{align}
  \label{eqn:Mass_dnls}
  Q\equiv\frac{1}{2}\int_{-\infty}^\infty|\psi|^2 dx, &\quad \text{(Mass)}\\
  \label{eqn:Momentum_dnls}
  P\equiv-\frac{1}{2}\Im\int_{-\infty}^\infty\bar{\psi}\psi_x dx.
  &\quad \text{(Momentum)}
\end{align}
Let $T_1$ and $T_2$ be the one-parameter groups of unitary operator on
$X$ defined by
\begin{subequations}
  \begin{align}
    T_1(t)\Phi(x)&\equiv e^{-i t}\Phi(x),\quad \Phi(x)\in X,\quad  t \in \mathbb{R}\\
    T_2( t)\Phi(x)&\equiv\Phi(x+t).
  \end{align}
\end{subequations}
Then
$$\psi_{\omega,c}=T_1(- \omega t) T_2(-c t) \phi_{\omega,c}(x).$$
and
\begin{equation*}
  \partial_t T_1(\omega t)|_{t=0}=-i\omega,\; \;
  \partial_t T_2(c t)|_{t=0}=c\partial_x ~.
\end{equation*}
We define the linear operators
\begin{subequations}
  \begin{align}
    B_1&\equiv J^{-1}\partial_t T_1(\omega t)|_{t=0}=\omega,\\
    B_2&\equiv J^{-1}\partial_t T_2(c t)|_{t=0}=ic\partial_x.
  \end{align}
\end{subequations}
The mass and momentum invariants \eqref{eqn:Mass_dnls} and
\eqref{eqn:Momentum_dnls} are thus related to the symmetry groups via:
\begin{subequations}
  \begin{align}
    \label{eqn:Q1}
    Q_1&\equiv\frac{1}{2}\langle B_1\phi,\phi\rangle
    =\frac{1}{2}\Re\int \omega\phi\bar{\phi}
    dx=\frac{\omega}{2}\int|\phi|^2 dx=\omega
    Q,\\
    \label{eqn:Q2}
    Q_2&\equiv\frac{1}{2}\langle B_2\phi,\phi\rangle
    =\frac{1}{2}\Re\int ic\partial_x \phi\bar{\phi}
    dx=-\frac{1}{2}c\Im\int\partial_x\phi \bar{\phi} dx=c P.
  \end{align}
\end{subequations}
Computing the first variations of \eqref{eqn:Energy},
\eqref{eqn:Mass_dnls} and \eqref{eqn:Momentum_dnls}, we have
\begin{subequations}
  \begin{align}
    E'(\phi)&=-\partial^2_x\phi-i|\phi|^{2\sigma}\partial_x\phi,\\
    Q'(\phi)&=\phi,\; \; \; P'(\phi)=i\partial_x\phi.
  \end{align}
\end{subequations}
The second variations are:
\begin{subequations}
  \label{eqn:E''_Q''_P''}
  \begin{align}
    \label{eqn:E''}
    \begin{split}
      E''(\phi)v&=(-\partial^2_x-i\sigma|\phi|^{2\sigma-2}\bar{\phi}\partial_x\phi
      -i|\phi|^{2\sigma}\partial_x)v\\
      &\quad -i\sigma|\phi|^{2\sigma-2}\phi\partial_x\phi \bar{v},
    \end{split}\\
    \label{eqn:Q''}
    Q''(\phi)v&=v,\;\;\; P''(\phi)v=i\partial_x v.
  \end{align}
\end{subequations}

\subsection{Linearized Hamiltonian}

The linearized Hamiltonian about the soliton $\phi$ is
\begin{equation}
  \label{eqn:H}
  \begin{split}
    H_{\phi}u&\equiv\bracket{E''(\phi)+Q''_1(\phi)+Q''_2(\phi)}u\\
    &\quad = \bracket{E''(\phi) + \omega Q''(\phi) + c P''(\phi)}u
  \end{split}
\end{equation}
for $u\in H^2(\mathbb{R})$.  For later use, we give two equivalent
expressions of $H_\phi$.  First, we decompose it into complex
conjugates:
\begin{lemma}
  \label{l:Hcc}
  For any function $u$ in the domain of $H_\phi$,
  \begin{equation*}
    H_\phi u = L_1 u + L_2 \bar u,
  \end{equation*}
  where
  \begin{subequations}
    \begin{align}
      L_1&\equiv-\partial^2_x+\omega+ic\partial_x-i\sigma|\phi|^{2\sigma-2}\bar{\phi}\partial_x\phi
      -i|\phi|^{2\sigma}\partial_x,\\
      L_2&\equiv-i\sigma|\phi|^{2\sigma-2}\phi\partial_x\phi.
    \end{align}
  \end{subequations}

\end{lemma}
Second, we give the expression of $H_\phi$ and the quadratic form it
induces, after extraction of the soliton's phase:
\begin{lemma}
  \label{cor:H}
  Let $u\in H^1$ be decomposed as
  \begin{equation}
    \label{eqn:p}
    u = e^{i\theta}(u_1+iu_2),
  \end{equation}
  where $\theta$ is given by \eqref{e:traveling_phase}, and $u_1$ and
  $u_2$ are the real and imaginary parts of $ue^{-i\theta}$.  Then
  \begin{equation}
    \label{e:Hdephased}
    H_\phi u = e^{i\theta} \bracket{\paren{L_{11}u_1  +L_{21}u_2} + i  \paren{L_{12}u_1  +L_{22}u_2}}
  \end{equation}
  and
  \begin{equation}
    \label{e:Hdephased_quad}
    \begin{split}
      \langle H_\phi u, u \rangle& = \langle
      L_{11}u_1,u_1\rangle+\langle
      L_{21}u_2,u_1\rangle\\
      &\quad +\langle L_{12}u_1,u_2\rangle+\langle L_{22}
      u_2,u_2\rangle,
    \end{split}
  \end{equation}
  where
  \begin{subequations}
    \label{e:dephased_blockops}
    \begin{align}
      L_{11}&\equiv-\partial_{yy}+\omega-\frac{c^2}{4}+\frac{c(2\sigma+1)}{2}\varphi^{2\sigma}-\frac{4\sigma^2+6\sigma+1}{4(\sigma+1)^2}\varphi^{4\sigma},\\
      L_{21}&\equiv-\frac{\sigma}{\sigma+1}\varphi^{2\sigma-1}\varphi_y+\frac{\sigma}{\sigma+1}\varphi^{2\sigma}\partial_y\\
      L_{12}&\equiv-\frac{(2\sigma+1)\sigma}{\sigma+1}\varphi^{2\sigma-1}\varphi_y-\frac{\sigma}{\sigma+1}\varphi^{2\sigma}\partial_y,\\
      L_{22}&\equiv-\partial_{yy}+\omega-\frac{c^2}{4}+\frac{c}{2}\varphi^{2\sigma}-\frac{2\sigma+1}{4(\sigma+1)^2}\varphi^{4\sigma}.
    \end{align}
  \end{subequations}
\end{lemma}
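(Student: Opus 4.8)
The plan is to start from the complex-conjugate form $H_\phi u = L_1 u + L_2\bar u$ of Lemma \ref{l:Hcc}, insert the dephased ansatz $u = e^{i\theta}(u_1+iu_2)$, factor out the common phase $e^{i\theta}$, and read off the real and imaginary parts. The only inputs beyond differentiation are the phase identities obtained from \eqref{e:traveling_phase},
\begin{equation*}
  \theta_y = \frac{c}{2}-\frac{1}{2(\sigma+1)}\varphi^{2\sigma},\qquad \theta_{yy}=-\frac{\sigma}{\sigma+1}\varphi^{2\sigma-1}\varphi_y,
\end{equation*}
together with $|\phi|=\varphi$ and $\phi_x=(\varphi_y+i\theta_y\varphi)e^{i\theta}$.

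First I would write $\partial_x u = e^{i\theta}\bracket{(u_{1,y}-\theta_y u_2)+i(u_{2,y}+\theta_y u_1)}$ and differentiate once more; the second derivative generates the multiplier $\theta_y^2$, the first-order factor $2\theta_y$, and the off-diagonal multiplier $\theta_{yy}$. In the linear combination $-\partial_x^2+\omega+ic\partial_x$, the real part of the coefficient of $u_1$ is $\omega+\theta_y^2-c\theta_y$; here the two $c^2$ pieces combine as $c^2/4-c^2/2=-c^2/4$, the two $\pm\frac{c}{2(\sigma+1)}\varphi^{2\sigma}$ terms cancel, and only $\omega-\frac{c^2}{4}+\frac{1}{4(\sigma+1)^2}\varphi^{4\sigma}$ survives on the diagonal, while $\theta_{yy}$ and $2\theta_y\partial_y$ land in the off-diagonal block.

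Next I would dephase the two nonlinear pieces of $L_1$, namely the pure multiplier $-i\sigma\varphi^{2\sigma-2}\bar\phi\phi_x = \sigma\theta_y\varphi^{2\sigma}-i\sigma\varphi^{2\sigma-1}\varphi_y$ and the operator $-i\varphi^{2\sigma}\partial_x$, together with $L_2\bar u = e^{i\theta}\paren{\sigma\theta_y\varphi^{2\sigma}-i\sigma\varphi^{2\sigma-1}\varphi_y}(u_1-iu_2)$. Collecting, in the real part, the multiplicative coefficient of $u_1$ from all four sources and substituting $\theta_y$ gives $\frac{c\sigma}{2}+\frac{c}{2}+\frac{c\sigma}{2}=\frac{c(2\sigma+1)}{2}$ for the $\varphi^{2\sigma}$ term and $\frac{1}{4(\sigma+1)^2}-\frac{\sigma}{2(\sigma+1)}-\frac{1}{2(\sigma+1)}-\frac{\sigma}{2(\sigma+1)}=-\frac{4\sigma^2+6\sigma+1}{4(\sigma+1)^2}$ for the $\varphi^{4\sigma}$ term, reproducing $L_{11}$. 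The same bookkeeping on the first-derivative coefficient of $u_2$ in the real part yields $c-\frac{\varphi^{2\sigma}}{\sigma+1}-c+\varphi^{2\sigma}=\frac{\sigma}{\sigma+1}\varphi^{2\sigma}$, while the $\pm\sigma\varphi^{2\sigma-1}\varphi_y$ multipliers from the two nonlinear terms cancel and leave $\theta_{yy}$ as the multiplicative $u_2$ coefficient, giving $L_{21}$; the imaginary parts produce $L_{12}$ and $L_{22}$ by the identical procedure. This proves \eqref{e:Hdephased}. Finally, \eqref{e:Hdephased_quad} follows by pairing \eqref{e:Hdephased} against $u$ with $\inner{f}{u}=\Re\int f\bar u\,dx$: the phases cancel, $\bar u = e^{-i\theta}(u_1-iu_2)$, and taking the real part isolates exactly the four stated $L^2$ pairings.

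The main obstacle is purely organizational rather than conceptual: the $\varphi^{2\sigma}$ and $\varphi^{4\sigma}$ coefficients of $L_{11}$ and $L_{22}$ each receive contributions from several distinct sources — $\theta_y^2$ from the Laplacian, the two nonlinear terms of $L_1$, and $L_2$ — so one must track the signs and the recurring factor $1/(\sigma+1)$ with care to see them collapse to the stated rational functions of $\sigma$. Once the identities for $\theta_y$ and $\theta_{yy}$ are recorded, no further idea is required.
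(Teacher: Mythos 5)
Your proposal is correct and follows essentially the same route as the paper's proof: starting from the conjugate-linear form $H_\phi u = L_1 u + L_2\bar u$ of Lemma \ref{l:Hcc}, substituting $\chi = ue^{-i\theta}$, using the identities for $\theta_y$ and $\theta_{yy}$, and collecting real and imaginary parts. Your coefficient bookkeeping (e.g.\ $\tfrac{c\sigma}{2}+\tfrac{c}{2}+\tfrac{c\sigma}{2}=\tfrac{c(2\sigma+1)}{2}$ and the $\varphi^{4\sigma}$ sum collapsing to $-\tfrac{4\sigma^2+6\sigma+1}{4(\sigma+1)^2}$) checks out against the paper's intermediate expression for $H_\phi u$ in terms of $\chi$ and $\bar\chi$.
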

\begin{proof}
  Using \eqref{e:traveling_phase},
  \begin{equation}
    \begin{split}
      \theta_y=\frac{c}{2}-\frac{1}{2\sigma+2}\varphi^{2\sigma}, \quad
      \theta_{yy}=-\frac{\sigma}{\sigma+1}\varphi^{2\sigma-1}\varphi_y.
    \end{split}
  \end{equation}
  We can then rewrite the operators $L_1$ and $L_2$ from Lemma
  \ref{l:Hcc} in terms of $\varphi$ as
  \begin{align*}
    \begin{split}
      L_1
      &=-\partial^2_y+\omega+ic\partial_y-i\varphi^{2\sigma}\partial_y\\
      &\quad +\frac{\sigma
        c}{2}\varphi^{2\sigma}-\frac{\sigma}{2\sigma+2}\varphi^{4\sigma}-i\sigma\varphi^{2\sigma-1}\varphi_y,
    \end{split}\\
    L_2&=\left[\frac{c\sigma}{2}\varphi^{2\sigma}
      -\frac{\sigma}{2\sigma+2}\varphi^{4\sigma}-i\sigma\varphi^{2\sigma-1}\varphi_y\right]e^{2i\theta}.
  \end{align*}

  Letting $\chi = ue^{-i\theta}$, we have
  \begin{equation}
    \begin{split}
      H_{\phi}u=&e^{i\theta}\left[-\partial_{yy}+\omega-\frac{c^2}{4}+\frac{c(\sigma+1)}{2}\varphi^{2\sigma}-\frac{i\sigma^2}{\sigma+1}\varphi^{2\sigma-1}\varphi_y\right.\\
      &\left.-\frac{i\sigma}{\sigma+1}\varphi^{2\sigma}\partial_y-\frac{2\sigma^2+4\sigma+1}{4(\sigma+1)^2}\varphi^{4\sigma}\right
      ]\chi\\
      &+e^{i\theta}\left[\frac{c\sigma}{2}\varphi^{2\sigma}-\frac{\sigma}{2\sigma+2}\varphi^{4\sigma}-i\sigma\varphi^{2\sigma-1}\varphi_y\right]\bar{\chi}.
    \end{split}
  \end{equation}
  Since $\chi = u_1 + i u_2$, we get \eqref{e:Hdephased} and
  \eqref{e:Hdephased_quad} by grouping terms appropriately.
\end{proof}

\subsection{Scalar Soliton Function}
\label{s:def_d}
Using \eqref{eqn:phi}, we observe that when evaluated at the soliton $\phi$,
\begin{equation}
  \label{eqn:E'+Q1'+Q2'}
  E'+\omega Q'+ c P'=0.
\end{equation}
For any $\omega>c^2/4$, we define the scalar function
\begin{equation}
  \label{eqn:d_w_c}
  d(\omega,c)\equiv E(\phi_{\omega,c})+Q_1(\phi_{\omega,c})+Q_2(\phi_{\omega,c}),
\end{equation}
which is the action functional evaluated at the soliton.  It has the
following properties:
\begin{lemma}
  \label{lemma:d}
  \begin{align}
    \label{eqn:d_w_c_1}
    d(\omega,c)&=E(\phi)+\omega Q(\phi)+c P(\phi),\\
    \label{eqn:d_w=Q}
    \partial_{\omega} d(\omega,c)&=Q(\phi)>0,\\
    \label{eqn:d_c=P}
    \partial_c d(\omega,c)&=P(\phi).
  \end{align}
  The Hessian is
  \begin{equation}
    \label{eqn:d''}
    d''(\omega,c) =
    \begin{pmatrix}
      \partial_\omega Q(\phi) & \partial_c Q(\phi) \\
      \partial_\omega P(\phi) & \partial_c P(\phi)
    \end{pmatrix}.
  \end{equation}

\end{lemma}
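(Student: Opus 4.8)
The plan is to derive every assertion from the single structural fact that, for each fixed admissible $(\omega,c)$, the profile $\phi = \phi_{\omega,c}$ is a critical point of the action $\psi \mapsto E(\psi) + \omega Q(\psi) + c P(\psi)$; this is exactly the identity \eqref{eqn:E'+Q1'+Q2'}. Assertion \eqref{eqn:d_w_c_1} is then immediate: combining the definition \eqref{eqn:d_w_c} with the relations $Q_1 = \omega Q$ and $Q_2 = cP$ from \eqref{eqn:Q1} and \eqref{eqn:Q2} gives $d(\omega,c) = E(\phi) + Q_1(\phi) + Q_2(\phi) = E(\phi) + \omega Q(\phi) + c P(\phi)$.

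For \eqref{eqn:d_w=Q} and \eqref{eqn:d_c=P} I would differentiate \eqref{eqn:d_w_c_1} in the parameters, treating $\phi_{\omega,c}$ as a smooth $H^1$-valued map of $(\omega,c)$. Differentiating in $\omega$ produces the explicit term $Q(\phi)$, arising from the coefficient $\omega$, together with the collected variational terms $\inner{E'(\phi) + \omega Q'(\phi) + c P'(\phi)}{\partial_\omega \phi}$; by \eqref{eqn:E'+Q1'+Q2'} the latter pairing vanishes, leaving $\partial_\omega d = Q(\phi)$. The identical computation in $c$ gives $\partial_c d = P(\phi)$, with the explicit term now $P(\phi)$ coming from the coefficient $c$. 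Positivity in \eqref{eqn:d_w=Q} then follows at once, since $Q(\phi) = \frac{1}{2}\int |\phi|^2\, dx > 0$ for the nontrivial soliton.

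The Hessian \eqref{eqn:d''} is obtained by differentiating these first-order identities a second time: $\partial_\omega d = Q(\phi)$ yields the first row $(\partial_\omega Q,\ \partial_c Q)$ and $\partial_c d = P(\phi)$ yields the second row $(\partial_\omega P,\ \partial_c P)$. Equality of the mixed second partials of $d$ forces $\partial_c Q = \partial_\omega P$, so the matrix is symmetric, as a Hessian must be, which also provides a useful check on the computation.

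The only step requiring care is the legitimacy of the chain rule, that is, the $C^2$ dependence of $\phi_{\omega,c}$ on $(\omega,c)$ in $X = H^1$ and the well-definedness of pairings such as $\inner{E'(\phi)}{\partial_\omega \phi}$. The explicit formula \eqref{eqn:def_phi} shows $\varphi_{\omega,c}$ is smooth in $(\omega,c,y)$ with exponential decay in $y$; although the traveling phase \eqref{e:traveling_phase} grows linearly in $y$, its parameter derivatives enter $\partial_\omega \phi$ and $\partial_c \phi$ only multiplied by $\varphi$, so these derivatives still lie in $H^1$. Since $E'(\phi) + \omega Q'(\phi) + c P'(\phi)$ is literally the zero element of $X^*$, its pairing against any $H^1$ vector vanishes regardless of the direction of variation, and the interchange of differentiation with the action functional is justified; this is the one place where more than formal manipulation is needed.
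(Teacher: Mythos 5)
Your proof is correct and follows essentially the same route as the paper's: both derive \eqref{eqn:d_w_c_1} from the definition \eqref{eqn:d_w_c} together with \eqref{eqn:Q1}--\eqref{eqn:Q2}, then differentiate in $\omega$ and $c$ and annihilate the variational terms via \eqref{eqn:E'+Q1'+Q2'}, and differentiate once more to read off the Hessian \eqref{eqn:d''}. The only difference is that you make explicit the chain-rule justification---the smooth $H^1$-valued dependence of $\phi_{\omega,c}$ on $(\omega,c)$, including the observation that the linear-in-$y$ growth of $\partial_c\theta_{\omega,c}$ is absorbed by the exponential decay of $\varphi$---which the paper's terse proof leaves implicit.
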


\begin{proof}

  Using \eqref{eqn:Q1}, \eqref{eqn:Q2} and \eqref{eqn:d_w_c}, we have
  \eqref{eqn:d_w_c_1}. Differentiating \eqref{eqn:d_w_c_1} with
  respect to $\omega$ and $c$ respectively and using
  \eqref{eqn:E'+Q1'+Q2'}, we obtain \eqref{eqn:d_w=Q} and
  \eqref{eqn:d_c=P}.  The expression for the Hessian follows.

\end{proof}

\section{Spectral Decomposition of the Linearized Operator}
\label{sec:spec_H}

This section provides a full description of the spectrum of the
linearized operator $H_\phi$.  In particular, we prove:

\begin{theorem}
  \label{thm:Spectral_H}
  For all values of $\sigma>0$ and admissible $(\omega,c)$, the space
  $X= H^1$ can be decomposed as the direct sum
  \begin{equation}
    \label{eqn:X}
    X=N+Z+P,
  \end{equation}
  where the three subspaces intersect trivially and:
  \begin{enumerate}[(i)]
  \item $N$ is a one dimensional subspace such that for $u \in N$,
    $u\neq 0$,
    \begin{equation}
      \langle H_\phi u, u \rangle <0.
    \end{equation}

  \item $Z$ is the two dimensional kernel of $H_\phi$.

  \item $P$ is a subspace such that for $p \in P$,
    \begin{equation}
      \label{eqn:H_p}
      \langle H_\phi p, p \rangle \geq \delta \| p\|_X^2
    \end{equation}
    where the constant $\delta>0$ is independent of $p$.

  \end{enumerate}

\end{theorem}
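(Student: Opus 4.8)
The plan is to work throughout with the de-phased block form of $H_\phi$ supplied by Lemma~\ref{cor:H}, so that the spectral problem becomes the analysis of the self-adjoint matrix operator
\[
\mathcal{L} = \begin{pmatrix} L_{11} & L_{21} \\ L_{12} & L_{22}\end{pmatrix}
\]
acting on pairs $(u_1,u_2)\in H^1(\R)\times H^1(\R)$, whose associated quadratic form is exactly \eqref{e:Hdephased_quad}; self-adjointness is consistent with the relation $L_{12}=L_{21}^*$, which one checks directly from \eqref{e:dephased_blockops}. First I would record that $\mathcal{L}$ is a relatively compact perturbation of $\mathrm{diag}(-\partial_{yy}+\omega-\tfrac{c^2}{4},\,-\partial_{yy}+\omega-\tfrac{c^2}{4})$: every coefficient appearing in the potential and first-order terms of \eqref{e:dephased_blockops} involves $\varphi^{2\sigma}$, $\varphi^{4\sigma}$ or $\varphi^{2\sigma-1}\varphi_y$, all of which decay exponentially because of the explicit profile \eqref{eqn:def_phi}. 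Hence the essential spectrum is $[\omega-\tfrac{c^2}{4},\infty)$, which is \emph{strictly positive} by admissibility. This immediately yields the subspace $P$: taking $P$ to be the spectral subspace of $H_\phi$ associated with $[\lambda,\infty)$ for some $\lambda$ below the essential spectrum but above zero, the bound \eqref{eqn:H_p} follows by combining the spectral lower bound in $L^2$ with a G\aa rding inequality from the $-\partial_{yy}$ leading term, upgrading coercivity to the $H^1$ norm. It also guarantees that only finitely many eigenvalues lie in $(-\infty,0]$, so the whole problem reduces to a finite-dimensional count of the nonpositive spectrum.

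Next I would produce the kernel $Z$ from the symmetries. Differentiating the soliton identity \eqref{eqn:E'+Q1'+Q2'} with respect to the phase and translation parameters gives $H_\phi(i\phi)=0$ and $H_\phi(\phi_y)=0$; these are linearly independent over $\R$, so $\dim Z\ge 2$. In the de-phased variables they correspond to $(u_1,u_2)=(0,\varphi)$ and $(u_1,u_2)=(\varphi_y,\theta_y\varphi)$. The remaining, and main, task is to show that the nonpositive spectrum consists of exactly this two-dimensional kernel together with a single simple negative eigenvalue, which then furnishes $N$ and completes \eqref{eqn:X}.

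For the count I would exploit the lower block. A direct computation using \eqref{eqn:stationary_soln} shows that $\varphi$ solves $L_{22}\varphi=0$; since $\varphi>0$ has no zeros, Sturm--Liouville theory identifies $\varphi$ as the ground state of the scalar operator $L_{22}$, so $L_{22}\ge 0$ with $\ker L_{22}=\mathrm{span}\{\varphi\}$ and no negative eigenvalue. One also checks the compatibility relations $L_{21}\varphi=0$ and $\inner{\varphi}{L_{12}u_1}=0$ for every $u_1$ (the latter after one integration by parts), so that $L_{12}u_1$ always lies in the range of $L_{22}$. This makes the Schur complement $M\equiv L_{11}-L_{21}L_{22}^{-1}L_{12}$, with $L_{22}^{-1}$ the inverse on $\varphi^{\perp}$, well defined, and minimizing the form \eqref{e:Hdephased_quad} over $u_2$ for fixed $u_1$ reduces the counting to the scalar operator $M$: one obtains $n(H_\phi)=n(M)$ and $\dim\ker H_\phi=\dim\ker M+1$, the extra dimension being the $(0,\varphi)$ direction. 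Since the translation mode forces $\varphi_y\in\ker M$, and $\varphi_y$ has exactly one sign change, the target reduces to establishing $n(M)=1$ and $\ker M=\mathrm{span}\{\varphi_y\}$.

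The hard part is precisely this last step, because $M$ is \emph{nonlocal}: the operator $L_{22}^{-1}$ is an integral operator, so the classical oscillation theorem, which would read off $n(M)=1$ from the single zero of $\varphi_y$, does not apply verbatim. I would handle it by showing that the quadratic form $\inner{Mu_1}{u_1}$ has a one-dimensional negative subspace and is nonnegative on a complementary subspace of codimension one: the negative direction is exhibited using the explicit profile (for instance a scaling mode of the soliton), while nonnegativity on the complement of $\set{\varphi_y}$ is obtained by a continuity argument in $\sigma$ connecting $M$ to the cubic case $\sigma=1$ analyzed in \cite{Colin2006,Guo1995}, using the uniform spectral gap from the first step to prevent eigenvalues from crossing zero except along the symmetry-generated kernel, whose two-dimensionality is itself pinned down by a non-degeneracy computation ruling out additional decaying solutions of $\mathcal{L}(u_1,u_2)=0$. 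Assembling the three pieces, namely $N$ from the single negative eigenvalue, $Z$ the two-dimensional kernel, and $P$ the coercive positive subspace, gives the decomposition \eqref{eqn:X} with all the stated properties.
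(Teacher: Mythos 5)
Your architecture is sound up to the Schur complement reduction: the checks $L_{21}\varphi=0$, $L_{12}=L_{21}^{*}$, the Weyl argument placing $\sigma_{\mathrm{ess}}$ at $[\omega-\tfrac{c^2}{4},\infty)$, the symmetry-generated kernel elements, and the ODE bound $\dim\ker H_\phi\le 2$ (four fundamental solutions, two decaying) all hold and indeed appear in the paper. The genuine gap is exactly where you place the ``hard part'': the count $n(M)=1$, $\ker M=\mathrm{span}\set{\varphi_y}$ is deferred to two devices that are not carried out and, as stated, do not work. First, ``a scaling mode of the soliton exhibits the negative direction'' is unsubstantiated: the natural trial functions built from parameter derivatives of $\phi_{\omega,c}$ produce quantities controlled by $d''(\omega,c)$, whose signs vary with $\sigma$ and $c/\sqrt{\omega}$ (this is the entire content of Section \ref{sec:concavity_d}), so no single explicit mode yields negativity uniformly over admissible $(\omega,c)$ and $\sigma>0$. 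Second, the continuation in $\sigma$ lacks a usable base case: Guo--Wu establish the spectral structure at $\sigma=1$ only for $c<0$, while Colin--Ohta's extension to all $c$ is variational and does not assert $n(H_\phi)=1$; you would further need norm-resolvent continuity of the family in $\sigma$, which you assume silently. Note also that your ``uniform spectral gap'' keeps the nonpositive spectrum discrete, but it does not prevent eigenvalues from crossing zero; the only mechanism forbidding a crossing is the constancy of $\dim\ker H_\phi$, which you mention but which then carries all the weight of the theorem.

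The paper closes precisely this gap with an algebraic identity your plan misses, and it renders the whole homotopy unnecessary: completing the square (inspired by Guo--Wu) gives \eqref{eqn:H_varphi},
\begin{equation*}
\langle H_\phi u,u\rangle=\langle \widetilde{L}_{11}u_1,u_1\rangle+\int_{-\infty}^{\infty}\bracket{\varphi\paren{\varphi^{-1}u_2}_y+\tfrac{\sigma}{\sigma+1}\varphi^{2\sigma}u_1}^2dy,
\end{equation*}
which shows that your a priori nonlocal Schur complement $M$ coincides, as a quadratic form, with the \emph{local} Schr\"odinger operator $\widetilde{L}_{11}=L_{11}-\tfrac{\sigma^2}{(\sigma+1)^2}\varphi^{4\sigma}$: the infimum over $u_2$ of the square term is zero, attained at $u_2=-\tfrac{\sigma}{\sigma+1}\varphi\int_{-\infty}^{y}\varphi^{2\sigma-1}u_1\,ds$, which lies in $H^1$ by the exponential decay of $\varphi$. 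Since differentiating \eqref{eqn:stationary_soln} yields $\widetilde{L}_{11}\varphi_y=0$ (note $L_{11}\varphi_y\neq0$; the shift matters) and $\varphi_y$ has exactly one zero, classical Sturm--Liouville theory immediately gives one simple negative eigenvalue $-\lambda_{11}^2$ with ground state $\chi_{11}$, for every $\sigma>0$ and admissible $(\omega,c)$ --- no base case, no continuation. The negative subspace is then $\chi_-=(\chi_{11}+i\chi_{12})e^{i\theta}$ with $\chi_{12}$ chosen to annihilate the square term, and coercivity on $P$ follows from the gaps of $\widetilde{L}_{11}$ and $L_{22}$ under three explicit orthogonality conditions, split into two cases according to the relative sizes of $\norm{\varphi\partial_y(\varphi^{-1}p_2)}_{L^2}$ and $\norm{p_1}_{L^2}$. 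To repair your argument, prove the identity $M=\widetilde{L}_{11}$ (equivalently, adopt \eqref{eqn:H_varphi}); without it, your final step remains a program rather than a proof.
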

\begin{corollary}
  For all values of $\sigma>0$ and admissible $(\omega,c)$,
  \label{c:Hnegcount}
  $$n(H_\phi) = 1.$$
\end{corollary}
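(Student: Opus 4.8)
The Corollary is immediate once the decomposition of Theorem \ref{thm:Spectral_H} is in hand: $n(H_\phi)$ is by definition the dimension of a maximal subspace on which $\langle H_\phi\cdot,\cdot\rangle$ is negative definite, and Theorem \ref{thm:Spectral_H} exhibits such a subspace $N$ of dimension exactly one while forcing the form to be positive semidefinite on the complement $Z\oplus P$. So the plan is really to prove the decomposition, and I would proceed as follows. First I would pass to the dephased picture of Lemma \ref{cor:H}: the map $u\mapsto(u_1,u_2)$ with $u=e^{i\theta}(u_1+iu_2)$ is an isometry of $X$ onto $H^1(\R;\R)\times H^1(\R;\R)$ that sends $\langle H_\phi u,u\rangle$ to the quadratic form of the real matrix operator $\mathcal L=\begin{pmatrix}L_{11}&L_{21}\\ L_{12}&L_{22}\end{pmatrix}$; one checks $L_{12}=L_{21}^\ast$, so $\mathcal L$ is self-adjoint and it suffices to prove the analogous decomposition for $\mathcal L$. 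Since $\varphi$ decays exponentially, all coefficients beyond the constants vanish at infinity, so Weyl's theorem gives $\sigma_{\mathrm{ess}}(\mathcal L)=[\omega-\tfrac{c^2}{4},\infty)$, which is strictly positive by admissibility; hence only finitely many eigenvalues lie below the gap, and the coercivity on $P$ will follow from a positive spectral gap once the discrete spectrum is located.

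The key structural input is that the stationary equation \eqref{eqn:stationary_soln} says exactly $L_{22}\varphi=0$. Since $\varphi>0$ this makes $\varphi$ the ground state of $L_{22}$, so $L_{22}\ge 0$ with $\ker L_{22}=\mathrm{span}\{\varphi\}$ and a gap above $0$, and moreover $L_{21}\varphi=0$. I would then eliminate $u_2$ by the ground-state substitution $u_2=\varphi w$: using $L_{22}\varphi=0$ one gets $\langle L_{22}(\varphi w),\varphi w\rangle=\int\varphi^2 w_y^2\,dy$, and after one integration by parts the cross term collapses to $\tfrac{2\sigma}{\sigma+1}\int\varphi^{2\sigma+1}u_1\,w_y\,dy$. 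Completing the square in $w_y$ then diagonalizes the form as a nonnegative piece plus the reduced \emph{local} scalar form $\langle\tilde L u_1,u_1\rangle$, where
\[
\tilde L=L_{11}-\frac{\sigma^2}{(\sigma+1)^2}\varphi^{4\sigma}=-\partial_{yy}+\omega-\frac{c^2}{4}+\frac{c(2\sigma+1)}{2}\varphi^{2\sigma}-\frac{(2\sigma+1)(4\sigma+1)}{4(\sigma+1)^2}\varphi^{4\sigma}.
\]
This change of variables is an inertia-preserving congruence (Sylvester), so $n(\mathcal L)=n(\tilde L)$ and $\dim\ker\mathcal L=\dim\ker\tilde L+1$, the extra dimension being the decoupled direction $(0,\varphi)$.

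Finally I would count for the scalar operator $\tilde L$. Differentiating \eqref{eqn:stationary_soln} in $y$ gives $\tilde L\varphi_y=0$, so $\varphi_y\in\ker\tilde L$; since $\varphi$ is even with a single maximum and decays monotonically, $\varphi_y$ is odd with exactly one zero. Sturm--Liouville oscillation theory then forces $0$ to be the \emph{second} eigenvalue of $\tilde L$: there is exactly one eigenvalue strictly below $0$ and $\ker\tilde L=\mathrm{span}\{\varphi_y\}$. Hence $n(\mathcal L)=1$ and $\dim\ker\mathcal L=2$, and together with the positive essential spectrum this yields $X=N\oplus Z\oplus P$ with $\dim N=1$, $Z=\ker H_\phi$ two-dimensional (the images of $(0,\varphi)$ and $(\varphi_y,\theta_y\varphi)$, i.e.\ $i\phi$ and $\phi_y$), and $\langle H_\phi p,p\rangle\ge\delta\|p\|_X^2$ on $P$; the Corollary $n(H_\phi)=1$ follows at once.

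The main obstacle is the reduction step. One must justify the substitution $u_2=\varphi w$ and the square-completion at the level of $H^1$ quadratic forms rather than formally: in particular that minimizing over $u_2$ is legitimate (it is, because $L_{22}$ is positive with a spectral gap off $\varphi$, and the $\varphi$-direction decouples since $\langle L_{12}u_1,\varphi\rangle=\langle u_1,L_{21}\varphi\rangle=0$), and that the congruence genuinely preserves the inertia and the kernel dimension. The remaining delicate point is to secure the coercivity constant $\delta>0$ on $P$, which comes from the gap between $0$ and the rest of $\sigma(\tilde L)$ (the positive essential spectrum plus the simplicity of the zero eigenvalue); once the single negative mode and the two-dimensional kernel are isolated, this is a standard gap argument transported back to $H_\phi$ through the dephasing isometry.
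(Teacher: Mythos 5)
Your proposal is correct and is essentially the paper's own argument: the dephasing of Lemma \ref{cor:H}, the substitution $u_2=\varphi w$ with completion of the square yielding exactly the rearranged form \eqref{eqn:H_varphi} with the same operator $\widetilde{L}_{11}$, Weyl's theorem for the essential spectrum, Sturm--Liouville theory applied to $\widetilde{L}_{11}\varphi_y=0$ to get exactly one negative eigenvalue, and $L_{22}\varphi=0$ for positivity of $L_{22}$ off $\varphi$. The only differences are in packaging: you phrase the counts as a Sylvester/inertia congruence, whereas the paper constructs $N$, $Z$, $P$ explicitly (Propositions \ref{prop:N} and \ref{prop:Z}, with a two-case estimate giving coercivity on $P$) and obtains $\dim\ker H_\phi=2$ by an ODE-asymptotics argument, which is cleaner than tracking kernel dimensions through the non-invertible substitution $u_2=\varphi w$ -- though for the corollary $n(H_\phi)=1$ itself the kernel dimension is not actually needed, since the bound $n(H_\phi)\le n(\widetilde{L}_{11})=1$ already follows from your projection of any negative subspace onto its first component.
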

An important ingredient of the proof involves rewriting the quadratic
form \eqref{e:Hdephased_quad} induced by $H_\phi$ in a more favorable
form.  This rearrangement, inspired by \cite{Guo1995}, expresses it as
a sum of a quadratic form involving an operator with exactly one
negative eigenvalue and a nonnegative term.
\begin{lemma} Let
  \[
  u = e^{i\theta}(u_1 + iu_2),
  \]
  where $\theta, u_1, u_2$ are defined the same as Corollary
  \ref{cor:H}, then
  \begin{equation}
    \label{eqn:H_varphi}
    \begin{split}
      \langle H_\phi u, u \rangle = \langle \widetilde{L}_{11} u_1,
      u_1 \rangle + \int_{-\infty}^\infty \bracket{\varphi \paren{
          \varphi^{-1} u_2}_y + \frac{\sigma}{(\sigma+1)}
        \varphi^{2\sigma} u_1}^2 dy,
    \end{split}
  \end{equation}
  where
  \begin{equation}
    \begin{split}
      \widetilde{L}_{11} \equiv
      &-\partial_{yy}+\omega-\frac{c^2}{4}+\frac{c(2\sigma+1)}{2}\varphi^{2\sigma}
      -\frac{8\sigma^2+6\sigma+1}{4(\sigma+1)^2}\varphi^{4\sigma}.
    \end{split}
  \end{equation}
\end{lemma}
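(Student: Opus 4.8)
The plan is to establish the identity by completing the square in the variable $u_2$, starting from the quadratic form \eqref{e:Hdephased_quad} of Lemma \ref{cor:H}. The right-hand side of \eqref{eqn:H_varphi} differs from \eqref{e:Hdephased_quad} only in how the $u_2$-diagonal and off-diagonal terms are packaged, so the strategy is to show that the perfect-square integral reproduces exactly $\langle L_{22}u_2,u_2\rangle$ together with the two cross terms, leaving behind a modified diagonal operator $\widetilde{L}_{11}$ acting on $u_1$. Concretely, I would write $\varphi(\varphi^{-1}u_2)_y = u_{2,y} - (\varphi_y/\varphi)u_2$ and expand the square into three pieces: a pure $u_2$ term $\int\big(u_{2,y}-(\varphi_y/\varphi)u_2\big)^2\,dy$, a cross term $\tfrac{2\sigma}{\sigma+1}\int\big(u_{2,y}-(\varphi_y/\varphi)u_2\big)\varphi^{2\sigma}u_1\,dy$, and a pure $u_1$ term $\tfrac{\sigma^2}{(\sigma+1)^2}\int\varphi^{4\sigma}u_1^2\,dy$.

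The heart of the argument, and the step I expect to require the most care, is identifying the pure $u_2$ piece with $\langle L_{22}u_2,u_2\rangle$. After one integration by parts, using $(\varphi_y/\varphi)_y = \varphi_{yy}/\varphi - (\varphi_y/\varphi)^2$, the cross-derivative contribution combines with the $(\varphi_y/\varphi)^2$ term so that the latter cancels, leaving $\int u_{2,y}^2\,dy + \int (\varphi_{yy}/\varphi)u_2^2\,dy$. Here I would invoke the stationary profile equation \eqref{eqn:stationary_soln} for $\varphi$, which gives $\varphi_{yy}/\varphi = \omega - \tfrac{c^2}{4} + \tfrac{c}{2}\varphi^{2\sigma} - \tfrac{2\sigma+1}{4(\sigma+1)^2}\varphi^{4\sigma}$; substituting this reproduces precisely $\langle L_{22}u_2,u_2\rangle$. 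This is where the specific form of the soliton profile enters the computation, so the coefficients must be tracked carefully, and the vanishing of boundary terms in the integrations by parts should be justified using the exponential decay of $\varphi$ together with $u_1,u_2\in H^1$.

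For the cross term, I would perform a single integration by parts on $\langle L_{12}u_1,u_2\rangle$, moving the $\partial_y$ off $u_1$; this both confirms $\langle L_{12}u_1,u_2\rangle = \langle L_{21}u_2,u_1\rangle$ (consistent with the self-adjointness of $H_\phi$) and shows that their sum coincides with the cross term generated by the square. Finally, the leftover $\tfrac{\sigma^2}{(\sigma+1)^2}\int\varphi^{4\sigma}u_1^2\,dy$ is subtracted from $\langle L_{11}u_1,u_1\rangle$, and the $\varphi^{4\sigma}$ coefficients combine as $-\tfrac{4\sigma^2+6\sigma+1}{4(\sigma+1)^2} - \tfrac{\sigma^2}{(\sigma+1)^2} = -\tfrac{8\sigma^2+6\sigma+1}{4(\sigma+1)^2}$, which is exactly the coefficient appearing in $\widetilde{L}_{11}$. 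Assembling the three pieces then yields \eqref{eqn:H_varphi}, completing the proof.
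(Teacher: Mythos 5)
Your proposal is correct and takes essentially the same approach as the paper's proof: the same substitution $\tilde u_2 = \varphi^{-1}u_2$ (your $\varphi(\varphi^{-1}u_2)_y = u_{2y} - (\varphi_y/\varphi)u_2$), the same use of the profile equation (the paper invokes it as $L_{22}\varphi = 0$, you substitute $\varphi_{yy}/\varphi$ from \eqref{eqn:stationary_soln}, which is the identical fact), the same integrations by parts on the off-diagonal terms, and the same coefficient arithmetic $L_{11} = \widetilde{L}_{11} + \tfrac{\sigma^2}{(\sigma+1)^2}\varphi^{4\sigma}$. The only difference is that you run the computation in reverse --- expanding the perfect square and matching each piece against \eqref{e:Hdephased_quad} rather than assembling the square from the transformed quadratic form --- which is a presentational rather than substantive distinction.
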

\begin{proof}
  Recall the terms in the quadratic form
  \eqref{e:dephased_blockops}. We first examine $L_{11}$.  The
  relationship between $L_{11}$ and $\widetilde{L}_{11}$ is
  \begin{equation}
    \label{e:L11block}
    L_{11} = \widetilde{L}_{11} +\frac{\sigma^2}{(\sigma+1)^2} \varphi^{4\sigma}
  \end{equation}
  Next, consider $L_{22}$. From \eqref{eqn:stationary_soln},
  \begin{equation}
    \label{eqn:L22=0}
    L_{22} \varphi =0.
  \end{equation}
  Letting $\tilde u_2 = \varphi^{-1} u_2$, we can then write
  \begin{equation}
    \label{eqn:B2bb}
    \begin{split}
      \langle L_{22} u_2, u_2\rangle& = \langle -\partial_{yy} u_2,
      u_2 \rangle + \left\langle
        (\omega-\frac{c^2}{4}+\frac{c}{2}\varphi^{2\sigma}
        -\frac{2\sigma+1}{4(\sigma+1)^2}\varphi^{4\sigma}) u_2, u_2
      \right \rangle\\
      & = \langle -\varphi_{yy} \tilde u_2 - 2 \varphi_y
      \tilde u_{2y} - \varphi \tilde u_{2yy}, \varphi \tilde u_2 \rangle \\
      &\quad \quad +
      \langle (\omega-\frac{c^2}{4}+\frac{c}{2}\varphi^{2\sigma} -\frac{2\sigma+1}{4(\sigma+1)^2}\varphi^{4\sigma}) \varphi \tilde u_2, \varphi \tilde u_2 \rangle\\
      &= \langle \tilde u_2 L_{22} \varphi, \varphi \tilde u_2\rangle
      + \langle  - 2 \varphi_y \tilde u_{2y} - \varphi \tilde u_{2yy}, \varphi \tilde u_2 \rangle\\
      & = \langle -(\varphi^2 \tilde u_{2y})_y, \tilde u_2 \rangle =
      \langle \varphi \tilde u_{2y}, \varphi \tilde u_{2y} \rangle,
    \end{split}
  \end{equation}
  where $u_{2y}$ and $u_{2yy}$ denote $\partial_y u_2$ and
  $\partial_{yy} u_2$, respectively.  Lastly, we simplify the off
  diagonal entries, $L_{21}$ and $L_{12}$.  Integrating by parts, we
  have
  \begin{equation*}
    \begin{split}
      \langle L_{12}u_1,u_2\rangle
      =&\left\langle\left (-\frac{(2\sigma+1)\sigma}{\sigma+1}\varphi^{2\sigma-1}\varphi_y-\frac{\sigma}{\sigma+1}\varphi^{2\sigma}\partial_y\right)u_1,u_2\right\rangle \\
      =&-\frac{2\sigma+1}{2(\sigma+1)} \left\langle (\varphi^{2\sigma})_y, u_1 u_2 \right\rangle -\frac{\sigma}{\sigma+1} \left\langle \varphi^{2\sigma} u_{1y},u_2\right\rangle \\
      =&\frac{2\sigma+1}{2(\sigma+1)} \left \langle
        \varphi^{2\sigma}u_{2y},u_1\right\rangle
      +\frac{2\sigma+1}{2(\sigma+1)} \left\langle
        \varphi^{2\sigma}u_{1y},u_2\right\rangle
      -\frac{\sigma}{\sigma+1} \left\langle \varphi^{2\sigma}u_{1y},u_2\right\rangle \\
      =&\frac{2\sigma+1}{2(\sigma+1)}\left\langle
        \varphi^{2\sigma}u_{2y},u_1\right\rangle
      +\frac{1}{2(\sigma+1)} \left\langle
        \varphi^{2\sigma}u_{1y},u_2\right\rangle.
    \end{split}
  \end{equation*}
  Similarly,
  \begin{equation*}
    \begin{split}
      \langle L_{21}u_2,u_1\rangle =&\left\langle \left
          (-\frac{\sigma}{\sigma+1}\varphi^{2\sigma-1}\varphi_y+\frac{\sigma}{\sigma+1}\varphi^{2\sigma}\partial_y\right
        )u_2,u_1\right\rangle \\
      =&-\frac{1}{2(\sigma+1)} \left\langle (\varphi^{2\sigma})_y, u_1
        u_2 \right\rangle
      +\frac{\sigma}{\sigma+1} \left\langle \varphi^{2\sigma}           u_{2y},u_1\right\rangle \\
      =&\frac{1}{2(\sigma+1)} \left\langle
        \varphi^{2\sigma}u_{2y},u_1\right\rangle
      +\frac{1}{2(\sigma+1)} \left\langle
        \varphi^{2\sigma}u_{1y},u_2\right\rangle
      +\frac{\sigma}{\sigma+1} \left\langle \varphi^{2\sigma}u_{2y},u_1\right\rangle \\
      =&\frac{2\sigma+1}{2(\sigma+1)} \left\langle
        \varphi^{2\sigma}u_{2y},u_1\right\rangle
      +\frac{1}{2(\sigma+1)} \left \langle
        \varphi^{2\sigma}u_{1y},u_2\right\rangle .
    \end{split}
  \end{equation*}
  The off diagonal terms then sum to
  \begin{equation*}
    \langle L_{12}u_1,u_2\rangle + \langle L_{21}u_2,u_1\rangle  = \frac{2\sigma+1}{\sigma+1} \left\langle
      \varphi^{2\sigma}u_{2y},u_1\right\rangle
    +\frac{1}{\sigma+1} \left \langle \varphi^{2\sigma}u_{1y},u_2\right\rangle.
  \end{equation*}
  Introducing $\tilde u_2 = \varphi^{-1} u_2$ into the above
  expression, and integrating by parts,
  \begin{equation}
    \label{e:offdiagblocks}
    \langle L_{12}u_1,u_2\rangle + \langle L_{21}u_2,u_1\rangle  =\frac{2\sigma}{\sigma+1}  \left \langle
      \varphi ^{2\sigma+1} \tilde u_{2y} , u_1\right\rangle
  \end{equation}
  Combining \eqref{e:L11block}, \eqref{eqn:B2bb} and
  \eqref{e:offdiagblocks},
  \begin{equation*}
    \begin{split}
      \langle H_\phi u, u \rangle =& \langle \widetilde{L}_{11} u_1,
      u_1 \rangle + \langle \varphi
      \tilde u_{2y}, \varphi \tilde u_{2y} \rangle \\
      &\quad + \left\langle \frac{\sigma}{\sigma+1} \varphi^{2\sigma}
        u_1, \frac{\sigma}{\sigma+1} \varphi^{2\sigma} u_1
      \right\rangle + \left\langle \frac{2\sigma}{\sigma+1} \varphi
        ^{2\sigma} u_1 ,
        \varphi \tilde u_{2y} \right\rangle\\
      =& \langle \widetilde{L}_{11} u_1, u_1 \rangle +
      \int_{-\infty}^\infty \bracket{\varphi \tilde u_{2y} +
        \frac{\sigma}{\sigma+1} \varphi^{2\sigma} u_1}^2 dy.
    \end{split}
  \end{equation*}
\end{proof}

      \subsection{The Negative Subspace}
      Next, we characterize the negative subspace, $N$.  For that, we
      need the following lemma on $\widetilde{L}_{11}$.
      \begin{lemma}
        \label{l:spec_L11}
        The spectrum of $\widetilde{L}_{11}$ can be characterized as
        follows:
        \begin{itemize}
        \item $\widetilde{L}_{11}$ has exactly one negative
          eigenvalue, denoted $-\lambda_{11}^2$, with multiplicity
          one, and eigenfunction $\chi_{11}$,
        \item $0 \in \sigma(\widetilde{L}_{11})$, and the kernel is
          spanned by $\varphi_y$,
        \item There exists $\mu_{11} >0$ such that
          \[
          \sigma(\widetilde{L}_{11}) \setminus\set{-\lambda_{11}^2, 0}
          \subset [\mu_{11}, \infty).
          \]
        \end{itemize}

      \end{lemma}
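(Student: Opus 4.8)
The plan is to treat
\[
\widetilde{L}_{11} = -\partial_{yy} + \paren{\omega - \tfrac{c^2}{4}} + V(y),
\qquad
V(y) = \frac{c(2\sigma+1)}{2}\varphi^{2\sigma} - \frac{8\sigma^2+6\sigma+1}{4(\sigma+1)^2}\varphi^{4\sigma},
\]
as a one-dimensional Schr\"odinger operator on $L^2(\R)$, self-adjoint on the domain $H^2(\R)$ and bounded below, and to read off its spectrum from two classical facts: Weyl's theorem on the stability of the essential spectrum, and Sturm--Liouville oscillation theory. The three bullets in the lemma correspond, respectively, to counting the eigenvalues below the continuous threshold, locating the kernel, and exhibiting a spectral gap above $0$.

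First I would pin down the essential spectrum. From the explicit formula \eqref{eqn:def_phi}, $\varphi^{2\sigma}$ decays exponentially as $\abs{y}\to\infty$, so $V$ is smooth, bounded, and tends to $0$ at infinity. Weyl's theorem then gives $\sigma_{\mathrm{ess}}(\widetilde{L}_{11}) = [\omega - c^2/4, \infty)$. Since $(\omega,c)$ is admissible, $\omega - c^2/4 > 0$, so the essential spectrum is bounded away from $0$; any point of the spectrum below this threshold is therefore an isolated eigenvalue of finite multiplicity. Because the defining ODE is second order and an $L^2$ eigenfunction must decay at both $\pm\infty$, the decaying solution space is one-dimensional, so each such eigenvalue is in fact simple.

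Second, I would identify $\varphi_y$ as a kernel element by differentiating the stationary equation \eqref{eqn:stationary_soln} in $y$. Using $(2\sigma+2)^2 = 4(\sigma+1)^2$ together with $(2\sigma+1)(4\sigma+1) = 8\sigma^2 + 6\sigma + 1$, the differentiated equation matches $\widetilde{L}_{11}\varphi_y$ coefficient by coefficient, giving $\widetilde{L}_{11}\varphi_y = 0$. Hence $0 \in \sigma(\widetilde{L}_{11})$, and by the simplicity remark above the kernel is exactly $\operatorname{span}\set{\varphi_y}$. The counting of the negative eigenvalue is then an oscillation argument. From \eqref{eqn:def_phi}, $\varphi$ is even, strictly positive, attains its unique maximum at $y=0$ (where $\cosh$ is minimal), and is strictly monotone on each half-line; hence $\varphi_y$ is odd and vanishes exactly once, at $y=0$. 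By the Sturm oscillation theorem, the eigenfunction belonging to the $n$-th eigenvalue (ordered from the bottom, $n=0,1,2,\dots$) has exactly $n$ zeros, so $\varphi_y$, having a single node, is the eigenfunction for $n=1$. Consequently there is exactly one eigenvalue strictly below $0$, namely the simple, nodeless ground state, which I denote $-\lambda_{11}^2$ with eigenfunction $\chi_{11}$; the value $0$ is the second eigenvalue; and all remaining spectrum lies strictly above $0$. Since $\sigma_{\mathrm{ess}}$ begins at $\omega - c^2/4 > 0$ and any eigenvalue in $(0,\omega-c^2/4)$ is isolated, the infimum of the spectrum above $0$ is a strictly positive number $\mu_{11}$, giving the last bullet.

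The main obstacle will be the clean deployment of the oscillation theory in the present setting, where the potential tends to a positive constant rather than having compact support: one must justify the zero-counting characterization of eigenfunctions below the continuous threshold (for instance via a Pr\"ufer transformation or a Sturm comparison argument) and confirm rigorously that $\varphi_y$ has precisely one node, so that it lands in slot $n=1$ and forces exactly one eigenvalue beneath it. The verification that $\widetilde{L}_{11}\varphi_y=0$ and the Weyl-theorem computation of $\sigma_{\mathrm{ess}}$ are routine by comparison.
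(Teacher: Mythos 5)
Your proposal is correct and follows essentially the same route as the paper: Weyl's theorem for $\sigma_{\mathrm{ess}}(\widetilde{L}_{11})=[\omega-c^2/4,\infty)$, differentiation of \eqref{eqn:stationary_soln} plus the one-dimensionality of the decaying ODE solution space to identify $\ker\widetilde{L}_{11}=\operatorname{span}\set{\varphi_y}$, and Sturm--Liouville oscillation theory (with $\varphi_y$ having a single node) to conclude zero is the second eigenvalue with exactly one simple negative eigenvalue below it. The only cosmetic difference is the last bullet: the paper defines $\mu_{11}$ as a Rayleigh-quotient infimum over $\set{\varphi_y,\chi_{11}}^{\perp}$ (a form it reuses later in Lemma \ref{l:L11}), while you obtain the gap from the isolation of the discrete spectrum below the essential threshold --- both are valid and equivalent here.
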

      \begin{proof}

        First, we observe that since $\varphi$ is exponentially
        localized, $\widetilde{L}_{11}$ is a relatively compact
        perturbation of $ - \partial_y^2 + \omega - \tfrac{c^2}{4}.  $
        By Weyl's theorem, the essential spectrum is then
        \[
        \sigma_{\rm ess} (\widetilde{L}_{11}) = \sigma_{\rm ess}
        (- \partial_y^2 + \omega - \tfrac{c^2}{4})= \left [ \omega -
          \tfrac{c^2}{4}, \infty\right).
        \]
        Consequently, all eigenvalues below the lower bound of the
        essential spectrum correspond to isolated eigenvalues of
        finite multiplicity.  By differentiating
        \eqref{eqn:stationary_soln} with respect to $y$, we see that
        \begin{equation}
          \label{eqn:L11}
          \widetilde{L}_{11} \varphi_y =0.
        \end{equation}
        Hence, $\widetilde{L}_{11}$ has a kernel.  Viewed as a linear
        second order ordinary differential equation,
        $\widetilde{L}_{11}f =0$ has two linearly independent
        solutions.  As $y\to -\infty$, one solution decays
        exponentially while the other grows exponentially.  Thus, up
        to a multiplicative constant, there can be at most one
        spatially localized solution to $\widetilde{L}_{11}f =0$.
        Therefore, the kernel is spanned by $\varphi_y$.

        From Sturm-Liouville theory, this implies that zero is the
        second eigenvalue of $\widetilde{L}_{11}$, and
        $\widetilde{L}_{11}$ has exactly one strictly negative
        eigenvalue, $-\lambda_{11}^2$, with a $L^2$ normalized
        eigenfunction $\chi_{11}$:
        \begin{equation}
          \label{eqn:L11_chi11}
          \widetilde{L}_{11}\chi_{11} = -\lambda_{11}^2 \chi_{11}.
        \end{equation}
        If we now let
        \begin{equation}
          \label{e:L11bnd}
          \mu_{11} \equiv \inf_{f \neq 0, f\perp \varphi_y, f\perp \chi_{11}} \frac{\inner{\widetilde{L}_{11} f}{f}}{\inner{f}{f}}
        \end{equation}
        we see that $\mu_{11} >0$, since if it were not, it would
        correspond to another discrete eigenvalue less than or equal
        to zero.  
        It is either a discrete eigenvalue in the gap $(0, \omega -
        \tfrac{c^2}{4})$ or the base of the essential spectrum.
        Regardless, $\sigma(\widetilde{L}_{11})
        \setminus\set{-\lambda_{11}^2, 0}$ is bounded away from zero.

      \end{proof}

      Using $\chi_{11}$, we construct the negative subspace $N$.

      \begin{proposition}
        \label{prop:N}
        Let
        \begin{equation}
          \label{eqn:N_def}
          N\equiv {\rm span} \set{\chi_{-} }
        \end{equation}
        where
        \begin{subequations}
          \label{eqn:chi-}
          \begin{align}
            \chi_{-} &\equiv (\chi_{11} +i \chi_{12} ) e^{i\theta},\\
            \label{eqn:chi-_b}
            \chi_{12}& \equiv \varphi
            \bracket{-\frac{\sigma}{\sigma+1} \int_{-\infty}^y
              \varphi^{2\sigma-1}(s)\chi_{11}(s) ds +k_{12}},
          \end{align}
        \end{subequations}
        and $k_{12}\in \mathbb{R}$ is chosen such that
        \begin{equation}
          \label{eqn:chi12_orth_phi}
          \langle \chi_{12} , \varphi \rangle =0.
        \end{equation}
        For $u \in N\setminus \{ 0\}$,
        \begin{equation*}
          \langle H_\phi u, u \rangle <0.
        \end{equation*}
      \end{proposition}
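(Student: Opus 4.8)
The plan is to feed the specific element $\chi_-$ into the reformulated quadratic form established in the preceding lemma, namely \eqref{eqn:H_varphi}. Decomposing $\chi_- = e^{i\theta}(\chi_{11} + i\chi_{12})$ in the notation of Lemma \ref{cor:H} identifies the real and imaginary parts of $\chi_- e^{-i\theta}$ as $u_1 = \chi_{11}$ and $u_2 = \chi_{12}$. The whole point of the construction \eqref{eqn:chi-} is that $\chi_{12}$ is engineered precisely so that the nonnegative integral appearing in \eqref{eqn:H_varphi} vanishes, leaving only the contribution of $\widetilde{L}_{11}$ acting on its ground state $\chi_{11}$, which is negative by Lemma \ref{l:spec_L11}.

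I would first verify that the integrand vanishes identically. Setting $\tilde u_2 = \varphi^{-1}\chi_{12}$, the definition \eqref{eqn:chi-_b} gives $\tilde u_2(y) = -\tfrac{\sigma}{\sigma+1}\int_{-\infty}^y \varphi^{2\sigma-1}\chi_{11}\,ds + k_{12}$, hence $\tilde u_{2y} = -\tfrac{\sigma}{\sigma+1}\varphi^{2\sigma-1}\chi_{11}$, so that $\varphi\,\tilde u_{2y} + \tfrac{\sigma}{\sigma+1}\varphi^{2\sigma}\chi_{11} = 0$ pointwise. Note that the additive constant $k_{12}$ drops out under differentiation, so the integral vanishes for \emph{any} choice of $k_{12}$; the constant is pinned down only to enforce the orthogonality \eqref{eqn:chi12_orth_phi} required elsewhere for the direct-sum decomposition of Theorem \ref{thm:Spectral_H}, and is irrelevant to the sign computation here. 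Consequently, applying \eqref{eqn:H_varphi} and \eqref{eqn:L11_chi11},
\[
\langle H_\phi \chi_-, \chi_- \rangle = \langle \widetilde{L}_{11}\chi_{11}, \chi_{11}\rangle = -\lambda_{11}^2 \norm{\chi_{11}}_{L^2}^2 < 0.
\]

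To extend this to all of $N\setminus\{0\}$, I would invoke that $X$ is treated as a real Hilbert space and that $H_\phi u = L_1 u + L_2 \bar u$ is real-linear (Lemma \ref{l:Hcc}). Thus every $u \in N$ has the form $u = a\chi_-$ with $a \in \R$, and by real-bilinearity of the pairing $\langle H_\phi u, u\rangle = a^2 \langle H_\phi \chi_-, \chi_-\rangle < 0$ whenever $a \neq 0$.

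There is no deep obstacle once \eqref{eqn:H_varphi} is in hand; the computation is direct, and the work is entirely in the functional-analytic bookkeeping, which I would dispatch at the outset. Specifically, I must confirm that $\chi_- \in X = H^1(\R)$ and that $k_{12}$ is well-defined. Since $\chi_{11}$ decays exponentially (being an eigenfunction of $\widetilde{L}_{11}$ below the essential spectrum) and $\varphi$ decays exponentially, the defining integral in $\tilde u_2$ converges as $y \to \pm\infty$, so $\tilde u_2$ is bounded and $\chi_{12} = \varphi\,\tilde u_2$, together with its derivative, decays exponentially, placing $\chi_-$ in $H^1$. Finally, \eqref{eqn:chi12_orth_phi} is an affine equation in $k_{12}$ whose leading coefficient is $\langle \varphi, \varphi\rangle = \norm{\varphi}_{L^2}^2 \neq 0$, hence uniquely solvable.
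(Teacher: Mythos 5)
Your proposal is correct and takes essentially the same route as the paper's own proof: substitute $\chi_-$ into the quadratic-form identity \eqref{eqn:H_varphi}, observe that the construction \eqref{eqn:chi-_b} forces the nonnegative square term to vanish, and conclude $\langle H_\phi \chi_-, \chi_-\rangle = \langle \widetilde{L}_{11}\chi_{11},\chi_{11}\rangle = -\lambda_{11}^2 < 0$ using \eqref{eqn:L11_chi11}. You simply make explicit a few points the paper leaves implicit — the pointwise cancellation $\varphi\,(\varphi^{-1}\chi_{12})_y + \tfrac{\sigma}{\sigma+1}\varphi^{2\sigma}\chi_{11}=0$ (with the observation that $k_{12}$ is irrelevant to the sign), the real-scaling extension to all of $N\setminus\{0\}$, and the $H^1$ (rather than just $L^2$) membership of $\chi_{12}$ together with the solvability of \eqref{eqn:chi12_orth_phi} — all of which are sound.
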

      \begin{proof}
        The function $\chi_{12}$ is in $L^2$. Indeed, the integral in
        \eqref{eqn:chi-_b} is well defined since, as $|y|\rightarrow
        \infty$,
        \begin{gather*}
          \abs{\varphi(y)} \lesssim \exp\set{- \sqrt{\omega - c^2/4} \abs{y}},\\
          \abs{\chi_{11}(y)}\lesssim \exp\set{-\sqrt{\omega - c^2/4 +
              \lambda_{11}^2}\abs{y}}.
        \end{gather*}
        Thus the integrand is bounded.  

        From \eqref{eqn:H_varphi} and
        \eqref{eqn:L11_chi11},
        \begin{equation*}
          \langle H_\phi \chi_{-}, \chi_{-} \rangle =  \langle \widetilde{L}_{11} \chi_{11}, \chi_{11} \rangle =-\lambda_{11}^2 <0.
        \end{equation*}
      \end{proof}

      \subsection{The Kernel}

      In this subsection, we give an explicit characterization of the
      kernel of $H_\phi$.
      \begin{proposition}
        \label{prop:Z}
        Let
        \begin{equation}
          \label{eqn:Z_def}
          Z = {\rm span} \set{\chi_1, \chi_2}
        \end{equation}
        where
        \begin{subequations}
          \label{eqn:chi1&chi2}
          \begin{align}
            \chi_{1} &= \left(\varphi_y + i(k_2-\frac{1}{2\sigma+2}\varphi^{2\sigma} ) \varphi\right)e^{i\theta},\\
            \chi_{2} &= i\varphi e^{i\theta}
          \end{align}
        \end{subequations}
        with $k_2$ is a real constant such that
        \begin{equation}
          \label{eqn:def_k2}
          \left\langle \left(k_2 -\frac{1}{2\sigma+2} \varphi^{2\sigma} \right)\varphi, \varphi \right\rangle = 0.
        \end{equation}

        Then $Z = \ker H_{\phi}$.
      \end{proposition}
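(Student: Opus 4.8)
The plan is to establish the two inclusions $Z \subseteq \ker H_\phi$ and $\ker H_\phi \subseteq Z$ separately. For the first I would avoid direct computation and instead exploit the symmetries of \eqref{eqn:gDNLS}. Since the action $E + \omega Q + cP$ is invariant under the phase rotation $\phi \mapsto e^{is}\phi$ and the translation $\phi \mapsto \phi(\cdot - s)$, differentiating the identity \eqref{eqn:E'+Q1'+Q2'} along these one-parameter groups yields $H_\phi(i\phi) = 0$ and $H_\phi(\partial_y\phi) = 0$. A short computation using $\theta_y = \tfrac{c}{2} - \tfrac{1}{2\sigma+2}\varphi^{2\sigma}$ identifies $\chi_2 = i\phi$ and shows $\chi_1 = \partial_y\phi + (k_2 - \tfrac{c}{2})\chi_2$, so both $\chi_1$ and $\chi_2$ lie in the kernel; they are linearly independent because only $\chi_1$ has a nonzero real part after dephasing, namely $\varphi_y$. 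The constant $k_2$ fixed by \eqref{eqn:def_k2} is merely a convenient normalization of this basis.

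The substance of the proof is the reverse inclusion, that the kernel is at most two dimensional. I would take $u = e^{i\theta}(u_1 + iu_2) \in \ker H_\phi$ and, using Lemma \ref{cor:H}, translate $H_\phi u = 0$ into the coupled system $L_{11}u_1 + L_{21}u_2 = 0$ and $L_{12}u_1 + L_{22}u_2 = 0$. The key first step is the imaginary equation. Writing $\tilde u_2 = \varphi^{-1}u_2$ and using $L_{22}\varphi = 0$ from \eqref{eqn:L22=0}, the operator $L_{22}$ acts as $L_{22}(\varphi\tilde u_2) = -\varphi^{-1}(\varphi^2\tilde u_{2y})_y$, while $L_{12}$ can be written in divergence form $L_{12}u_1 = -\tfrac{\sigma}{\sigma+1}\varphi^{-1}(\varphi^{2\sigma+1}u_1)_y$. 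The imaginary equation therefore becomes a total derivative, $\bigl(\varphi^2\tilde u_{2y} + \tfrac{\sigma}{\sigma+1}\varphi^{2\sigma+1}u_1\bigr)_y = 0$, which integrates to the pointwise relation
\[
\varphi\tilde u_{2y} + \tfrac{\sigma}{\sigma+1}\varphi^{2\sigma}u_1 = 0,
\]
precisely the vanishing of the perfect square appearing in \eqref{eqn:H_varphi}.

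Feeding this relation into the real equation collapses the system. Using the divergence form $L_{21}u_2 = \tfrac{\sigma}{\sigma+1}\varphi^{2\sigma}\,(\varphi\tilde u_{2y})$ together with the relation above, the off-diagonal contribution becomes $-\tfrac{\sigma^2}{(\sigma+1)^2}\varphi^{4\sigma}u_1$, and the identity \eqref{e:L11block} turns the real equation into $\widetilde{L}_{11}u_1 = 0$. Lemma \ref{l:spec_L11} then forces $u_1 = a\varphi_y$ for a constant $a$; substituting back and integrating the relation for $\tilde u_{2y}$ gives $u_2 = \bigl(b - \tfrac{a}{2\sigma+2}\varphi^{2\sigma}\bigr)\varphi$ for a second constant $b$. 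Comparing with the components of $\chi_1$ and $\chi_2$ shows that $(u_1,u_2)$ lies in the two dimensional span defining $Z$, so $\dim \ker H_\phi = 2$ and $\ker H_\phi = Z$.

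The main obstacle is justifying that the integration constant above vanishes, which is where the decay of kernel elements enters. I would argue that since every potential coefficient in the block operators \eqref{e:dephased_blockops} is a power of $\varphi$ and hence exponentially localized, the system $H_\phi u = 0$ is an exponentially small perturbation of the constant-coefficient system $(-\partial_{yy} + \omega - \tfrac{c^2}{4})u_i = 0$; standard asymptotic ODE theory then guarantees that any $H^2$ solution, together with its first derivative, decays like $e^{-\sqrt{\omega - c^2/4}\,\abs{y}}$. With this decay the boundary values of $\varphi^2\tilde u_{2y} + \tfrac{\sigma}{\sigma+1}\varphi^{2\sigma+1}u_1$ vanish at $\pm\infty$, forcing the constant of integration to be zero and legitimizing the reduction to $\widetilde{L}_{11}u_1 = 0$.
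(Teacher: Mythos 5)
Your proposal is correct, and it differs from the paper's proof at both stages. For the inclusion $Z \subseteq \ker H_\phi$, the paper verifies directly, in the block form \eqref{e:Hdephased}, that $L_{11}\varphi_y + L_{21}\bigl((k_2-\tfrac{1}{2\sigma+2}\varphi^{2\sigma})\varphi\bigr)=0$ and $L_{12}\varphi_y + L_{22}\bigl((k_2-\tfrac{1}{2\sigma+2}\varphi^{2\sigma})\varphi\bigr)=0$, whereas you obtain $H_\phi(i\phi)=H_\phi(\partial_y\phi)=0$ by differentiating $S'(e^{is}\phi(\cdot-h))=0$, $S=E+\omega Q+cP$, along the symmetry groups and then observe $\chi_2=i\phi$ and $\chi_1=\partial_y\phi+(k_2-\tfrac{c}{2})\chi_2$; this is cleaner and explains where the kernel comes from. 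For the reverse inclusion, the paper argues abstractly: $H_\phi f=0$ is a second-order system for two real functions with a four-dimensional solution space, of which only two solutions decay as $y\to-\infty$, so the localized kernel is at most two-dimensional. You instead integrate the system explicitly: the divergence forms $L_{12}u_1=-\tfrac{\sigma}{\sigma+1}\varphi^{-1}(\varphi^{2\sigma+1}u_1)_y$ and $L_{22}(\varphi\tilde u_2)=-\varphi^{-1}(\varphi^2\tilde u_{2y})_y$ (valid since $L_{22}\varphi=0$, \eqref{eqn:L22=0}) are correct, the imaginary equation integrates to the vanishing of exactly the perfect square in \eqref{eqn:H_varphi}, and with \eqref{e:L11block} the real equation collapses to $\widetilde{L}_{11}u_1=0$, so Lemma \ref{l:spec_L11} gives $u_1=a\varphi_y$ and back-substitution gives $u_2=(b-\tfrac{a}{2\sigma+2}\varphi^{2\sigma})\varphi$, i.e.\ $u=a\chi_1+(b-ak_2)\chi_2$. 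This buys a constructive parametrization of the kernel that recycles the Sturm--Liouville analysis already established, and it makes transparent why the kernel saturates the quadratic-form identity \eqref{eqn:H_varphi}; the paper's route is shorter but leans on an exponential-dichotomy assertion for the coupled system that it does not detail. One simplification: your ``main obstacle'' is lighter than you think --- since kernel elements lie in $H^2(\R)$, both $u_i$ and $u_{iy}$ vanish at infinity, and $\varphi^2\tilde u_{2y}+\tfrac{\sigma}{\sigma+1}\varphi^{2\sigma+1}u_1 = \varphi u_{2y}-\varphi_y u_2+\tfrac{\sigma}{\sigma+1}\varphi^{2\sigma+1}u_1 \to 0$ already because $\varphi,\varphi_y\to 0$, so the integration constant vanishes without invoking asymptotic ODE theory for exponential decay of $u$.
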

      \begin{proof}
        We first prove that $\chi_1$ and $\chi_2$ are linearly
        independent elements of the kernel, and then show that the
        kernel is at most two dimensional.  Applying $H_\phi$ (in the
        form \eqref{e:Hdephased}) to $\chi_2$ and using that $L_{21}
        \varphi =0$ and \eqref{eqn:L22=0}, we get $H_\phi \chi_2 =0$.
        For $\chi_1$, we compute
        \begin{equation*}
          \begin{split}
            &L_{11} \varphi_y + L_{21}
            (k_2-\frac{1}{2\sigma+2}\varphi^{2\sigma}
            ) \varphi\\
            & \quad = \widetilde{L}_{11}\varphi_y +
            \frac{\sigma^2}{(\sigma+1)^2}\varphi^{4\sigma} \varphi_y +
            k_2 L_{21}
            \varphi - \frac{1}{2\sigma +2}L_{21} \varphi^{2\sigma+1}\\
            &\quad = \frac{\sigma^2}{(\sigma+1)^2}\varphi^{4\sigma}
            \varphi_y - \frac{1}{2\sigma
              +2}\frac{2\sigma^2}{\sigma+1}\varphi^{4\sigma}\varphi_y=0
          \end{split}
        \end{equation*}
        and
        \begin{equation*}
          \begin{split}
            &L_{12} \varphi_y + L_{22}
            (k_2-\frac{1}{2\sigma+2}\varphi^{2\sigma}
            ) \varphi\\
            & \quad = -\frac{2\sigma^2+\sigma}{\sigma+1}
            \varphi^{2\sigma-1}\varphi_y^2 -
            \frac{\sigma}{\sigma+1}\varphi^{2\sigma} \varphi_{yy} \\
            &\quad \quad - \frac{1}{2\sigma +2}\bracket{-2\sigma
              (2\sigma+1) \varphi^{2\sigma-1} \varphi_y^2 - 2\sigma
              \varphi^{2\sigma} \varphi_{yy}+ \varphi^{2\sigma} L_{22}
              \varphi }=0.
          \end{split}
        \end{equation*}
        Thus, $Z \subset \ker H_\phi$, and the kernel is at least two
        dimensional.

        We now show that it is exactly two dimensional.  If we
        consider the problem
        \[
        H_\phi f =0,
        \]
        as a second order system of two real valued functions, we know
        there are four linearly independent solutions.  As $y \to
        -\infty$, two of these solutions decay exponentially, while
        two grow exponentially.  Thus, there are at most two linearly
        independent solutions which are spatially localized.  Hence, $Z
        = \ker H_\phi$.

      \end{proof}

      \subsection{The Positive Subspace and Proof of the Spectral
        Decomposition}

      We define the subspace $P$ and prove Theorem
      \ref{thm:Spectral_H}.  For that, we need the following lemmas
      about $\widetilde{L}_{11}$ and $L_{22}$.
      \begin{lemma}
        \label{l:L11}
        For any real function $f\in H^1(\mathbb{R})$ satisfying the
        orthogonality conditions
        \begin{equation}
          \label{eqn:f_x11}
          \langle f, \varphi_y \rangle = \langle f, \chi_{11} \rangle =0,
        \end{equation}
        there exists a positive number $\delta_{11}>0$, such that
        \begin{equation}
          \langle \widetilde{L}_{11} f, f \rangle \geq \delta_{11} \| f\|^2_{H^1}.
        \end{equation}
      \end{lemma}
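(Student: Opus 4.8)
The plan is to upgrade the $L^2$ coercivity already implicit in Lemma~\ref{l:spec_L11} to the $H^1$ coercivity asserted here. The orthogonality conditions \eqref{eqn:f_x11} place $f$ exactly in the spectral subspace of $\widetilde{L}_{11}$ on which the operator is bounded below by $\mu_{11}$: by the definition \eqref{e:L11bnd} of $\mu_{11}$, together with the strict positivity $\mu_{11}>0$ established in Lemma~\ref{l:spec_L11}, any $f\neq 0$ with $f\perp\varphi_y$ and $f\perp\chi_{11}$ satisfies
\[
\langle \widetilde{L}_{11} f, f\rangle \geq \mu_{11}\,\|f\|_{L^2}^2 .
\]
This already yields a lower bound, but only in the $L^2$ norm, whereas the conclusion requires control of $\|f\|_{H^1}^2=\|f\|_{L^2}^2+\|f_y\|_{L^2}^2$. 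The remaining work is to recover the gradient term.

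To that end I would use the explicit form of the operator. Because $\varphi$ is bounded and exponentially localized, the potential
\[
V \equiv \frac{c(2\sigma+1)}{2}\varphi^{2\sigma}-\frac{8\sigma^2+6\sigma+1}{4(\sigma+1)^2}\varphi^{4\sigma}
\]
is bounded, and the quadratic form splits as
\[
\langle \widetilde{L}_{11} f, f\rangle = \|f_y\|_{L^2}^2 + \paren{\omega-\tfrac{c^2}{4}}\|f\|_{L^2}^2 + \langle V f, f\rangle,
\]
with $\abs{\langle V f, f\rangle}\leq \|V\|_{L^\infty}\|f\|_{L^2}^2$.

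Finally I would interpolate between the two estimates. For a parameter $\theta\in(0,1)$ to be fixed, write $\langle \widetilde{L}_{11} f, f\rangle=\theta\langle \widetilde{L}_{11} f, f\rangle+(1-\theta)\langle \widetilde{L}_{11} f, f\rangle$, bound the first copy below by $\theta\mu_{11}\|f\|_{L^2}^2$ using the spectral estimate, expand the second copy with the quadratic-form identity, discard the nonnegative $(\omega-\tfrac{c^2}{4})$ term, and absorb $V$, to obtain
\[
\langle \widetilde{L}_{11} f, f\rangle \geq (1-\theta)\|f_y\|_{L^2}^2 + \bracket{\theta\mu_{11}-(1-\theta)\|V\|_{L^\infty}}\|f\|_{L^2}^2 .
\]
Choosing $\theta$ close enough to $1$ that $\theta\mu_{11}>(1-\theta)\|V\|_{L^\infty}$ makes both coefficients strictly positive, and the lemma follows with $\delta_{11}=\min\set{\,1-\theta,\ \theta\mu_{11}-(1-\theta)\|V\|_{L^\infty}\,}>0$. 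The substantive input is entirely the strict positivity of $\mu_{11}$ furnished by Lemma~\ref{l:spec_L11} (which itself rests on the Sturm--Liouville count identifying $0$ and $-\lambda_{11}^2$ as the only non-positive eigenvalues); granting that, there is no genuine obstacle here, and the only point needing care is the choice of $\theta$, which is harmless since $V$ is a fixed bounded potential independent of $f$.
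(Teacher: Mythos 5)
Your proposal is correct and follows essentially the same route as the paper: both start from the $L^2$ lower bound $\langle \widetilde{L}_{11} f, f\rangle \geq \mu_{11}\|f\|_{L^2}^2$ furnished by Lemma~\ref{l:spec_L11}, write $\widetilde{L}_{11}=-\partial_{yy}+V_1$ with a bounded potential, and absorb the potential term to recover $\|\partial_y f\|_{L^2}^2$. Your convex-combination bookkeeping with $\theta$ is just a repackaging of the paper's step of substituting $\|f\|_{L^2}^2 \leq \mu_{11}^{-1}\langle \widetilde{L}_{11} f, f\rangle$ into the potential estimate, so the two arguments are the same in substance.
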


      \begin{proof}

        From Lemma \ref{l:spec_L11},
        \eqref{e:L11bnd} holds on the subspace orthogonal to
        $\varphi_y$ and $\chi_{11}$, so
        \[
        \langle \widetilde{L}_{11} f, f \rangle \geq \mu_{11} \|
        f\|^2_{L_2}.
        \]
        To get the $H^1$ lower bound, let
        \[
        V_1(y) = \omega -\frac{c^2}{4} +
        \frac{c(2\sigma+1)}{2}\varphi^{2\sigma}
        -\frac{8\sigma^2+6\sigma+1}{4(\sigma+1)^2}\varphi^{4\sigma},
        \]
        so that $\widetilde{L}_{11} = -\partial_{yy} + V_1$, with $\|
        V_1 \|_{L^{\infty}}<\infty.$ Thus,
        \begin{equation*}
          \begin{split}
            \langle \widetilde{L}_{11} f, f \rangle &= \langle -\partial_{yy} f, f \rangle + \langle V_1 f, f \rangle \\
            &\geq \langle -\partial_{yy} f, f \rangle - \|V_1\|_{L^\infty} \|f\|_{L^2}^2 \\
            &= \|\partial_y f\|_{L^2}^2 -
            \frac{1}{\mu_{11}}\|V_1\|_{L^\infty}\langle
            \widetilde{L}_{11} f, f \rangle.
          \end{split}
        \end{equation*}
        It follows that
        \begin{equation*}
          \langle \widetilde{L}_{11} f, f \rangle \geq \frac{1}{1+\mu_{11}^{-1} \|V_1\|_{L^\infty}} \|\partial_y f\|_{L^2}^2
        \end{equation*}
        Taking $\delta_{11}$ sufficiently small, we have
        \begin{equation*}
          \langle \widetilde{L}_{11} f, f \rangle \geq \delta_{11} \| f\|^2_{H^1}.
        \end{equation*}
      \end{proof}

      \begin{lemma}
        \label{l:L22}
        For any real function $f\in H^1(\mathbb{R})$ satisfying
        \begin{equation}
          \langle f, \varphi \rangle =0,
        \end{equation}
        there exists a positive number $\delta_{22}>0$, such that
        \begin{equation}
          \label{eqn:L22_delta22}
          \langle L_{22} f, f \rangle \geq \delta_{22} \| f\|^2_{H^1}.
        \end{equation}
      \end{lemma}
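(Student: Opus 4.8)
The plan is to mirror the proof of Lemma \ref{l:L11}, exploiting the fact that $L_{22}$ is a \emph{nonnegative} operator whose kernel is exactly the span of the positive ground state $\varphi$. Writing $L_{22} = -\partial_{yy} + V_2$ with
\[
V_2(y) \equiv \omega - \frac{c^2}{4} + \frac{c}{2}\varphi^{2\sigma} - \frac{2\sigma+1}{4(\sigma+1)^2}\varphi^{4\sigma},
\]
I would first record the spectral picture. Since $\varphi$ decays exponentially, $V_2 \to \omega - c^2/4$ at infinity and $L_{22}$ is a relatively compact perturbation of $-\partial_{yy} + \omega - c^2/4$; by Weyl's theorem $\sigma_{\rm ess}(L_{22}) = [\omega - c^2/4, \infty)$, whose left endpoint is strictly positive by admissibility. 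Moreover, \eqref{eqn:L22=0} gives $L_{22}\varphi = 0$, and the conjugation identity \eqref{eqn:B2bb} shows that $\langle L_{22} g, g\rangle = \| \varphi\, \partial_y(\varphi^{-1}g)\|_{L^2}^2 \geq 0$ for every $g$, with equality precisely when $g$ is a constant multiple of $\varphi$. Hence $L_{22}\geq 0$, its lowest eigenvalue is $0$, and $\ker L_{22} = {\rm span}\{\varphi\}$.

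Next I would establish an $L^2$ spectral gap above the kernel by setting
\[
\mu_{22} \equiv \inf_{g\neq 0,\, g\perp\varphi} \frac{\langle L_{22} g, g\rangle}{\langle g, g\rangle}
\]
and arguing, exactly as for $\mu_{11}$ in Lemma \ref{l:spec_L11}, that $\mu_{22} > 0$: were it zero, the infimum would correspond to a nontrivial eigenvalue $\le 0$ of $L_{22}$ orthogonal to $\varphi$, contradicting the simplicity of the ground-state eigenvalue $0$ established above. Equivalently, $\mu_{22}$ is either a discrete eigenvalue lying in the gap $(0, \omega - c^2/4)$ or the base $\omega - c^2/4$ of the essential spectrum, and it is positive in either case. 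This yields $\langle L_{22} f, f\rangle \geq \mu_{22}\|f\|_{L^2}^2$ for every real $f\in H^1$ with $\langle f,\varphi\rangle = 0$, which is the hypothesis of the lemma.

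Finally I would upgrade this $L^2$ coercivity to the stated $H^1$ bound by the same device used at the end of Lemma \ref{l:L11}. Since $\|V_2\|_{L^\infty}<\infty$,
\[
\langle L_{22} f, f\rangle = \|\partial_y f\|_{L^2}^2 + \langle V_2 f, f\rangle \geq \|\partial_y f\|_{L^2}^2 - \|V_2\|_{L^\infty}\|f\|_{L^2}^2,
\]
and combining with $\|f\|_{L^2}^2 \le \mu_{22}^{-1}\langle L_{22}f,f\rangle$ gives $\langle L_{22}f,f\rangle \geq (1 + \mu_{22}^{-1}\|V_2\|_{L^\infty})^{-1}\|\partial_y f\|_{L^2}^2$. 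Taking a convex combination of this gradient bound with the $L^2$ bound produces $\langle L_{22}f,f\rangle \geq \delta_{22}\|f\|_{H^1}^2$ for $\delta_{22}$ chosen small enough. The only nonroutine point is the strict positivity of $\mu_{22}$ in the second step; this is the direct analogue of the corresponding argument for $\widetilde{L}_{11}$ and reduces to the simplicity of the ground-state eigenvalue $0$, while everything else is the standard coercivity-upgrade computation.
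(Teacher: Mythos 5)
Your proposal is correct and follows the paper's three-step skeleton exactly --- Weyl's theorem giving $\sigma_{\rm ess}(L_{22}) = [\omega - c^2/4, \infty)$, an $L^2$ spectral gap $\mu_{22}>0$ on $\{\varphi\}^{\perp}$, and the same coercivity upgrade to $H^1$ as in Lemma \ref{l:L11} --- but you justify the central spectral fact by a different device. The paper invokes Sturm--Liouville theory: since $L_{22}\varphi=0$ with $\varphi$ strictly positive, the zero-eigenvalue eigenfunction is nodeless and hence the ground state, so $0$ is the lowest eigenvalue. You instead extract this from the factorization identity \eqref{eqn:B2bb}, $\langle L_{22}g,g\rangle=\|\varphi\,\partial_y(\varphi^{-1}g)\|_{L^2}^2$, which simultaneously gives $L_{22}\geq 0$ and $\ker L_{22}={\rm span}\{\varphi\}$, since equality forces $\varphi^{-1}g$ to be constant. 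This is somewhat more self-contained: it recycles a computation the paper already performed en route to \eqref{eqn:H_varphi}, avoids appealing to oscillation theory, and yields simplicity of the kernel directly, which in turn makes your justification of $\mu_{22}>0$ (via min--max below the essential spectrum) slightly sharper than the paper's one-line assertion. The modest cost is that the integration by parts underlying \eqref{eqn:B2bb} must be valid for all $g\in H^1$ (boundary terms vanish by the exponential decay of $\varphi$ together with a density argument), an implicit step the paper also takes without comment. The remainder --- the bound $\langle L_{22}f,f\rangle\geq\mu_{22}\|f\|_{L^2}^2$ under the hypothesis $\langle f,\varphi\rangle=0$ and the convex combination producing $\delta_{22}\|f\|_{H^1}^2$ --- matches the paper's argument verbatim.
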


\begin{proof}
  As was the case for $\widetilde{L}_{11}$, $L_{22}$ is a relatively
  compact perturbation of $-\partial_y^2 + \omega - c^2/4$, so it also
  has
  \begin{equation*}
    \sigma_{\rm ess}(L_{22}) = \left [\omega-\tfrac{c^2}{4},\infty\right).
  \end{equation*}
  Thus, all points in the spectrum below $\omega -c^2/4$ correspond to
  discrete eigenvalues.  From \eqref{eqn:L22=0} and $\varphi$ is
  strictly positive, Sturm-Liouville theory tells us that zero is the
  lowest eigenvalue.  Let
  \begin{equation}
    \mu_{22} \equiv \inf_{f\neq 0, f\perp \varphi} \frac{\inner{L_{22} f}{f}}{\inner{f}{f}}.
  \end{equation}
  We know that $\mu_{22} >0$, otherwise this would contradict with $0$
  being the lowest eigenvalue.  Therefore
  \[
  \inner{L_{22} f}{f} \geq \mu_{22} \norm{f}_{L^2}.
  \]
  Using the same argument as in Lemma \ref{l:L11}, we obtain
  \eqref{eqn:L22_delta22}.
\end{proof}
We now prove Theorem \ref{thm:Spectral_H}.
\begin{proof}
  Recall $N, Z$ as defined define by \eqref{eqn:N_def} and
  \eqref{eqn:Z_def}.  We define $P$ as
  \begin{equation}
    \label{eqn:P_def}
    \begin{split}
      P=\{ p \in X\mid \langle \Re( e^{-i\theta}p), \chi_{11} \rangle
      = \langle \Re( e^{-i\theta}p), \varphi_y \rangle =\langle \Im
      (e^{-i\theta}p), \varphi \rangle =0\}.
    \end{split}
  \end{equation}
  We express $u\in X$ as
  \begin{equation*}
    u=a_1 \chi_{-} +(b_1\chi_1 +b_2 \chi_2)+p,
  \end{equation*}
  where
  \[
  a_1 = \langle u_1, \chi_{11} \rangle,\quad b_1 = \frac{\langle u_1,
    \varphi_y \rangle}{\| \varphi_y \|^2_{L^2}},\quad b_2 =
  \frac{\langle u_2, \varphi \rangle}{\| \varphi \|_{L^2}^2},
  \]
  with $u_1$ and $u_2$ are real and imaginary part of $e^{-i\theta}u$.
  Clearly, $a_1 \chi_{-} \in N$ and $b_1\chi_1 + b_2 \chi_2 \in Z$. It
  suffices to show $p \in P$. We write $p=(p_1 +i p_2)e^{i\theta}$  with $p_1$ and $p_2$ real.  Since $\varphi_y$ is odd and $\chi_{11}$
  is even,
  $ \langle \varphi_y, \chi_{11} \rangle =0,$ and we readily check
  that $\langle p_1 , \chi_{11} \rangle= \langle p_1, \varphi_y
  \rangle =0$.
  Furthermore, by \eqref{eqn:chi12_orth_phi} and \eqref{eqn:def_k2},  we also have $\langle p_2, \varphi \rangle =0$.  Thus, $p \in P$,
  and $u$ is indeed decomposed into elements of $N$, $Z$ and $P$.

  Finally, we show that $H_\phi$ is positive on $P$.  Let $\tilde p_2
  = \varphi^{-1} p_2$. By \eqref{eqn:H_varphi},
  \begin{equation}
    \label{eqn:H_p_2}
    \langle H_\phi p, p \rangle
    = \langle \widetilde{L}_{11} p_1, p_1 \rangle + \int_{-\infty}^\infty (\varphi \partial_y \tilde p_2  + \frac{\sigma}{\sigma+1} \varphi^{2\sigma} p_1)^2 dy.
  \end{equation}
  Lemma \ref{l:L11} gives the desired lower bound on the first
  term. For the second term, we break it into two cases, depending on
  how $\norm{\varphi \partial_y \tilde p_2}_{L^2}$ and
  $\norm{p_1}_{L^2}$ compare.  Let
  \begin{equation}
    C_\sigma \equiv \frac{2\sigma}{\sigma +1}\norm{\varphi}_{L^\infty}^{2\sigma}
  \end{equation}
  \begin{enumerate}[(a)]
  \item If $\| \varphi \partial_y \tilde p_2 \|_{L^2} \geq C_\sigma\|
    p_1 \|_{L^2}$, we estimate the second term in \eqref{eqn:H_p_2} as
    follows,
    \begin{equation*}
      \begin{split}
        \norm{\varphi \partial_y \tilde p_2 + \frac{\sigma}{\sigma+1}
          \varphi^{2\sigma} p_1}_{L^2}&
        \geq \norm{\varphi \partial_y \tilde p_2}_{L^2} -
        \frac{\sigma}{\sigma +1}\norm{\varphi}_{L^\infty}^{2\sigma}
        \norm{p_1}_{L^2}\\
        & = \norm{\varphi \partial_y \tilde p_2}_{L^2} -
        \frac{1}{2}C_{\sigma} \norm{p_1}_{L^2} \geq
        \frac{1}{2}\norm{\varphi \partial_y \tilde p_2}_{L^2}
      \end{split}
    \end{equation*}
    By \eqref{eqn:B2bb}, we then have
    \[
    \norm{\varphi \partial_y \tilde p_2 + \frac{\sigma}{\sigma+1}
      \varphi^{2\sigma} p_1}_{L^2}^2\geq
    \frac{1}{4}\inner{L_{22}p_2}{p_2}
    \]
    By Lemmas \ref{l:L11} and \ref{l:L22}, we get
    \begin{equation*}
      \begin{split}
        \inner{H_\phi p}{p} \geq \inner{\widetilde{L}_{11} p_1}{p_1} +
        \frac{1}{4}\inner{L_{22}p_2}{p_2}
        \geq \delta_{a} \norm{p}_{H^1}^2,
      \end{split}
    \end{equation*}
    for some small enough $\delta_a$.

  \item If instead, $\| \varphi \partial_y \tilde p_2 \|_{L^2} <
    C_\sigma \| p_1 \|_{L^2},$ then,
    \begin{equation}
      \begin{split}
        \langle H_\phi p, p \rangle &\geq \inner{\widetilde{L}_{11}
          p_1}{p_1}
        \geq \frac{\delta_{11}}{2} \| p_1 \|_{H^1}^2 +\frac{\delta_{11}}{2} \| p_1 \|_{L^2}^2\\
        & \geq \frac{\delta_{11}}{2} \| p_1 \|_{H^1}^2 + \frac{\delta_{11}}{2C_\sigma^2} \| \varphi \partial_y \tilde p_2 \|_{L^2}^2\\
        & = \frac{\delta_{11}}{2} \| p_1 \|_{H^1}^2 +
        \frac{\delta_{11}}{2C_\sigma^2} \langle L_{22} p_2, p_2
        \rangle
        \geq \delta_b \norm{p}_{H^1}^2.
      \end{split}
    \end{equation}
  \end{enumerate}
  Taking the smaller value of $\delta_a$ and $\delta_b$ as $\delta$,
  we have
  \begin{equation}
    \begin{split}
      \langle H_\phi p, p \rangle \geq \delta \|p \|_{H^1}^2.
    \end{split}
  \end{equation}
  It follows that $N$, $Z$ and $P$ have trivial intersection amongst
  one another.  Hence $X = N+Z+P.$
\end{proof}

\section{Analysis of the Hessian Matrix}
\label{sec:concavity_d}
In this section, we compute the number of the positive eigenvalues of
the Hessian matrix of $d(\omega,c)$, $p(d''(\omega,c))$.  Since the
number of negative eigenvalues of $H_{\phi_{\omega,c}}$ is in all
cases equal to one, $p(d''(\omega,c))$ will determine whether or not
the soliton is stable.

To make this assessment, we examine the determinant and the trace of
$d''(\omega,c)$.  From Lemmas \ref{lemma:d} and \ref{le:QcQw_PcPw},
the determinant can be expressed as
\begin{equation}
  \label{e:detHessd}
  \begin{split}
    \det[d''(\omega,c)]&=\partial_\omega Q \partial_c P -\partial_c Q \partial_\omega P\\
    &=
    2^{-\frac{2}{\sigma}-4}\sigma^{-2}(1+\sigma)^{\frac{2}{\sigma}}(4\omega-c^2)^{\frac{2}{\sigma}-1}  \omega^{-\frac{1}{\sigma}-2}\\
    &\quad\times\left[4(\sigma-1)\omega \alpha_0  -2\sqrt{\omega}c\alpha_0 +(4\omega-c^2)\alpha_1\right]\\
    &\quad\times\left[4(\sigma-1)\omega \alpha_0
      +2\sqrt{\omega}c\alpha_0 -(4\omega-c^2)\alpha_1\right],
  \end{split}
\end{equation}
where
\begin{align*}
  \alpha_n(\omega,c;\sigma)&\equiv \int_0^\infty h
  ^{-\frac{1}{\sigma}-n} dx >0,\\
  h(x;\sigma;\omega,c) &\equiv \cosh(\sigma\sqrt{4\omega - c^2}x) -
  \tfrac{c}{2\sqrt{\omega}}.
\end{align*}

Meanwhile, the trace is
\begin{equation}
  \label{e:trHessd}
  \begin{split}
    \tr[d''(\omega,c)]&=\partial_\omega Q+\partial_c P\\
    &=2^{-\frac{1}{\sigma}-2} \sigma^{-1}
    (1+\sigma)^{\frac{1}{\sigma}} (4\omega-c^2)^{\frac{1}{\sigma}-1}
    (1+\omega)
    \omega^{-\frac{1}{2\sigma}-\frac{3}{2}}\\
    &\quad\quad\times(c(c^2-4\omega)\alpha_1+2\sqrt{\omega}(c^2-4(\sigma-1)\omega)\alpha_0).
  \end{split}
\end{equation}

\begin{theorem}
  \label{thm:Hessian_d''}
  If $\sigma\geq2$, and $4\omega>c^2$, $p(d''(\omega,c))=0$.
\end{theorem}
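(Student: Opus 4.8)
The plan is to determine the number of positive eigenvalues of the symmetric $2\times2$ matrix $d''(\omega,c)$ (symmetric because it is a Hessian, so $\partial_c Q=\partial_\omega P$) purely from the signs of its trace and determinant. Since the eigenvalues are real, it suffices to prove $\det[d''(\omega,c)]>0$ and $\tr[d''(\omega,c)]<0$, which forces two strictly negative eigenvalues and hence $p(d''(\omega,c))=0$. I would work directly from the closed forms \eqref{e:detHessd} and \eqref{e:trHessd}. Because $4\omega>c^2$, every scalar prefactor appearing there (the powers of $2$, $\sigma$, $1+\sigma$, $4\omega-c^2$, $1+\omega$, $\omega$, and $\alpha_0,\alpha_1$) is strictly positive, so $\det$ has the sign of the product of its two bracketed factors and $\tr$ has the sign of its single bracket. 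Denote the two determinant factors by $F_+\equiv 4(\sigma-1)\omega\alpha_0-2\sqrt{\omega}c\,\alpha_0+(4\omega-c^2)\alpha_1$ and $F_-\equiv 4(\sigma-1)\omega\alpha_0+2\sqrt{\omega}c\,\alpha_0-(4\omega-c^2)\alpha_1$.

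First I would dispose of $F_+$, which is elementary: for $\sigma\ge2$ we have $4(\sigma-1)\omega\ge4\omega$, so the $\alpha_0$-coefficient satisfies $4(\sigma-1)\omega-2\sqrt{\omega}c\ge 2\sqrt{\omega}(2\sqrt{\omega}-c)>0$ using $|c|<2\sqrt{\omega}$, and since $(4\omega-c^2)\alpha_1>0$ we conclude $F_+>0$. The substance of the proof is the positivity of $F_-$, where the negative term $-(4\omega-c^2)\alpha_1$ must be controlled. The key estimate is the pointwise bound $h(x)\ge h(0)=1-\tfrac{c}{2\sqrt{\omega}}>0$, valid because $\cosh\ge1$ and $|c|<2\sqrt{\omega}$, which gives the strict integral inequality $\alpha_1<(1-\tfrac{c}{2\sqrt{\omega}})^{-1}\alpha_0$, i.e. after clearing denominators and factoring $4\omega-c^2=(2\sqrt{\omega}-c)(2\sqrt{\omega}+c)$, the bound $(4\omega-c^2)\alpha_1<(4\omega+2\sqrt{\omega}c)\alpha_0$. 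Feeding this into $F_-$ cancels the $2\sqrt{\omega}c\,\alpha_0$ terms and yields $F_->4\omega(\sigma-2)\alpha_0\ge0$ for $\sigma\ge2$; the strictness of the integral bound supplies strict positivity even at the endpoint $\sigma=2$. Hence $\det[d''(\omega,c)]>0$.

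For the trace I would avoid a fresh computation by noting the algebraic identity $c(c^2-4\omega)\alpha_1+2\sqrt{\omega}(c^2-4(\sigma-1)\omega)\alpha_0=-\tfrac12 F_+(2\sqrt{\omega}+c)-\tfrac12 F_-(2\sqrt{\omega}-c)$, verified by expanding $F_\pm$. Since $F_\pm>0$ and $2\sqrt{\omega}\pm c>0$ (again from $|c|<2\sqrt{\omega}$), the right-hand side is strictly negative, so $\tr[d''(\omega,c)]<0$. With $\det>0$ and $\tr<0$ the two eigenvalues are real, nonzero, and of the same negative sign, whence $p(d''(\omega,c))=0$; this simultaneously shows that $d$ is non-degenerate, as required for the application of Theorem \ref{thm:GSS}.

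The only genuinely analytic step, and the one I expect to be the main obstacle, is the comparison $\alpha_1<(1-\tfrac{c}{2\sqrt{\omega}})^{-1}\alpha_0$ between the improper integrals $\alpha_0=\int_0^\infty h^{-1/\sigma}\,dx$ and $\alpha_1=\int_0^\infty h^{-1/\sigma-1}\,dx$; everything else is sign bookkeeping in the prefactors and elementary algebra with $F_+$ and $F_-$. It is worth emphasizing that the crude monotonicity bound $h\ge h(0)$ is exactly strong enough: it reduces $F_-$ to the single margin $4\omega(\sigma-2)\alpha_0$, which is nonnegative precisely on the claimed range $\sigma\ge2$. This indicates that the threshold $\sigma=2$ is sharp for this line of argument, consistent with the stability transition described in the introduction.
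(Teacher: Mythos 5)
Your proof is correct, and its determinant half is essentially the paper's own argument: you split $\det[d''(\omega,c)]$ into the same two bracketed factors $F_\pm$, dispose of $F_+$ by an elementary sign count, and control $F_-$ using exactly the pointwise bound $h(x)\geq h(0)=1-\frac{c}{2\sqrt{\omega}}$, which is the inequality the paper implements when it writes $(4\omega+2\sqrt{\omega}c)\alpha_0-(4\omega-c^2)\alpha_1=4\omega(1+\frac{c}{2\sqrt{\omega}})\int_0^\infty h^{-\frac{1}{\sigma}-1}(\cosh(\sigma\sqrt{4\omega-c^2}\,x)-1)\,dx>0$ after invoking $4(\sigma-1)\omega\geq 4\omega$; your $\alpha_1<(1-\frac{c}{2\sqrt{\omega}})^{-1}\alpha_0$ is the same estimate in integrated form. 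Where you genuinely depart from the paper is the trace. The paper's treatment is a one-liner: $c^2-4(\sigma-1)\omega\leq c^2-4\omega<0$ for $\sigma\geq 2$, hence $\tr[d'']<0$ by \eqref{e:trHessd}. That observation is only transparently conclusive when $c\geq 0$; for $c<0$ the term $c(c^2-4\omega)\alpha_1$ in the bracket is strictly positive, and defeating it requires the same integral comparison a second time (or a continuity patch: $\det>0$ keeps the trace nonvanishing on the connected admissible region, and it is negative at $c=0$). Your identity
\[
c(c^2-4\omega)\alpha_1+2\sqrt{\omega}\,\bigl(c^2-4(\sigma-1)\omega\bigr)\alpha_0
=-\tfrac{1}{2}\,(2\sqrt{\omega}+c)\,F_+-\tfrac{1}{2}\,(2\sqrt{\omega}-c)\,F_-,
\]
which checks out by direct expansion, settles all admissible $c$ at once by recycling the positivity of $F_\pm$ already established for the determinant, so your trace argument is both more economical and more complete than the paper's terse version. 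Your closing remark that $\det>0$ simultaneously gives the non-degeneracy of $d''$ required by Theorem \ref{thm:GSS} is also correct and consistent with how the paper uses this theorem in Section \ref{sec:Orbital_stabi}.
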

\begin{proof}
  We examine the terms appearing in \eqref{e:detHessd}.  The first
  term is clearly positive.  The second term is also positive,
  \begin{equation*}
    \begin{split}
      &4(\sigma-1)\omega \alpha_0 -2\sqrt{\omega}c\alpha_0 +(4\omega-c^2)\alpha_1 \\
      =&4\omega\left[(\sigma-1-\frac{c}{2\sqrt{\omega}})\alpha_0+(1-\frac{c^2}{4\omega})\alpha_1\right]>0.
    \end{split}
  \end{equation*}
  For the third term,
  \begin{equation*}
    \begin{split}
      &4(\sigma-1)\omega \alpha_0
      +2\sqrt{\omega}c\alpha_0 -(4\omega-c^2)\alpha_1\\
      \geq&(4\omega+2\sqrt{\omega}c)\alpha_0 -(4\omega-c^2)\alpha_1\\
      =&4\omega(1+\frac{c}{2\sqrt{\omega}})\int_0^\infty
      h^{-\frac{1}{\sigma}-1}(\cosh(\sigma\sqrt{4\omega-c^2}x)-1)
      dx>0.
    \end{split}
  \end{equation*}
  Thus $ \det[d''(\omega,c)]>0$, implying the eigenvalues of
  $d''(\omega,c)$ have the same sign.  Turning to the trace,
  $c^2-4(\sigma-1)\omega\leq c^2-4\omega<0$ for $\sigma\geq 2$.  By
  \eqref{e:trHessd}, $ \tr[d''(\omega,c)]<0.$ Hence, the two
  eigenvalues of $d''(\omega,c)$ are negative.
\end{proof}

Closely related to $\det[d'']$ is the function
\begin{equation}
  \label{e:Fsigz}
  \begin{split}
    F(z;\sigma) \equiv & (\sigma-1)^2 \bracket{\int_0^\infty (\cosh y -z)^{-\frac{1}{\sigma}} dy}^2\\
    &- \bracket{\int_0^\infty (\cosh y -z)^{-\frac{1}{\sigma}-1}(z
      \cosh y -1)dy }^2,
  \end{split}
\end{equation}
which helps count the number of positive and negative eigenvalues for
$\sigma \in (0,2)$.  Indeed,
\begin{lemma}
  \label{thm:F_and_det}
  For $\sigma\in (0,2)$ and admissible $(\omega,c)$,
  $\det[d''(\omega,c)]$ has the same sign as
  $F(\frac{c}{2\sqrt{\omega}};\sigma)$.
\end{lemma}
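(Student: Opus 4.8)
The plan is to reduce everything to a single change of variables that turns the two bracketed factors of \eqref{e:detHessd} into the difference of squares defining $F$ in \eqref{e:Fsigz}.

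First I would read off from \eqref{e:detHessd} that $\det[d'']$ equals a strictly positive prefactor times the product of the two bracketed terms, and that this product is a difference of squares $X^2-Y^2$ with
\[
X = 4(\sigma-1)\omega\,\alpha_0, \qquad Y = 2\sqrt{\omega}\,c\,\alpha_0 - (4\omega-c^2)\alpha_1,
\]
since the two brackets are precisely $X-Y$ and $X+Y$. Thus it suffices to compare $X^2-Y^2$ with $F$.

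Next I would rescale the integrals $\alpha_n$. Substituting $y=\sigma\sqrt{4\omega-c^2}\,x$ gives $\alpha_n=(\sigma\sqrt{4\omega-c^2})^{-1}I_n$, where $I_n \equiv \int_0^\infty(\cosh y - z)^{-\frac{1}{\sigma}-n}\,dy$ and $z \equiv \frac{c}{2\sqrt{\omega}}$; admissibility forces $z^2<1$, so $\cosh y - z\ge 1-z>0$ and each $I_n$ converges. On the $F$ side, I would write $F=A^2-B^2$ with $A\equiv(\sigma-1)I_0$ and $B\equiv\int_0^\infty(\cosh y-z)^{-\frac{1}{\sigma}-1}(z\cosh y-1)\,dy$, and then use the algebraic identity $z\cosh y-1=z(\cosh y-z)+(z^2-1)$ to split the integrand and obtain the clean form $B=zI_0+(z^2-1)I_1$.

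The final step is pure bookkeeping with $c=2\sqrt{\omega}\,z$ and $4\omega-c^2=4\omega(1-z^2)$: one checks
\[
X=\frac{4\omega}{\sigma\sqrt{4\omega-c^2}}\,A, \qquad Y=\frac{4\omega}{\sigma\sqrt{4\omega-c^2}}\,B,
\]
whence $X^2-Y^2=\frac{16\omega^2}{\sigma^2(4\omega-c^2)}\,F$. The factor here is strictly positive, as is the prefactor in \eqref{e:detHessd}, so $\det[d'']$ and $F$ share the same sign. I expect the only delicate point to be matching the combination $zI_0+(z^2-1)I_1$ against $Y$ exactly --- that is, recognizing the hidden difference-of-squares structure of $F$ and carrying the $\sqrt{4\omega-c^2}$ factors correctly through the rescaling; everything else is elementary algebra together with the convergence check for the $I_n$.
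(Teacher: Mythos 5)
Your proposal is correct and follows essentially the same route as the paper's proof: both recognize the product of the two bracketed factors in \eqref{e:detHessd} as a difference of squares and perform the rescaling $y=\sigma\sqrt{4\omega-c^2}\,x$ to obtain $\det[d'']$ as a positive prefactor times $\frac{16\omega^2}{\sigma^2(4\omega-c^2)}F\left(\frac{c}{2\sqrt{\omega}};\sigma\right)$. The only cosmetic difference is the direction of the algebraic identity $z\cosh y-1=z(\cosh y-z)+(z^2-1)$: you expand the second integral in \eqref{e:Fsigz} to match $Y=2\sqrt{\omega}c\,\alpha_0-(4\omega-c^2)\alpha_1$, while the paper combines $\alpha_1(c^2-4\omega)+2\sqrt{\omega}c\,\alpha_0$ into a single integral with integrand $h^{-\frac{1}{\sigma}-1}\left(\frac{c}{2\sqrt{\omega}}\cosh(\sigma\sqrt{4\omega-c^2}x)-1\right)$ --- the same computation read in opposite directions.
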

\begin{proof} We rewrite \eqref{e:detHessd} as,
  \begin{equation*}
    \small
    \begin{split}
      &\frac{\det[d''(\omega,c)]} {2^{-\frac{2}{\sigma}-4} \sigma^{-2}(1+\sigma)^{\frac{2}{\sigma}} (4\omega-c^2)^{\frac{2}{\sigma}-1} \omega^{-\frac{1}{\sigma}-2} } \\
      =&16(\sigma-1)^2 \omega^2 \alpha_0^2 -
      \bracket{\alpha_1(c^2-4\omega)
        +2\sqrt{\omega}c\alpha_0}^2\\
      =&16 \omega^2 \set{(\sigma-1)^2 \alpha_0^2 -
        \bracket{\int_0^\infty
          h^{-\frac{1}{\sigma}-1}\paren{\tfrac{c}{2\sqrt{\omega}}
            \cosh(\sigma\sqrt{4\omega-c^2}x) -1}dx }^2 }.
    \end{split}
  \end{equation*}
  Letting $y = \sigma\sqrt{4\omega-c^2} x$,
  \begin{equation}
    \label{eqn:d&F}
    \begin{split}
      \frac{\det[d''(\omega,c)]}{2^{-\frac{2}{\sigma}-4}\sigma^{-2}(1+\sigma)^{\frac{2}{\sigma}}(4\omega-c^2)^{\frac{2}{\sigma}-1}
        \omega^{-\frac{1}{\sigma}-2}}
      = \frac{16 \omega^2}{\sigma^2(4\omega-c^2)}
      F\left (\frac{c}{2\sqrt{\omega}};\sigma\right).
    \end{split}
  \end{equation}
\end{proof}

When $\sigma=1$ and $z\in (-1,1)$, we have $F(z;\sigma)=-1$ and
$\det[d''(\omega,c)]=-1/\omega$.  For $\sigma \in (0,1)\cup(1,2)$,
we can evaluate the function $F(z;\sigma)$ numerically, as shown in Figures
\ref{f:Fcurves} and \ref{f:Fcurves01}. For any fixed $\sigma \in
(1,2)$, $F(z;\sigma)$ is monotonically increasing in $z$ and has
exactly one root $z_0$ in the interval $(-1,1)$.  For fixed
$\sigma\in(0,1)$, $F(z;\sigma)$ is monotonically decreasing in $z$ and
strictly negative.  It is this numerical computation of $F$ which is
used to complete the proofs of Theorems \ref{thm:main2} and
\ref{thm:main3}.  In contrast, for $\sigma\geq 2$, we can prove that
$F(z;\sigma)$ is strictly positive without resorting to computation.

\begin{figure}
  \subfigure[]{\includegraphics[width=10cm]{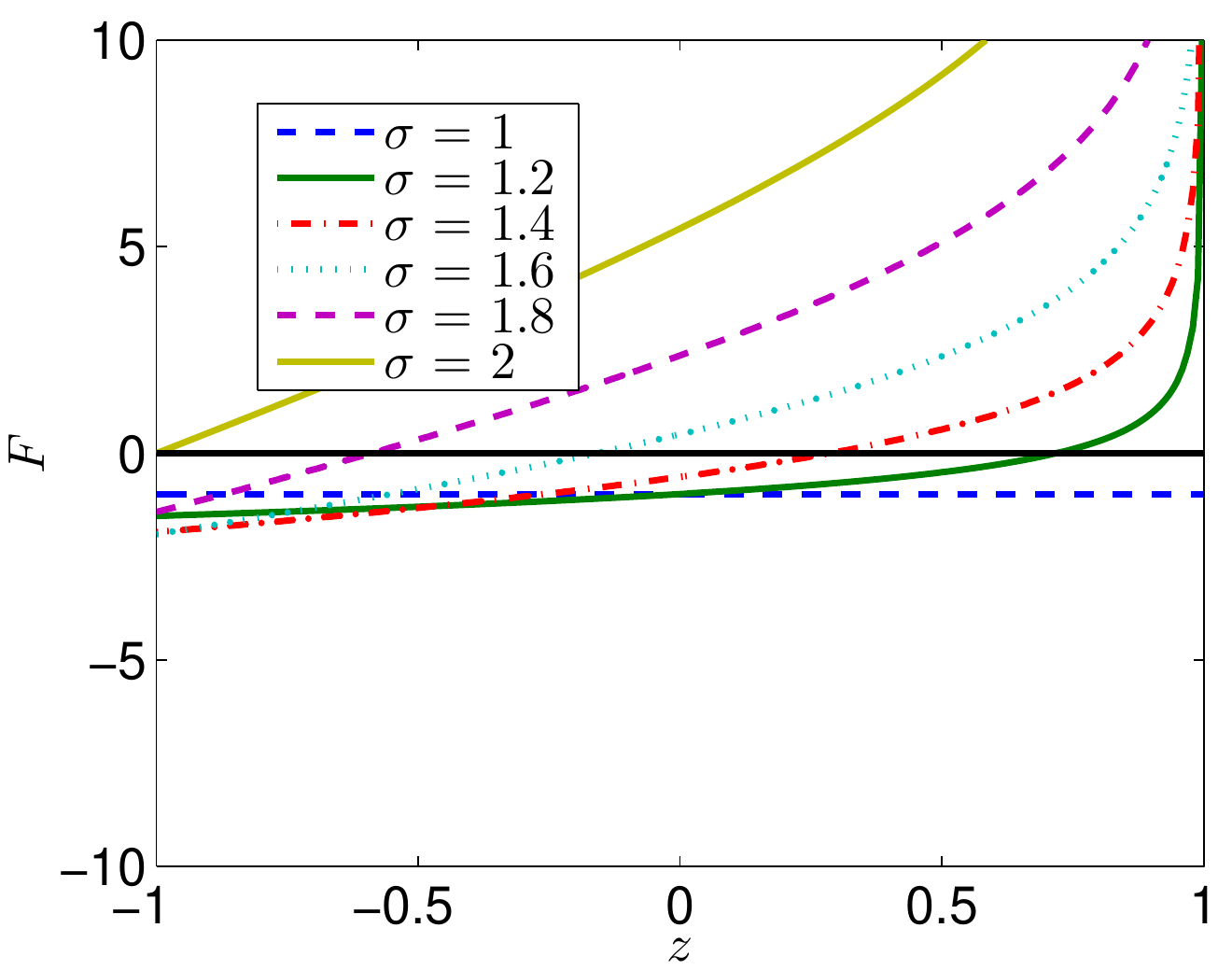}}
  \subfigure[]{\includegraphics[width=10cm]{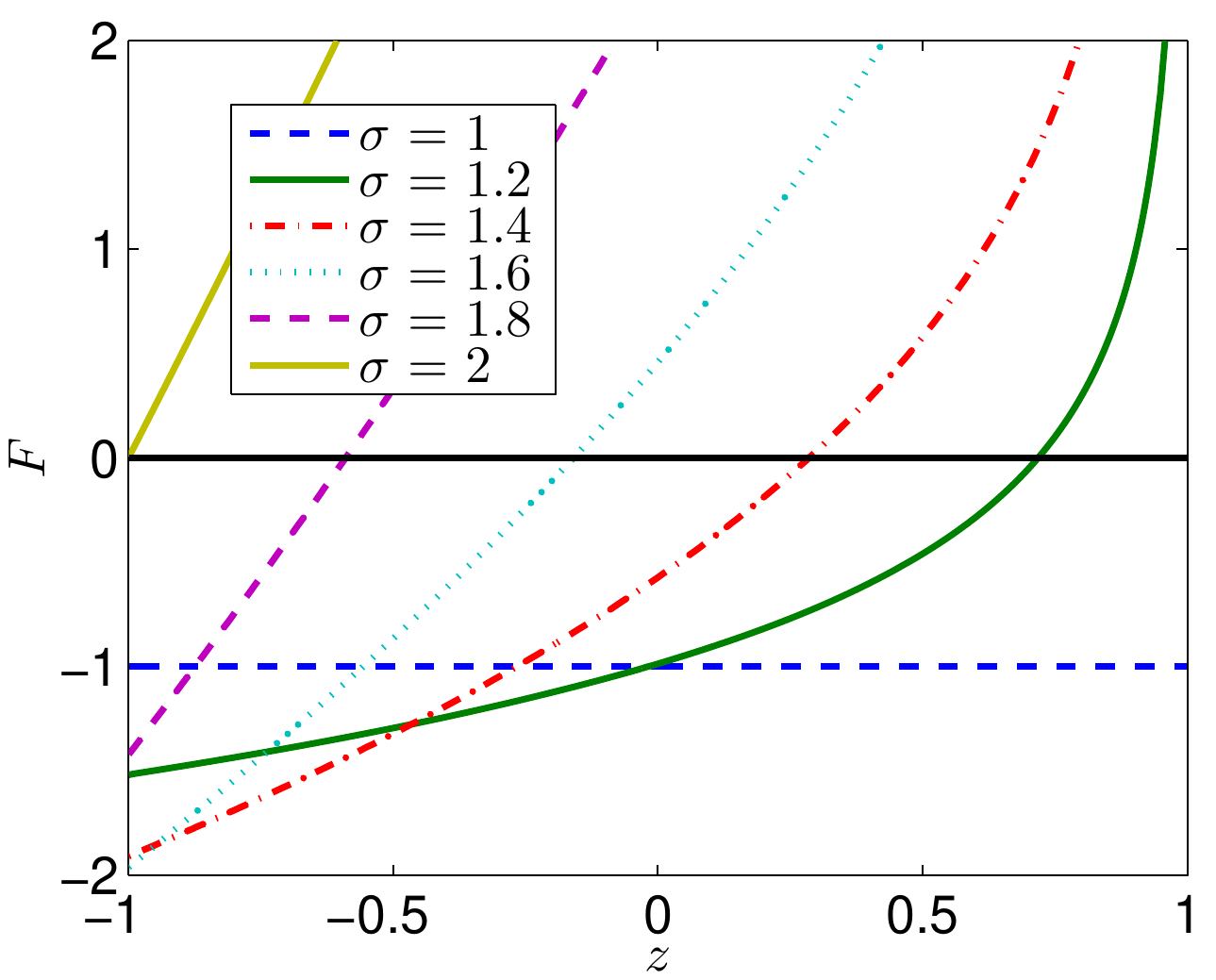}}
  \caption{(a) Function $F(z;\sigma)$, \eqref{e:Fsigz}, for several
    values of $\sigma\in [1,2]$. (b) is a magnified plot near the
    $z$-axis.}
  \label{f:Fcurves}
\end{figure}

\begin{figure}
  \subfigure[]{\includegraphics[width=10cm]{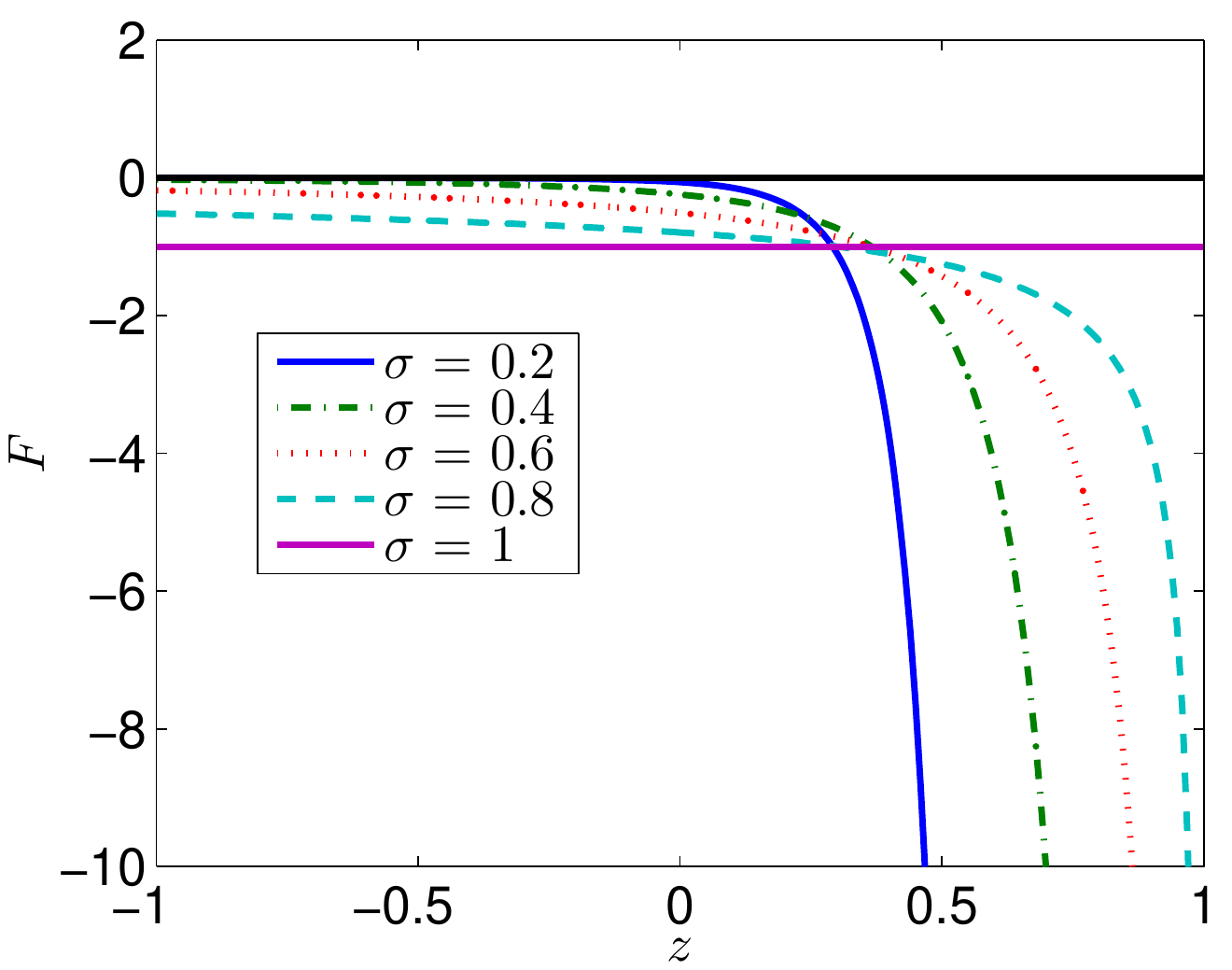}}
  \subfigure[]{\includegraphics[width=10cm]{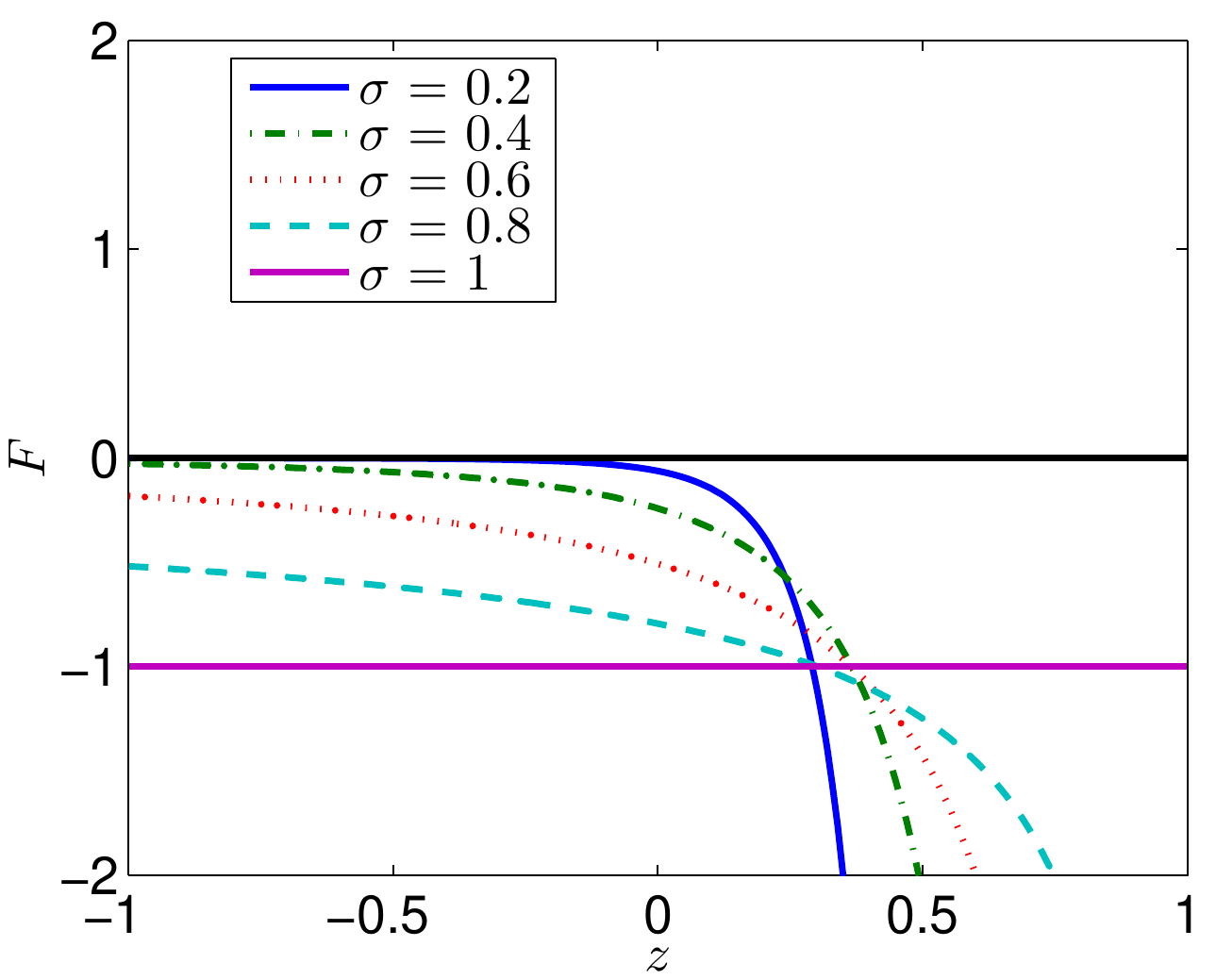}}
  \caption{(a) Function $F(z;\sigma)$, \eqref{e:Fsigz}, for several
    values of $\sigma\in (0,1]$. (b) is a magnified plot near the
    $z$-axis.}
  \label{f:Fcurves01}
\end{figure}

We thus have the following theorem about $p(d''(\omega,c))$:
\begin{theorem}[Numerical]
  \label{thm:Hessian_d''_12}
  For admissible $(\omega,c)$,
  \begin{enumerate}[(i)]
  \item when $\sigma\in(1,2)$ and $ c = 2 z_0 \sqrt{\omega}$,
    $\det[d''(\omega,c)]=0$,
  \item when $\sigma\in(1,2)$ and $ c < 2 z_0 \sqrt{\omega}$,
    $\det[d''(\omega,c)]<0$; $p(d''(\omega,c))=1$,
  \item when $\sigma\in(1,2)$ and $ c > 2 z_0 \sqrt{\omega}$,
    $\det[d''(\omega,c)]>0$; $p(d''(\omega,c))=0$ or
    $p(d''(\omega,c))=2$.
  \item when $\sigma\in(0,1)$,$\det[d''(\omega,c)]<0$;
    $p(d''(\omega,c))=1$.
  \item when $\sigma =1$, $\det[d''(\omega,c)]=-1/\omega<0$;
    $p(d''(\omega,c))=1$.
  \end{enumerate}
\end{theorem}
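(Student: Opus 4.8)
The plan is to reduce the eigenvalue count entirely to the sign of $\det[d'']$, and then to read off that sign from the function $F(z;\sigma)$ of \eqref{e:Fsigz} via Lemma \ref{thm:F_and_det}. First I would note that $d''(\omega,c)$ is a real \emph{symmetric} $2\times2$ matrix: since $Q=\partial_\omega d$ and $P=\partial_c d$ by \eqref{eqn:d_w=Q} and \eqref{eqn:d_c=P}, equality of mixed partials gives $\partial_c Q = \partial_\omega P$, so the off-diagonal entries of \eqref{eqn:d''} coincide. For a real symmetric $2\times2$ matrix the two eigenvalues are real and their product equals $\det[d'']$, so the determinant alone controls the counting: if $\det[d'']<0$ the eigenvalues have opposite signs and $p(d'')=1$; if $\det[d'']>0$ they share a sign and $p(d'')\in\set{0,2}$; and if $\det[d'']=0$ the matrix is degenerate. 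This trichotomy is exactly the content of (i)--(v), so the whole theorem comes down to fixing the sign of $\det[d'']$ in each regime.

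Next I would invoke Lemma \ref{thm:F_and_det}, which asserts that for $\sigma\in(0,2)$ the sign of $\det[d''(\omega,c)]$ equals the sign of $F(z;\sigma)$ with $z=\tfrac{c}{2\sqrt\omega}$. The admissibility hypothesis $\omega>c^2/4$ forces $z\in(-1,1)$, precisely the interval on which the behavior of $F$ has been recorded. Feeding in those properties case by case completes the argument. For $\sigma\in(1,2)$, $F(\,\cdot\,;\sigma)$ is increasing on $(-1,1)$ with a unique root $z_0$, so $F(z;\sigma)$ is negative, zero, or positive according as $z<z_0$, $z=z_0$, or $z>z_0$; rewriting $z=\tfrac{c}{2\sqrt\omega}$ as a threshold on $c$ yields (i)--(iii), with $p(d'')=1$ in (ii) and $p(d'')\in\set{0,2}$ in (iii). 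For $\sigma\in(0,1)$, $F(\,\cdot\,;\sigma)<0$ throughout $(-1,1)$, so $\det[d'']<0$ and $p(d'')=1$, which is (iv). For $\sigma=1$ the explicit evaluation $\det[d'']=-1/\omega<0$ gives $p(d'')=1$, which is (v).

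In case (iii) the determinant is positive, and the linear-algebra trichotomy above yields only $p(d'')\in\set{0,2}$; since the statement claims nothing finer here, the proof is complete at that level. Distinguishing $0$ from $2$ would require the trace formula \eqref{e:trHessd}, but that refinement is deferred and is not needed for Theorem \ref{thm:Hessian_d''_12} as stated.

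The main obstacle, and the reason the theorem is flagged \emph{Numerical}, is establishing the claimed sign and monotonicity of $F(z;\sigma)$ on $(-1,1)$: specifically, the strict negativity for $\sigma\in(0,1)$ and the monotone increase together with the existence and uniqueness of the crossing $z_0$ for $\sigma\in(1,2)$. Because $F$ is built from improper integrals of transcendental functions with no closed form, these properties are certified by numerical evaluation (Figures \ref{f:Fcurves} and \ref{f:Fcurves01}) rather than by a closed-form computation. Every other step is elementary $2\times2$ spectral theory combined with the already-established Lemma \ref{thm:F_and_det}.
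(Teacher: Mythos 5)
Your proposal is correct and follows essentially the same route as the paper: Lemma \ref{thm:F_and_det} converts the sign of $\det[d''(\omega,c)]$ into the sign of $F\left(\tfrac{c}{2\sqrt{\omega}};\sigma\right)$, the numerically certified behavior of $F$ on $(-1,1)$ supplies the case analysis (with the explicit value $F=-1$, $\det[d'']=-1/\omega$ at $\sigma=1$), and elementary $2\times 2$ spectral theory for the real symmetric Hessian yields the eigenvalue counts. One minor aside: the paper later resolves the $p(d'')\in\set{0,2}$ ambiguity of case (iii) not via the trace formula \eqref{e:trHessd} but via the Grillakis--Shatah--Strauss bound $p(d'')\leq n(H_\phi)=1$ in the proof of Theorem \ref{thm:main2}, though, as you correctly note, no such refinement is needed for the theorem as stated.
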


\section{Orbital Stability and Instability}
\label{sec:Orbital_stabi}
In this section, we complete the stability/instability proofs.
\begin{proof}[Proof of Theorem \ref{thm:main}]
  From Theorem \ref{thm:Spectral_H} and Theorem \ref{thm:Hessian_d''},
  $n(H)=1$ for any $\sigma>0$, and $p(d'')=0$ for $\sigma \geq 2$.
  Thus $n(H)-p(d'')=1,$ is odd and all solitary waves are orbitally
  unstable by Theorem \ref{thm:GSS}.
\end{proof}

\begin{proof}[Proof of Theorem \ref{thm:main2}]
  By assumption, for each $\sigma \in (1,2)$, there exists $z_0$, a
  unique zero crossing of \eqref{e:Fsigz}, and this function is
  monotonically increasing.  By Theorem \ref{thm:Hessian_d''_12},
  $\det[d''(\omega,c)]<0$ for admissible $(\omega,c)$ satisfying $c<2
  z_0\sqrt{\omega} $. It follows that there is one positive and one
  negative eigenvalues for $d''(\omega,c)$. Therefore,
  $p(d'')=1$. Furthermore, from the theorem \ref{thm:Spectral_H},
  $n(H)=1$. Hence
  $$n(H)-p(d'')=0,$$
  and we have the orbital stability of solitary waves.

  When $c>2 z_0\sqrt{\omega} $, also by Theorem
  \ref{thm:Hessian_d''_12}, $\det[d''(\omega,c)]>0$. So the signs of
  the two eigenvalues of $d''(\omega,c)$ are the same. If both of the
  eigenvalues were positive, then $2 = p(d'') > n(H_\phi) =1$. This
  contradicts \eqref{eqn:p(d)}. Hence both of the eigenvalues are
  negative and $p(d'')=0$. It follows that
  $$n(H) -p(d'')=1,$$
  and we have the orbital instability of solitary waves.
\end{proof}

Following the same argument, we can prove Theorem \ref{thm:main3}.
\begin{proof}[Proof of Theorem \ref{thm:main3}]
  When $\sigma\in (0,1]$, from Theorem \ref{thm:Hessian_d''_12},
  $\det[d''(\omega,c)]<0$ for admissible $(\omega,c)$. Consequently,
  $d''(\omega,c)$ has one positive eigenvalue and one negative
  eigenvalue; $p(d'')=1$. By Theorem \ref{thm:Spectral_H},
  $n(H)=1$. Hence
  $$n(H)-p(d'')=0,$$
  and the solitary waves are orbital stable.
\end{proof}

\section{Discussion and Numerical Illustration}
\label{s:disc}

We have explored the stability and instability of solitons for a
generalized derivative nonlinear Schr\"odinger equation.  We have
found that for $\sigma \geq 2$, all solitons are orbital unstable.
Using a numerical calculation of the function $F(z;\sigma)$ defined in
\eqref{e:Fsigz}. we have also shown that for $0< \sigma \leq 1$, all
solitons are orbital stable.  For $1 < \sigma <2$, our computation of
$F(z;\sigma)$ indicates there exist both stable and unstable solitons,
depending on the values of $\omega$ and $c$. In particular, for fixed
$\omega$ and $\sigma>1$, there are always both stable and unstable
solitons for properly selected $c$.  For $\sigma$ near 1, the unstable
solitons are always rightward moving, but, as Figure \ref{f:Fcurves}
shows, the root, $z_0$, becomes negative as $\sigma$ approaches 2.
Once $z_0 < 0$, unstable solitons can be both rightward and leftward
moving.

Other dispersive PDEs possessing both stable and unstable solitons, such as NLS and KdV with
saturating nonlinearities, \cite{Sulem1999,
  Comech:2003ur,Comech:2007uk,Marzuola:2010wc,Ohta:2011dh}, achieve
this by introducing a nonlinearity that breaks scaling.  In contrast,
gDNLS always has a scaling symmetry, and throughout the regime $1 <
\sigma <2$, the scaling is $L^2$ supercritical.  This also implies the existence of an entire manifold of {\it
  critical} solitons, precisely when $c =2 z_0 \sqrt{\omega}$.  Along
this curve, the standard stability results of
\cite{Grillakis1990,Grillakis1987,Weinstein1985,Weinstein1986a}, break
down, and a more detailed analysis is required.  In
\cite{Grillakis1990}, the stability can be demonstrated in this
degenerate case provided $d(\omega,c)$ remains convex.  Given that
within any neighborhood of a critical soliton there exist unstable
solitons, we conjecture that it is unstable.  While there has been
recent work on critical one parameter solitons for NLS type equations,
\cite{Comech:2003ur,Comech:2007uk,Marzuola:2010wc,Ohta:2011dh}, to the
best our knowledge, there has not been an analogous work on two
parameter solitons.

While the equation retains the scaling symmetry, we observe that, in
contrast to NLS solitons, not all gDNLS solitons can be obtained from
scaling.  Indeed, for \eqref{e:NLS}, all solitons $e^{i\lambda t}
R({\bf x};\lambda)$, solving
\[
-\Delta R + \lambda R - \abs{R}^{2\sigma}R=0
\]
can be obtained from the $\lambda=1$ soliton via the transformation
\[
e^{i\lambda t} R({\bf x}; \lambda) = e^{i \lambda t}
\lambda^{\frac{1}{2\sigma}} R(\lambda^{\frac{1}{2}}{\bf x};1).
\]
In contrast, while the gDNLS solitons also inherit the scaling
symmetry of gDNLS, not all admissible $(\omega,c)$ can be scaled to a
particular soliton.  Instead,
\begin{equation}
  \psi_{\omega,c}(x,t) = e^{i\omega t}\phi_{\omega,c}(x-ct) = e^{i
    \omega t} \phi_{1, c/\sqrt{\omega}}(\sqrt{\omega}(x-ct)).
\end{equation}
Only solitons for which
\[
\frac{c_1}{\sqrt{\omega_1}} = \frac{c_2}{\sqrt{\omega_2}}
\]
can be scaled into one another.

Our results were based on the assumption that a weak solution existed.
While we do not have an $H^1$ theory in general, our results can, in
part, be made rigorous as follows.  For $\sigma \geq 2$, one should be
able to apply the technique of \cite{Tsutsumi1980} to obtain a local
solution in $H^s$, with $s>1$.  Alternatively, for $\sigma \geq 2$ and
integer valued, \cite{Kenig:1993wr,Kenig:1998wr} can be invoked.
Again, this yields a local solution in $H^s$, $s>1$.  For $s$
sufficiently large, the solution will also conserve the invariants.

This is sufficient to fully justify the instability of the unstable
solitons, since there is sufficient regularity such that if the
solution leaves a neighborhood of the soliton in $H^1$, it also leaves
in $H^s$, $s>1$.  However, this is insufficient to prove stability,
because even if the solution stays close in the $H^1$ norm, the norm
of the solution could grow in a higher Sobolev norm.

There is also the question of the monotonicity of $F$, for which we
relied on numerical computation for $\sigma <2$.  Looking at Figures
\ref{f:Fcurves} and \ref{f:Fcurves01}, it would appear that the
$F(z;\sigma =2)$ is an upper bound on $F(z;\sigma)$ for $1<\sigma <2$.
In addition, there appears to be a singularity at $z=1$.  Likewise,
the line $F=0$ appears to be an upper bound in the range $\sigma \leq
1$.  A more subtle analysis may permit a rigorous justification of our
work in this regime.

Lastly, we provide some numerical experiments of solitons on both the
stable and unstable branches.  We studied the stability near the
turning point $ c = 2 z_0 \sqrt{\omega}$. When $\sigma=1.5$, $z_0 =
0.0618303$. The initial condition is chosen as
\begin{equation}
  \label{e:sim_ic}
  \psi_0(x,0)=\psi_{\omega,c}(x,0)+ 0.0001 e^{-2
    x^2}.
\end{equation}
We simulate \eqref{eqn:gDNLS} using the fourth order exponential time
difference scheme of\cite{Kassam2005}, and treat the nonlinearity
pseudospectrally.  Though the nonlinearity is not polynomial in its
arguments, $\psi$, $\bar \psi$ and $\psi_x$,
\[
\abs{\psi}^{3} \psi_x = \psi \bar \psi \abs{\psi}\psi_x,
\]
we found that dealiasing as though it were a quintic problem proved
robust.

Our results are as follows:
\begin{enumerate}
\item When $\omega=1$ and $c =0 < 2z =0.1236606$, Figure
  \ref{fig:orb_stable} shows that the solitary wave retains its shape
  for a long time $(t=100)$.
\item When $\omega=1$ and $c =0.2 > 2z =0.1236606$, Figure
  \ref{fig:orb_instable} shows that the amplitude of the solitary wave
  increases rapidly near $t=10$ and it is not orbitally stable.
\end{enumerate}
Our simulation of the unstable soliton suggests that, rather than
disperse or converge to a stable soliton, gDNLS may result in a finite
time singularity.  We will explore the potential for singularity
formation in the forthcoming work \cite{Liu:fk}.

\begin{figure}
  \includegraphics[width=0.9\linewidth]{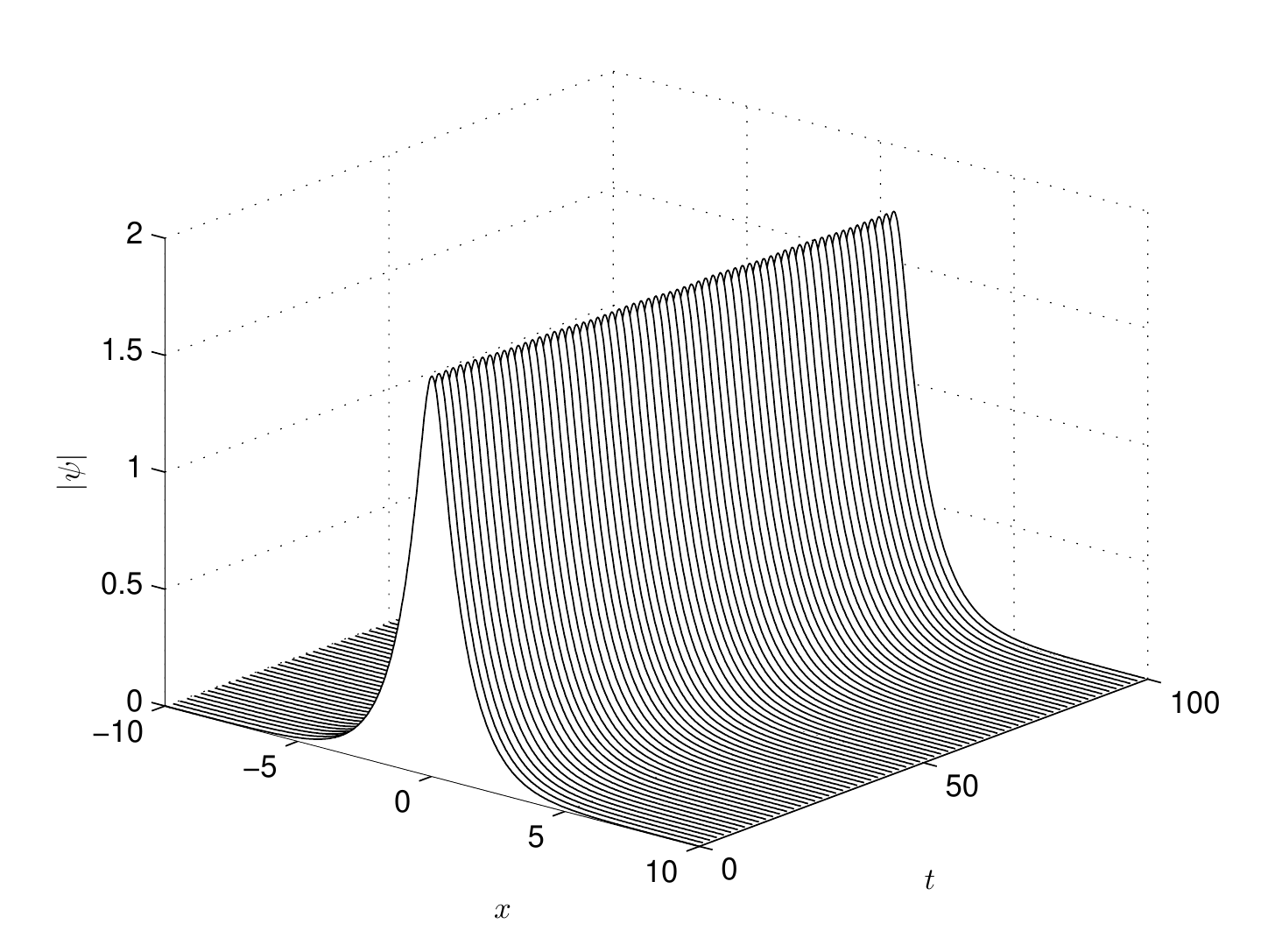}\\
  \caption{Evolution of a perturbed orbitally stable soliton, $\omega
    =1$ and $c=0$, with initial condition
    \eqref{e:sim_ic}. }\label{fig:orb_stable}
\end{figure}

\begin{figure}
  \includegraphics[width=0.9\linewidth]{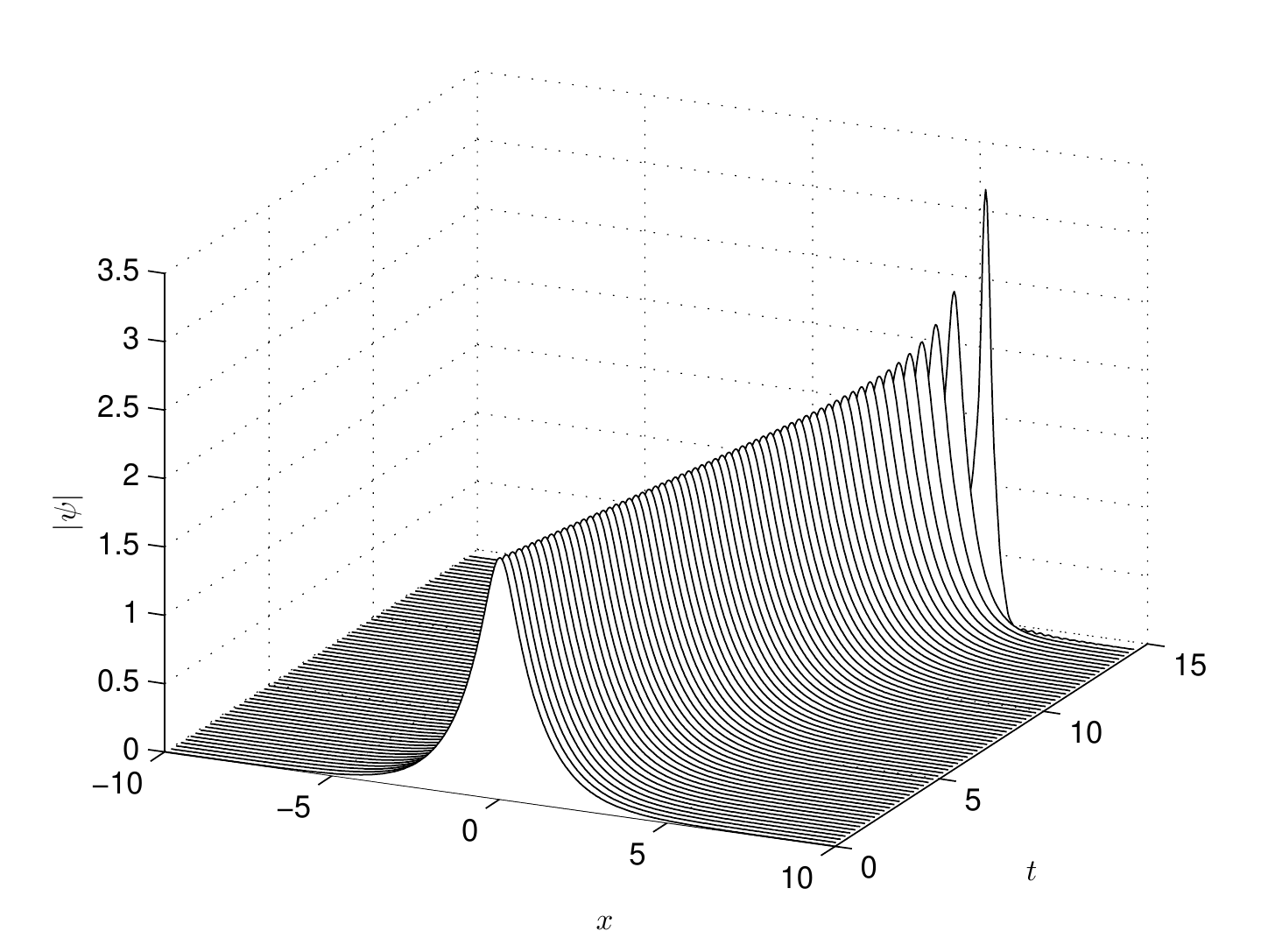}\\
  \caption{Evolution of a perturbed orbitally unstable soliton,
    $\omega =1$ and $c=.2$, with initial condition
    \eqref{e:sim_ic}.}\label{fig:orb_instable}
\end{figure}

\appendix

\section{Auxiliary Calculations}
In this section, we present certain integral relations that are
helpful in studying the determinant and trace of $d''(\omega,c)$.  In
the following, we denote $\kappa =\sqrt{4\omega-c^2} >0$ and rewrite
the solitary solution \eqref{eqn:def_phi} as $ \varphi(x)^{2\sigma} =
f(\omega,c) h(\omega,c;x)^{-1},$with
\begin{equation*}
  f(\omega,c)=\frac{(\sigma+1)\kappa^2 }{2\sqrt{\omega}} \ ,\;\;\;
  h(x;\sigma;\omega,c)=\cosh(\sigma\kappa x)-\frac{c}{2\sqrt{\omega}}.
\end{equation*}
We also rewrite the functionals $Q$, $P$ defined in
\eqref{eqn:Mass_dnls}-\eqref{eqn:Momentum_dnls} and their derivatives
in terms of $h$ and $f$:
\begin{align}
  \label{eqn:Q_phi}
  Q &= \frac{1}{2}\int_0^\infty|\varphi|^2 dx=f^{\frac{1}{\sigma}}\int_0^\infty h^{-\frac{1}{\sigma}} dx,\\
  \label{eqn:P_phi}
  \begin{split}
    P &=-\frac{c}{2}\int_0^\infty \varphi^2 dx+\frac{1}{2\sigma+2}\int_0^\infty\varphi^{2\sigma+2}dx\\
    &=-\frac{c}{2}f^{\frac{1}{\sigma}}\int_0^\infty
    h^{-\frac{1}{\sigma}}dx+\frac{1}{2\sigma+2}f^{\frac{\sigma+1}{\sigma}}\int_0^\infty
    h^{-\frac{\sigma+1}{\sigma}} dx.
  \end{split}
\end{align}
\begin{align}
  \label{eqn:Q_c}
  \partial_c Q
  &=\frac{1}{\sigma}f^{\frac{1-\sigma}{\sigma}}f_c\int_0^\infty
  h^{-\frac{1}{\sigma}}
  dx-\frac{1}{\sigma}f^{\frac{1}{\sigma}}\int_0^\infty
  h^{-\frac{\sigma+1}{\sigma}}h_c dx,\\
  \label{eqn:Q_omega}
  \partial_\omega Q&
  =\frac{1}{\sigma}f^{\frac{1-\sigma}{\sigma}}f_\omega\int_0^\infty
  h^{-\frac{1}{\sigma}}
  dx-\frac{1}{\sigma}f^{\frac{1}{\sigma}}\int_0^\infty
  h^{-\frac{\sigma+1}{\sigma}}h_\omega dx,
\end{align}
\begin{align}
  \label{eqn:P_c}
  \begin{split}
    \partial_c P =&-\frac{1}{2}f^{\frac{1}{\sigma}}\int_0^\infty
    h^{-\frac{1}{\sigma}} dx -\frac{c}{2
      \sigma}f^{\frac{1-\sigma}{\sigma}}f_c\int_0^\infty
    h^{-\frac{1}{\sigma}} dx\\
    &\quad + \frac{c}{2\sigma}f^{\frac{1}{\sigma}}\int_0^\infty
    h^{-\frac{\sigma+1}{\sigma}}h_c dx+
    \frac{1}{2\sigma}f^{\frac{1}{\sigma}}f_c\int_0^\infty
    h^{-\frac{1+\sigma}{\sigma}}dx\\
    &\quad -\frac{1}{2\sigma}f^{\frac{1+\sigma}{\sigma}}\int_0^\infty
    h^{-\frac{1+2\sigma}{\sigma}}h_c dx
  \end{split}\\
  \label{eqn:P_omega}
  \begin{split}
    \partial_\omega P=&-\frac{c}{2 \sigma}f^{\frac{1-\sigma}{\sigma}}f_\omega\int_0^\infty h^{-\frac{1}{\sigma}} dx+\frac{c}{2\sigma}f^{\frac{1}{\sigma}}\int_0^\infty h^{-\frac{\sigma+1}{\sigma}}h_\omega dx\\
    &\quad+ \frac{1}{2\sigma}f^{\frac{1}{\sigma}}f_\omega\int_0^\infty
    h^{-\frac{1+\sigma}{\sigma}}dx
    -\frac{1}{2\sigma}f^{\frac{1+\sigma}{\sigma}}\int_0^\infty
    h^{-\frac{1+2\sigma}{\sigma}}h_\omega dx,
  \end{split}
\end{align}
where
\begin{align*}
  f_c& = -\frac{c(1+\sigma)}{\sqrt{\omega}}, \;\;\;
  f_\omega=\frac{(1+\sigma)(4\omega+c^2)}{4\omega^{3/2}},\cr
  h_c&=-\frac{\sigma c}{\kappa} x\sinh(\sigma\kappa x)
  -\frac{1}{2}\omega^{-1/2},\; h_\omega=\frac{2\sigma}{\kappa}
  x\sinh(\sigma \kappa x) +\frac{c}{4}\omega^{-\frac{3}{2}}.
\end{align*}
The expressions in \eqref{eqn:Q_c}-\eqref{eqn:P_omega} involve various
integrals. The next lemmas show that all of them can be expressed
simply in terms of
\[
\alpha_n =\int_0^\infty h^{-\frac{1}{\sigma}-n} dx.
\]
First, we have
\begin{lemma}
  \label{le:integration_alpha_n}
  \begin{equation}
    \label{eqn:int_rel_5}
    \alpha_2 =
    \frac{4\omega}{(\sigma+1)\kappa^2} \alpha_0 +
    \frac{2c\sqrt{\omega}(2+\sigma)}{(\sigma+1) \kappa^2} \alpha_1
  \end{equation}
\end{lemma}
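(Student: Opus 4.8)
The plan is to derive the identity by integration by parts, exploiting that $\alpha_0,\alpha_1,\alpha_2$ are integrals of three consecutive powers of $h$. The key observation is that differentiating a product of the form $\sinh(\sigma\kappa x)\,h^{-q}$ produces exactly such consecutive powers, once one uses $h' = \sigma\kappa\sinh(\sigma\kappa x)$ together with the two algebraic relations $\cosh(\sigma\kappa x) = h + \tfrac{c}{2\sqrt\omega}$ and $\sinh^2(\sigma\kappa x) = \cosh^2(\sigma\kappa x) - 1$.

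First I would compute
\[
\frac{d}{dx}\bracket{\sinh(\sigma\kappa x)\,h^{-q}} = \sigma\kappa\cosh(\sigma\kappa x)\,h^{-q} - q\sigma\kappa\sinh^2(\sigma\kappa x)\,h^{-q-1}.
\]
Substituting $\cosh(\sigma\kappa x) = h + \tfrac{c}{2\sqrt\omega}$ into the first term and $\sinh^2(\sigma\kappa x) = (h + \tfrac{c}{2\sqrt\omega})^2 - 1 = h^2 + \tfrac{c}{\sqrt\omega}h + \tfrac{c^2}{4\omega} - 1$ into the second, the right-hand side becomes a linear combination of $h^{-q+1}$, $h^{-q}$ and $h^{-q-1}$, with coefficients that are explicit in $q$, $c$ and $\omega$.

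The clean choice is $q = \tfrac{1}{\sigma} + 1$, so that the three exponents $-q+1$, $-q$, $-q-1$ equal $-\tfrac{1}{\sigma}$, $-\tfrac{1}{\sigma}-1$, $-\tfrac{1}{\sigma}-2$; integrating over $(0,\infty)$ then produces precisely $\alpha_0$, $\alpha_1$, $\alpha_2$. The boundary term $\sinh(\sigma\kappa x)\,h^{-q}$ vanishes at $x=0$ because $\sinh 0 = 0$, and vanishes as $x\to\infty$ because $h \sim \tfrac12 e^{\sigma\kappa x}$ forces $\sinh(\sigma\kappa x)\,h^{-q} \sim \mathrm{const}\cdot e^{(1-q)\sigma\kappa x}$, which decays since $q>1$ for every $\sigma>0$. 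Hence the integral of the total derivative is zero, yielding a single linear relation among $\alpha_0,\alpha_1,\alpha_2$.

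It then remains to insert $q = \tfrac{1}{\sigma}+1$ and simplify using $\tfrac{c^2}{4\omega} - 1 = -\tfrac{\kappa^2}{4\omega}$; solving the resulting relation for $\alpha_2$ gives \eqref{eqn:int_rel_5}. The only point requiring care is the coefficient bookkeeping together with confirming that the boundary term decays, i.e. that $q = 1/\sigma + 1 > 1$. There is no genuine analytic obstacle, since all three integrals converge absolutely for $\sigma>0$ and the manipulation is purely a one-dimensional integration by parts.
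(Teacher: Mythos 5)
Your proof is correct and is essentially the paper's own argument: the paper starts from $\alpha_0 = \int_0^\infty h^{-\frac{1}{\sigma}-1}\,h\,dx$, replaces $\cosh(\sigma\kappa x)$ by $\tfrac{1}{\sigma\kappa}(\sinh(\sigma\kappa x))'$ and integrates by parts using the same two identities $\cosh(\sigma\kappa x)=h+\tfrac{c}{2\sqrt{\omega}}$ and $\sinh^2(\sigma\kappa x)=\cosh^2(\sigma\kappa x)-1$, which is exactly your observation that $\int_0^\infty \tfrac{d}{dx}\bigl[\sinh(\sigma\kappa x)\,h^{-\frac{1}{\sigma}-1}\bigr]dx=0$ with $q=\tfrac{1}{\sigma}+1$. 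Your coefficient bookkeeping checks out and reproduces \eqref{eqn:int_rel_5} exactly (note the paper's intermediate display has a typo, $\tfrac{\sigma+1}{\sigma^2}$ where it should read $\tfrac{\sigma+1}{\sigma}$, consistent with your $q$), and your explicit verification of the vanishing boundary term, which the paper passes over silently, is a welcome addition.
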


\begin{proof}
  We first rewrite $\alpha_0$, and then integrate by parts:
  \begin{equation*}
    \begin{split}
      \alpha_0 = \int_0^{\infty} h^{-\frac{1}{\sigma}-1} h
      dx&=\frac{1}{\sigma\kappa}\int_0^{\infty}
      h^{-\frac{1}{\sigma}-1}(\sinh(\sigma\kappa x))'dx- \alpha_1 \frac{c}{2\sqrt{\omega}}\\
      &=\frac{(\sigma+1)}{\sigma^2}\int_0^{\infty}
      h^{-\frac{1}{\sigma}-2}(\sinh^2(\sigma\kappa x))dx- \alpha_1 \frac{c}{2\sqrt{\omega}}\\
      &=\frac{(\sigma+1)}{\sigma^2}\int_0^{\infty}
      h^{-\frac{1}{\sigma}-2}((h+\frac{c}{2\sqrt{\omega}})^2-1)dx-
      \alpha_1 \frac{c}{2\sqrt{\omega}}.
    \end{split}
  \end{equation*}
  Regrouping the terms in this last expression, we obtain
  \eqref{eqn:int_rel_5}.
\end{proof}

\begin{lemma}
  \label{le:integration_relation}
  We have the following relations:
  \begin{eqnarray}
    \label{eqn:int_rel_1}
    &&   \int_0^\infty h^{-\frac{1}{\sigma}-2} h_c dx=
    -\frac{1}{2\sqrt{\omega}} \alpha_2
    - \frac{c\sigma}{(\sigma+1)\kappa^2} \alpha_1,\\
    && \label{eqn:int_rel_2}
    \int_0^\infty h^{-\frac{1}{\sigma}-1} h_c dx = -\frac{1}{2\sqrt{\omega}} \alpha_1 - \frac{c\sigma}{\kappa^2} \alpha_0,\\
    && \label{eqn:int_rel_3}
    \int_0^\infty h^{-\frac{1}{\sigma}-2} h_\omega dx = \frac{c}{4\omega^{3/2}} \alpha_2 + \frac{2\sigma}{(\sigma+1)\kappa^2} \alpha_1,\\
    &&    \label{eqn:int_rel_4}
    \int_0^\infty h^{-\frac{1}{\sigma}-1} h_\omega dx = \frac{c}{4\omega^{3/2}} \alpha_1 + \frac{2\sigma}{\kappa^2} \alpha_0.
  \end{eqnarray}
\end{lemma}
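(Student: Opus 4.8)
The plan is to deduce all four identities from a single integration-by-parts formula. The starting observation is that $h_c$ and $h_\omega$ share a common structure: each is a constant in $x$ plus a multiple of $x\sinh(\sigma\kappa x)$, namely
\begin{equation*}
  h_c = -\tfrac{1}{2}\omega^{-1/2} - \tfrac{\sigma c}{\kappa}\, x\sinh(\sigma\kappa x), \qquad
  h_\omega = \tfrac{c}{4}\omega^{-3/2} + \tfrac{2\sigma}{\kappa}\, x\sinh(\sigma\kappa x).
\end{equation*}
Multiplying by $h^{-1/\sigma-n}$ and integrating, the constant pieces contribute $\int_0^\infty h^{-1/\sigma-n}\,dx = \alpha_n$ directly, so they account for the first term on the right-hand side of each of \eqref{eqn:int_rel_1}--\eqref{eqn:int_rel_4}. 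All remaining content of the lemma is therefore contained in evaluating $\int_0^\infty h^{-1/\sigma-n}\,x\sinh(\sigma\kappa x)\,dx$ for $n=1,2$.

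To compute these, I would use $h_x = \sigma\kappa\sinh(\sigma\kappa x)$, so that $\sinh(\sigma\kappa x) = h_x/(\sigma\kappa)$ and the integrand becomes a weighted total derivative,
\begin{equation*}
  h^{-\frac{1}{\sigma}-n}\, x\sinh(\sigma\kappa x)
  = \frac{x}{\sigma\kappa}\, h^{-\frac{1}{\sigma}-n} h_x
  = \frac{x}{\sigma\kappa\,(1-\tfrac{1}{\sigma}-n)}\,\partial_x\!\paren{h^{-\frac{1}{\sigma}-(n-1)}}.
\end{equation*}
Integrating by parts in $x$, the boundary term $x\,h^{-1/\sigma-(n-1)}$ vanishes at both ends---at infinity because $h(x)^{-1/\sigma-(n-1)} \lesssim e^{-\kappa(1+(n-1)\sigma)x}$ decays exponentially (by the exponential growth of $\cosh(\sigma\kappa x)$, the same mechanism behind the decay estimate in Proposition \ref{prop:N}), and at the origin because of the factor $x$---leaving $-\alpha_{n-1}/\bigl(\sigma\kappa(1-\tfrac{1}{\sigma}-n)\bigr)$. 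Simplifying,
\begin{equation*}
  \int_0^\infty h^{-\frac{1}{\sigma}-n}\, x\sinh(\sigma\kappa x)\,dx = \frac{\alpha_{n-1}}{\kappa\,(1+(n-1)\sigma)},
\end{equation*}
which gives $\alpha_0/\kappa$ for $n=1$ and $\alpha_1/(\kappa(1+\sigma))$ for $n=2$.

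Substituting these two values, each multiplied by the appropriate coefficient $-\sigma c/\kappa$ or $2\sigma/\kappa$ coming from $h_c$ and $h_\omega$, and adding the constant-part contributions $-\tfrac{1}{2}\omega^{-1/2}\alpha_n$ and $\tfrac{c}{4}\omega^{-3/2}\alpha_n$, reproduces \eqref{eqn:int_rel_1}--\eqref{eqn:int_rel_4} verbatim; the factor $\kappa^2 = 4\omega-c^2$ in the denominators arises from pairing the $1/\kappa$ in the coefficients with the $1/\kappa$ in the integration-by-parts formula. I do not expect a genuine obstacle: the only point requiring care is the justification that the boundary terms vanish, which is immediate from the exponential decay of $h^{-1/\sigma}$, and the remainder is routine coefficient matching.
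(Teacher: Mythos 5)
Your proof is correct and follows essentially the same route as the paper's: both rest on the single integration by parts coming from $\sinh(\sigma\kappa x)=h_x/(\sigma\kappa)$, so that $h^{-\frac{1}{\sigma}-n}h_x$ is an exact derivative with exponentially vanishing boundary terms. The only cosmetic difference is that the paper integrates by parts directly in the $h_c$ integrals and then deduces \eqref{eqn:int_rel_3}--\eqref{eqn:int_rel_4} from the algebraic identity $h_\omega=-\tfrac{2}{c}h_c-\tfrac{\kappa^2}{4\omega^{3/2}c}$ (which implicitly assumes $c\neq 0$), whereas you evaluate the shared core integral $\int_0^\infty h^{-\frac{1}{\sigma}-n}x\sinh(\sigma\kappa x)\,dx=\alpha_{n-1}/\bigl(\kappa(1+(n-1)\sigma)\bigr)$ once and substitute it into both decompositions, which handles all admissible $c$ uniformly.
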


\begin{proof}
  By integration by parts, and $n$ integer,
  \begin{equation*}
    \int_0^\infty h^{-\frac{1}{\sigma}-n}h_c dx
    = \frac{ c}{\kappa^2 (-\frac{1}{\sigma}-n+1)}  \int_0^\infty
    h^{-\frac{1}{\sigma}-n+1} dx
    -\frac{1}{2\sqrt{\omega}}\int_0^\infty h^{-\frac{1}{\sigma}-n}
    dx,
  \end{equation*}
  Choosing $n=2,1$, we get \eqref{eqn:int_rel_1} and
  \eqref{eqn:int_rel_2}.
  The relations \eqref{eqn:int_rel_3} and \eqref{eqn:int_rel_4} are
  obtained from \eqref{eqn:int_rel_1}, \eqref{eqn:int_rel_2} and
  $\displaystyle{h_\omega =-\frac{2}{c}h_c
    -\frac{\kappa^2}{4\omega^{3/2}c}.}$
\end{proof}

Using Lemmas \ref{le:integration_alpha_n} and
\ref{le:integration_relation}, we have:
\begin{lemma}
  \label{le:QcQw_PcPw}
  Denoting $\tilde \kappa = 2^{-\frac{1}{\sigma}-2} \sigma^{-1}
  (1+\sigma)^{\frac{1}{\sigma}}\kappa^{2(\frac{1}{\sigma}-1)}
  \omega^{-\frac{1}{2\sigma}-\frac{1}{2}} $, we have
  \begin{align*}
    \partial_c Q &= 2 \tilde \kappa
    \left[2 c(\sigma-2)\omega^{1/2}\alpha_0+\kappa^2\alpha_1 \right]\\
    \partial_\omega Q &= \tilde \kappa \omega^{-1} \left[ \left(2 c^2
        -8 (\sigma-1)
        \omega\right)\omega^{1/2}\alpha_0 -\kappa^2 c \alpha_1 \right]\\
    \partial_c P &=\tilde \kappa \left[\left(2 c^2 -8 (\sigma-1)
        \omega\right)\omega^{1/2} \alpha_0 -
      \kappa^2 c \alpha_1\right]\\
    \partial_\omega P &= 2\tilde \kappa \left[2c(\sigma-2)
      \omega^{1/2} \alpha_0 +\kappa^2\alpha_1\right].
  \end{align*}

\end{lemma}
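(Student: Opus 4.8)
The plan is to reduce everything to the auxiliary integrals $\alpha_0$ and $\alpha_1$ by direct substitution. I would start from the four expressions \eqref{eqn:Q_c}--\eqref{eqn:P_omega}, which already express $\partial_c Q$, $\partial_\omega Q$, $\partial_c P$ and $\partial_\omega P$ in terms of $f$, its derivatives $f_c, f_\omega$, and integrals of the form $\int_0^\infty h^{-1/\sigma-n} h_c\,dx$ and $\int_0^\infty h^{-1/\sigma-n}h_\omega\,dx$ for $n=1,2$. Each of these integrals is evaluated in Lemma~\ref{le:integration_relation}, \eqref{eqn:int_rel_1}--\eqref{eqn:int_rel_4}, in terms of $\alpha_0, \alpha_1$ and, for $n=2$, $\alpha_2$. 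Since the derivatives of $P$ involve the $n=2$ integrals, $\alpha_2$ will appear; I would then invoke Lemma~\ref{le:integration_alpha_n}, \eqref{eqn:int_rel_5}, to rewrite $\alpha_2$ in terms of $\alpha_0$ and $\alpha_1$, leaving only these two after substitution.

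The key to producing the clean common prefactor $\tilde\kappa$ is to express the various powers of $f$ back in terms of $f^{1/\sigma}$. Using $f = (\sigma+1)\kappa^2/(2\sqrt\omega)$ together with the explicit $f_c, f_\omega$, one finds the two identities
\[
f^{(1-\sigma)/\sigma}f_c = -\frac{2c}{\kappa^2}\,f^{1/\sigma}, \qquad f^{(1-\sigma)/\sigma}f_\omega = \frac{4\omega+c^2}{2\omega\kappa^2}\,f^{1/\sigma},
\]
so that every term carries the single scalar factor $f^{1/\sigma}$. Finally, I would record the normalization
\[
\tilde\kappa = \frac{f^{1/\sigma}}{4\sigma\kappa^2\sqrt\omega},
\]
which follows immediately from $f^{1/\sigma} = (\sigma+1)^{1/\sigma}2^{-1/\sigma}\kappa^{2/\sigma}\omega^{-1/(2\sigma)}$ and the definition of $\tilde\kappa$. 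With these substitutions, each of the four quantities collapses to $\tilde\kappa$ times a linear combination of $\alpha_0$ and $\alpha_1$; for instance, $\partial_c Q = \frac{f^{1/\sigma}}{\sigma}\big[\frac{c(\sigma-2)}{\kappa^2}\alpha_0 + \frac{1}{2\sqrt\omega}\alpha_1\big] = 2\tilde\kappa[2c(\sigma-2)\sqrt\omega\,\alpha_0 + \kappa^2\alpha_1]$, which is the claimed formula, and the other three follow by the same manipulations.

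The main obstacle is purely the algebraic bookkeeping: correctly tracking the coefficient arithmetic through the $\alpha_2$-elimination and the repeated factoring of $f^{1/\sigma}$, particularly in $\partial_c P$ and $\partial_\omega P$, where several terms of differing provenance must cancel down to the stated two-term expressions. A useful consistency check comes from Lemma~\ref{lemma:d}: since $\partial_\omega d = Q$ and $\partial_c d = P$, symmetry of the Hessian forces $\partial_c Q = \partial_\omega P$, and indeed the two stated formulas coincide. One can further guard against arithmetic slips by verifying that the overall powers of $\omega$ and $\kappa$ collected into $\tilde\kappa$ are consistent with the homogeneity dictated by the scaling symmetry of \eqref{eqn:gDNLS}.
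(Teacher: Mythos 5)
Your proposal is correct and follows the same route as the paper: the paper obtains the lemma precisely by substituting the integral relations \eqref{eqn:int_rel_1}--\eqref{eqn:int_rel_4} and the $\alpha_2$-elimination \eqref{eqn:int_rel_5} into \eqref{eqn:Q_c}--\eqref{eqn:P_omega} and simplifying. Your factoring identities for $f^{(1-\sigma)/\sigma}f_c$, $f^{(1-\sigma)/\sigma}f_\omega$ and the normalization $\tilde\kappa = f^{1/\sigma}/(4\sigma\kappa^2\sqrt{\omega})$ all check out, and your symmetry observation $\partial_c Q = \partial_\omega P$ (via $\partial_\omega d = Q$, $\partial_c d = P$) is a valid shortcut the paper leaves implicit.
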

This is used in Section \ref{sec:concavity_d} to obtain
\eqref{e:detHessd}.

\end{document}